\def\unwarrow{\nwarrow\joinrel\joinrel\uparrow}
\def\dsearrow{\downarrow\joinrel\joinrel\searrow}
\def\dotsearrow{\hspace{-4mm}\dot{\hspace{4mm}\searrow}}
\def\dotswarrow{\hspace{4mm}\dot{\hspace{-4mm}\swarrow}}
\def\Journal#1#2#3#4{{#1} {\bf #2}, #3 (#4)}
\def\CQG{\em Class. Quantum Grav.}
\def\PRD{\em Phys. Rev. D }
\def\GRG{\em Gen. Rel. Grav.}
\def\PRL{\em Phys. Rev. Lett.}
\def\JMP{\em J. Math. Phys.}
\def\espaitemps{({\cal V},g)}
\def\varietat{{\cal V}}
\def\xiv{\vec \xi }
\def\lie{{\pounds}}
\def\AH{\mbox{AH}}
\def\S{\Sigma}
\def\di{{\rm div}}
\def\sg{{\rm sign}}
\def\doo{d \Omega^2_{n-2}}
\def\scri{\mathscr{J}}
\def\B{\mathscr{B}}
\def\R{\mathscr{R}}
\def\be{\begin{equation}}
\def\ee{\end{equation}}
\def\bea{\begin{eqnarray}}
\def\eea{\end{eqnarray}}
\def\bean{\begin{eqnarray*}}
\def\eean{\end{eqnarray*}}
\begin{document}

\markboth{Jos\'e M M Senovilla}
{Trapped surfaces}

%
\catchline{}{}{}{}{}
%

\title{TRAPPED SURFACES}

\author{JOS\'E M. M. SENOVILLA}

\address{F\'{\i}sica Te\'orica, Universidad del Pa\'{\i}s Vasco, Apartado 644\\
48080 Bilbao, Spain\\
josemm.senovilla@ehu.es}



\maketitle

\begin{history}
\received{Day Month Year}
\revised{Day Month Year}
\end{history}

\begin{abstract}
I review the definition and types of (closed) trapped surfaces. Surprising global properties are shown, such as their ``clairvoyance'' and the possibility that they enter into flat portions of the spacetime. Several results on the interplay of trapped surfaces with vector fields and with spatial hypersurfaces are presented. Applications to the quasi-local definition of Black Holes are discussed, with particular emphasis set onto marginally trapped tubes, trapping horizons and the boundary of the region with closed trapped surfaces. Finally, the {\em core} of a trapped region is introduced, and its importance discussed.
\keywords{Trapped surfaces; black holes; horizons.}
\end{abstract}

\ccode{PACS numbers: 04.70.Bw}

\section{Introduction}
Black Holes (BH) are fundamental physical objects predicted classically in General Relativity (GR) which show a very deep relation between Gravitation, Thermodynamics, and Quantum Theory\cite{W}. Classically, the characteristic feature of a BH is its event horizon EH: the boundary of the region from where one can send signals to a far away asymptotic external region. This EH is usually identified as the surface of the BH and its area to the entropy. Unfortunately, the EH is essentially a global object, as it depends on the {\em whole} future evolution of the spacetime. Thus, EHs are determined by future causes, they are {\em teleological}, see\cite{AK1,AG,BS1,B,E,Haji} and references therein. 

However, it is important to recognize the presence of a BH locally, for instance in numerical GR\cite{T}, in the 3+1 or Cauchy-problem perspective of GR\cite{JVG} (see also Jaramillo's contribution in this volume), or in Astrophysics. In the former two, one needs to pinpoint when the BH region has been entered, while in the latter there are so many candidates to real BHs that a precise meaning of the sentence ``there is a BH in the region such and such'' is required. Of course, this meaning cannot rely on the existence of an EH as the real BHs undergo evolutionary processes and are usually dynamical. 

Over the recent years, there has been a number of efforts to give a general quasi-local description of a dynamical black hole.
In particular, the case has been made for quasi-local objects called Marginally Trapped Tubes (MTT) or Trapping Horizons (TH)\cite{AK1,AG,B,Hay,Hay1}, and their particular cases called {\em dynamical horizons} (DH)  \cite{AK1,AG} or Future Outer Trapping Horizons (FOTH)\cite{Hay,Hay1}. The underlying ideas were discussed in the 90s by Hayward\cite{Hay,Hay1}. MTTs are hypersurfaces foliated by closed (compact without boundary) marginally trapped surfaces. It is accepted that closed (marginally) trapped surfaces constitute the most important ingredient in the formation of BHs, so that the idea of using MTTs as the surface of BHs, and as viable replacements for the EH, looked very promising. This is one of the main reasons to study trapped surfaces, the subject of this contribution. 

Unfortunately, MTTs have an important problem: they are highly non-unique\cite{AG}. 
This manifests itself because the 2-dimensional Apparent Horizons\cite{HE,Wald} depend on the choice of a reference foliation of spacelike hypersurfaces.
Hence, another reasonable alternative, which is manifestly independent of any foliation, has been pursued and investigated more recently\cite{ABS,BS,BS1,S5}: the boundary
$\B$ of the future-trapped region $\mathscr{T}$ in spacetime (this is the region through which future-trapped surfaces pass). This is also a very natural candidate for the surface of a BH, as it defines a frontier beyond which MTTs and general trapped surfaces cannot be placed. I am going to show that $\B$ cannot be an MTT itself, and that it suffers from problems similar to that of EHs: it is unreasonably global.
This seems to be an intrinsic problem linked to a fundamental property of closed trapped surfaces: they are {\em clairvoyant}\cite{ABS,BS,BS1,S5}.

Recently, a novel idea\cite{BS1} has been put forward to address all these issues: the {\em core} of a trapped region, and its boundary. This is a minimal region which is indispensable to sustain the existence of closed trapped surfaces in the spacetime. It has some interesting features and it may help in solving, or better understanding, the difficulties associated to BHs. 

\section{Trapped surfaces: Definition and types}
In 1965, Penrose\cite{P2} introduced a new important concept: {\em closed trapped surfaces}. These are closed spacelike surfaces (usually topological spheres) such that their area tends to decrease locally along any possible {\em future} direction. (There is a dual definition to the past).
The traditional Black Hole solutions in GR, constituted by the Kerr-Newman family of metrics, have closed trapped surfaces in the region inside the Event Horizon.\cite{HE,Wald}
Actually, the existence of closed trapped surfaces is a fundamental ingredient in the singularity theorems\cite{P5,HE,HP,S}: if they form, then singularities will surely develop.

\subsection{Co-dimension two surfaces: Notation}
Let $\espaitemps$ be an $n$-dimensional
Lorentzian manifold with metric tensor $g$ of signature $(-,+,\dots ,+)$. 
A co-dimension two (dimension $n-2$) connected surface $S$ can be represented by means of its embedding $\Phi: S \longrightarrow \varietat$ into the spacetime $\varietat$ via some parametric equations $x^{\mu}=\Phi^{\mu}(\lambda^A)$ where $\{x^\mu\}$ are local coordinates in $\varietat$ ($\mu,\nu,\dots =0,1\dots ,n-1$), while $\{\lambda^A\}$ are local coordinates for $S$ ($A,B,\dots = 2,\dots ,n-1$).

The tangent vectors $\partial_{\lambda_A}$ are pushed forward to $\varietat$ to define the tangent vectors to $S$ as seen on $\varietat$ 
$$\vec{e}_A\equiv \Phi'(\partial_{\lambda^A}) \Longleftrightarrow e^{\mu}_A=\frac{\partial\Phi^{\mu}}{\partial\lambda^A}$$
and the first fundamental form of $S$ in $\espaitemps$ is defined as the pull-back of $g$:
$$
\gamma =\Phi^*g \Longrightarrow 
\gamma_{AB}(\lambda)=g|_S(\vec{e}_A,\vec{e}_B)=g_{\mu\nu}(\Phi)e^{\mu}_Ae^{\nu}_B
$$
From now on, $S$ will be assumed to be {\em spacelike} which means that $\gamma_{AB}$ is {\em positive definite}. Then, $\forall x\in S$ one can canonically decompose the tangent space $T_x\varietat$ as
$$
T_{x}\varietat =T_{x}S\oplus T_{x}S^{\perp}$$
which are called the  {\em tangent} and {\em normal} parts. 
In particular, we have\cite{O,Kr}
$$
\nabla_{\vec{e}_A}\vec{e}_B = \overline{\Gamma}^C_{AB}\vec{e}_C-\vec{K}_{AB}
$$
where $\nabla$ is the covariant derivative in $\espaitemps$, $\overline{\Gamma}^C_{AB}$ is the Levi-Civita connection associated to the first fundamental form in $S$ ($\bar\nabla_C\gamma_{AB}=0$) and 
$\vec{\bm{K}} : \, \mathfrak{X}(S) \times \mathfrak{X}(S) \longrightarrow \mathfrak{X}(S)^{\perp}$ 
 is called the {\em shape tensor} or {\em second fundamental form vector} of $S$ in $\varietat$. 
 [Here $\mathfrak{X}(S)$ ($\mathfrak{X}(S)^{\perp}$) is the set of vector fields tangent (orthogonal) to $S$.] Observe that $\vec{\bm{K}}$ is orthogonal to $S$.  $\vec{\bm{K}}$ measures the difference between the pull-back to $S$ of the covariant derivative of covariant tensor fields and the covariant derivative in $S$ of the pull-back of those tensor fields according to the formula
\be
\Phi^*(\nabla v) = \bar\nabla (\Phi^*v)+ v(\vec{\bm{K}}) \hspace{1cm} \Longleftrightarrow \hspace{1cm}
e^\mu_B e^\nu_A \nabla_\nu v_\mu =\overline\nabla_A \bar{v}_B +v_\mu K^\mu_{AB}
\label{proj}
\ee
where for all one-forms $v_\mu$ of $\varietat$ we write $\bar{v}_A\equiv v_\mu|_S\,  e^\mu_A=v_\mu(\Phi)\,  e^\mu_A$ for the components of its pull-back to $S$, $\bar v =\Phi^*(v)$.

A {\em second fundamental form} of $S$ in $\espaitemps$ relative to any normal vector field $\vec n\in \mathfrak{X}(S)^{\perp}$ is defined as
$$
K_{AB}(\vec n)\equiv n_{\mu} K^{\mu}_{AB} .
$$
For each $\vec n$, these are 2-covariant symmetric tensor fields on $S$.

\subsubsection{Mean curvature vector. Null expansions}
For a spacelike surface $S$ there are two {\em independent} normal vector fields, and one can choose them to be future-pointing and null everywhere, $ \vec k^\pm \in \mathfrak{X}(S)^{\perp}$ with
$$
k^+_{\mu}e^\mu_{A}=0, \, \, k^-_{\mu}e^\mu_{A}=0, \, \, k^+_{\mu}k^{+\mu}=0, \, \,
k^-_{\mu}k^{-\mu}=0. 
$$
By adding the normalization condition $k_{+\mu}k_{-}^{\mu}=-1$, there still remains the freedom
\be
\vec{k}^+ \longrightarrow \vec{k}'^+=\sigma^2 \vec{k}, \hspace{1cm}
\vec{k}^- \longrightarrow \vec{k}'^-=\sigma^{-2} \vec{k}^- \, .\label{norm}
\ee
One obviously has
$$
\vec{K}_{AB}=-K_{AB}(\vec k^-)\,\, \vec{k}^+ -K_{AB}(\vec{k}^+)\,\, \vec k^- \, .
$$

The {\em mean curvature vector} of $S$ in $\espaitemps$ is the trace of the shape tensor:
$$\vec H \equiv \gamma^{AB} \vec{K}_{AB}, \hspace{1cm} \vec H\in \mathfrak{X}(S)^{\perp}$$
and its decomposition in the null normal basis
$$\vec H \equiv \gamma^{AB} \vec{K}_{AB} = 
-\theta^- \vec{k}^+ -\theta^+\,\, \vec k^-$$
defines the so-called {\em future null expansions}: $\theta^{\pm}\equiv \gamma^{AB}K_{AB}(\vec k^\pm)$. Notice that, even though the null expansions are not invariant under the boost transformations (\ref{norm}), $\vec H$ is actually invariant.

\subsection{The trapped surface fauna: a useful symbolic notation}
The concept of trapped surfaces was originally formulated in terms of the signs or the vanishing of the null expansions\cite{P2}, and has remained as such for many years. This is obviously related to the causal orientation of the mean curvature vector. Thus, it has become clear over the recent years that the causal orientation of the mean curvature vector provides a better and powerful characterization of the trapped surfaces\cite{Kr,O,MS} \cite{S1}\cdash\cite{S4}.\footnote{The characterization by means of the mean curvature vector has permitted the extension of the classical singularity theorems to cases with trapped submanifolds of arbitrary co-dimension embedded in spacetimes of arbitrary dimension, see Ref.\refcite{GS}.}
Therefore, a symbolic notation for the causal orientation of $\vec H$ becomes very useful. Using an arrow to denote $\vec H$ and denoting the future as the upward direction and null vectors at 45$^o$ with respect to the vertical, the symbolic notation was introduced in\cite{S4}:
\begin{center}
\begin{tabular}{c|c}
$\vec H$ & Causal orientation \\
\hline
$\downarrow$  & past-pointing timelike\\
$\swarrow$ or $\searrow$ & past-pointing null ($\propto \vec k^+$ or $\vec k^-$) \\
$\leftarrow$ or $\rightarrow$  & spacelike\\
$\cdot$ & vanishes \\
$\nearrow$ or $\nwarrow$ & future-pointing null ($\propto \vec k^+$ or $\vec k^-$)\\
$\uparrow$ &  future-pointing timelike\\
\end{tabular}
\end{center}

A surface is said to be {\em weakly future-trapped} (f-trapped from now on) if the mean curvature vector is future-pointing all over $S$, similarly for weakly past-trapped. The traditional f-trapped surfaces have a timelike (non-vanishing) future-pointing $\vec H$ all over $S$, while the marginally f-trapped surfaces have $\vec H$ proportional to one of the null normal directions. Using the previously introduced notation ---if $\vec H$ changes causal orientation over $S$ the corresponding symbols are superposed\cite{S4}--- the main cases are summarized in the next table, where the signs of the null expansions are also shown.

\begin{table}[h]
\tbl{The main cases of future-trapped surfaces.}
{\begin{tabular}{c|c|l}
Symbol & Expansions & Type of surface \\
\hline
$\cdot$ & $\theta^+=\theta^-=0$ &stationary or minimal \\
\hline
$\uparrow$ & $\theta^+<0, \theta^-<0$ & f-trapped\\
\hline
\begin{sideways}{$\dotsearrow$}\end{sideways} & $\theta^+=0, \theta^-\leq 0$ & marginally f-trapped \\
\hline
\begin{sideways}\begin{sideways}{$\dotsearrow$}\end{sideways}\end{sideways} & $\theta^+\leq 0, \theta^-=0$ & marginally f-trapped \\
\hline

\begin{sideways}\begin{sideways}$\swarrow\dotsearrow$
\end{sideways}\end{sideways}\hspace{-7.5mm} \hspace{2mm}\raisebox{1mm}{$\uparrow$} & $\theta^+\leq 0, \theta^-\leq 0$ & weakly f-trapped \\
\hline
\end{tabular} \label{ta1}}
\end{table}
Sometimes, only the sign of one of the expansions is relevant. This may happen if there is a consistent or intrinsic way of selecting a particular null normal on $S$. Then, one can use the $\pm$-symbols to denote the preferred direction and define $\pm$-trapped surfaces. However, in the literature the preferred direction is usually declared to be ``outer'', and then the nomenclature speaks about ``outer trapped", no matter whether or not this outer direction coincides with any particular outer or external part to the surface. Thus, (marginally) $+$-trapped surfaces are usually referred to as (marginally) outer trapped surfaces ((M)OTS) and similarly for the $-$ case. The main possibilities are summarized in Table \ref{ta2}.
\begin{table}[h!]
\tbl{The main cases of $+$-trapped, also ``outer trapped'', surfaces.}
{\begin{tabular}{c|c|l}
Symbol & Expansions & Type of surface \\
\hline
\raisebox{-1.5mm}{$\leftarrow$}\hspace{-3mm}$\unwarrow$ & $\theta^+<0$ & half converging, or $+$-trapped (or OTS)\\
\hline
& & \\
\raisebox{-2mm}{$\dotswarrow$ \raisebox{4mm}{$\nearrow$}} & $\theta^+=0$ & null dual or M+TS (or MOTS) \\ 
& & \\
\hline
$\raisebox{2mm}{$\leftarrow$} \hspace{-4.1mm} \stackrel{\begin{sideways}\begin{sideways}$\swarrow\dotsearrow$
\end{sideways}\end{sideways}\hspace{-4mm} \raisebox{1mm}{$\uparrow$}}
{\swarrow} $ &
$\theta^+\leq 0$ & weakly $+$-trapped (W+TS or WOTS)\\
\hline
\end{tabular} \label{ta2}}
\end{table}
Important studies concerning these surfaces, and in particular MOTS, have been carried out recently in \cite{AMS,AMS1,AM,CM}, with relevant results for black holes and the existence of MTTs.

For completeness, I also present the characterization and symbols of some other distinguished types  of surfaces, such as (weakly) untrapped surfaces or the null untrapped surfaces ---recently also named ``generalized apparent horizons"  in Ref.\refcite{BK}--- in Table \ref{ta3}.
\begin{table}[h!]
\tbl{Miscellaneous surfaces.}
{\begin{tabular}{c|c|l}
Symbol & Expansions & Type of surface \\
\hline
$\rightarrow$ or $\leftarrow$ & $\theta^+\theta^- <0 $ & untrapped \\
\hline
\raisebox{-3mm}{\begin{sideways}$\swarrow\dotsearrow$
\end{sideways}\hspace{-3.8mm} \raisebox{2.6mm}{$\rightarrow$}} or  
$\leftarrow$\hspace{-2.8mm}\raisebox{-3mm}{\begin{sideways}\begin{sideways}\begin{sideways}
$\swarrow\dotsearrow$\end{sideways}\end{sideways}\end{sideways}}
& $\theta^+\geq 0, \theta^-\leq 0$ or $\theta^+\leq 0, \theta^-\geq 0$ & weakly untrapped \\
\hline
$\dsearrow$\hspace{-3mm}\raisebox{1.2mm}{$\rightarrow$} & $\theta^+> 0$ & half diverging or $+$-untrapped (or outer untrapped)\\
\hline
\begin{sideways}\begin{sideways}$\raisebox{2mm}{$\leftarrow$} \hspace{-4.1mm} \stackrel{\begin{sideways}\begin{sideways}$\swarrow\dotsearrow$
\end{sideways}\end{sideways}\hspace{-4mm} \raisebox{1mm}{$\uparrow$}}
{\swarrow} $\end{sideways}\end{sideways} & $\theta^+\geq 0$ & weakly $+$-untrapped (or weakly outer untrapped) \\
\hline
$\stackrel{\begin{sideways}\begin{sideways}$\swarrow\dotsearrow$
\end{sideways}\end{sideways}\hspace{-4mm} \raisebox{2mm}{$\uparrow$}}
{\hspace{3mm}\swarrow\!\downarrow\! \searrow} $ & $\theta^+\theta^- \geq 0$ &
$\varhexstar$-surfaces\cite{S4} \\
\hline
$\stackrel{\begin{sideways}\begin{sideways}$\swarrow\dotsearrow$
\end{sideways}\end{sideways}}
{\swarrow \searrow}$ &
$\theta^+\theta^- = 0$ & null $\varhexstar$-surfaces \\
\hline
\raisebox{-3mm}{\begin{sideways}$\swarrow\dotsearrow$\end{sideways}} & $\theta^+\theta^-=0$ and $\theta^+\geq 0, \theta^-\leq 0$ & null untrapped or generalized apparent horizon\cite{BK} \\
\hline
\end{tabular} \label{ta3}}
\end{table}
The last type of surface shown in Table \ref{ta3} was proposed as a viable replacement\cite{BK} for marginally trapped surfaces in a new version of the Penrose inequality\cite{P6,M}. However, this version cannot hold as a recent counterexample has been found in Ref.\refcite{CM1}. Nevertheless, it is known that there cannot be null untrapped closed surfaces embedded in spacelike hypersurfaces of Minkowski spacetime\cite{Kh}. An important question related with these issues and raised in Ref.\refcite{MS} is whether or not there can be any closed $\varhexstar$-surfaces in Minkowski spacetime ---or more generally, in stationary spacetimes.

\subsection{A useful formula for the scalar $H_\mu H^\mu$.}
Now, I am going to present a formula for the norm of the mean curvature vector associated to a given family of co-dimension-2 surfaces.\cite{S1} This allows one to ascertain which surfaces within the family can be trapped, marginally trapped, etcetera. Assume you are given a family of 
$(n-2)$-dimensional spacelike surfaces $S_{X^a}$ described by 
$$\{x^a=X^a\}, \hspace{1cm} a,b,\dots =0,1$$
where $X^a$ are constants and $\{x^{\alpha}\}$ local coordinates in $\espaitemps$. 
Locally, the line-element can be written as
\be
ds^2=g_{ab} dx^adx^b+2g_{aA}dx^adx^A+g_{AB}dx^Adx^B
\label{sdg}
\ee
where $g_{\mu\nu}(x^{\alpha})$ and $\det g_{AB} >0$.
There remains the freedom
\be
x^a \longrightarrow x'^a=f^a(x^b), \,\,
x^A \longrightarrow x'^A=f^A(x^B,x^c) \label{coord}
\ee
keeping the form (\ref{sdg}) and the chosen family of surfaces. 

The imbedding $\Phi$ for the surfaces $S_{X^a}$ is
$$
x^a=\Phi^a(\lambda)= X^a=\mbox{const.}, \hspace{.5cm} x^A=\Phi^A(\lambda)=\lambda^A$$
from where the first fundamental form for each $S_{X^a}$ reads
$$
\gamma _{AB}=g_{AB}(X^a,\lambda^C)
$$ 
while the future null normal one-forms become
$$
\bm{k}^{\pm}=k^{\pm}_bdx^b\vert_{S_{X^a}}, \hspace{2mm}
g^{ab}k^{\pm}_ak^{\pm}_b=0, \hspace{2mm}
g^{ab}k^{+}_ak^{-}_b=-1
$$
Notice that $g^{ab}$ are components of $g^{\mu\nu}$ and therefore $(g^{ab})$ is not necessarily the inverse of $(g_{ab})$ !

Set
$$
G\equiv +\sqrt{\det g_{AB}} \equiv e^U, \hspace{3mm} \bm{g}_{a}\equiv 
g_{aA}dx^A 
$$
where $\bm{g}_a$ are considered to be two one-forms, one for each $a=0,1$. When pull-backed to $S_{X^a}$ they read $\bar{\bm{g}}_{a}=g_{aA}(X^{a},\lambda^C)d\lambda^{A}$.
A direct computation\cite{S1} provides the null expansions ($f_{,\mu}= \partial_{\mu} f$)
$$
\theta^{\pm}=\left.k^{\pm 
a}\left(\frac{G_{,a}}{G}-\frac{1}{G}(G\gamma^{AB}g_{aA})_{,B}\right)
\right\vert_{S_{X^a}} \, .
$$
and thereby the mean curvature one-form
\be
\fbox{$\displaystyle{
H_{\mu}=\delta^a_{\mu}\left(U_{,a}-\di \vec{g}_{a} \right)}$}
\ee
where $\di$ is the divergence operating on vector fields at each surface $S_{X^a}$.
These surfaces are thus trapped if and only if the scalar 
$$
\fbox{$\displaystyle{\kappa=-\left.g^{bc}H_{b}H_{c}
\right\vert_{S_{X^a}}}$}
$$
is positive, and a necessary condition for them to be marginally trapped is that $\kappa$ vanishes on the surface. Observe that one only needs to compute
the norm of $H_{a}$
{\it as if it were a one-form in the ``2-dimensional'' contravariant metric $g^{ab}$}. 

\section{Horizons: MTTs, FOTHs and Dynamical Horizons}
\subsection{$\{S_{X^a}\}$--horizons}
\label{subsec:horizons}
In the construction of the previous section, in general the mean curvature vector $\vec H$ of the $S_{X^{a}}$-surfaces will change its causal character at different regions of the spacetime. The hypersurface(s) of separation
$${\cal H}\equiv \{\kappa =0\}, \hspace{1cm} ``S_{X^a}-horizon"$$
is a fundamental place in $\espaitemps$ associated to the given family of surfaces $S_{X^a}$ called the $S_{X^a}-$horizon. It
contains the regions with marginally trapped, marginally outer trapped, null untrapped, and null $\varhexstar$-surfaces $S_{X^a}$
(as well as those parts of each $S_{X^a}$ where one of the $\theta^\pm$ vanishes). 

As an example and to show the applicability of the previous formulas, consider the 4-dimensional Kerr spacetime ($G=c=1$, $n=4$) in advanced (or Kerr) coordinates\cite{HE}
\bea
ds^2 = -\left(1-\frac{2Mr}{r^2+a^2\cos^2\theta}\right)dv^2+2dvdr
-2a\sin^2\theta d\varphi dr -\frac{4Mar\sin^2\theta}{r^2+a^2\cos^2\theta}d\varphi dv\nonumber\\
+\left(r^2+a^2\cos^2\theta\right)d\theta^2+\left(r^2+a^2+\frac{2Mar\sin^2\theta}{r^2+a^2\cos^2\theta}\right)\sin^2\theta d\varphi^2 \label{kerr}
\eea
where $M$ and $a$ are constants.
A case of physical interest is given by the topological spheres defined by constant values of $v$ and $r$, so that with the notation of the previous section one has $\{x^a\}=\{v,r\}$,  $\{x^A\}=\{\theta,\varphi\}$. The two one-forms $\bm{g}_a$ are
$$\bm{g}_{r}=-a\sin^2\theta d\varphi, \hspace{1cm} \bm{g}_{v}=-\frac{2Mar\sin^2\theta}{r^2+a^2\cos^2\theta} d\varphi$$
so that  $\di \vec{g}_{a}=0$.
On the other hand
$$
e^{2U}=\sin^2\theta [(r^2+a^2)(r^2+a^2\cos^2\theta) +2Mra^2\sin^2\theta],
$$
so that the mean curvature one-form becomes
$$
H_{a}dx^a=U_{,a}dx^a= \frac{r(r^2+a^2+r^2+a^2\cos^2\theta)+Ma^2\sin^2\theta}{(r^2+a^2)(r^2+a^2\cos^2\theta) +2Mra^2\sin^2\theta}dr
$$
from where one easily obtains
$$
\kappa = -\frac{(r^2-2Mr+a^2)}{(r^2+a^2\cos^2\theta)} \frac{\left[r(r^2+a^2+r^2+a^2\cos^2\theta)+Ma^2\sin^2\theta\right]^2}{\left[(r^2+a^2)(r^2+a^2\cos^2\theta) +2Mra^2\sin^2\theta\right]^2}
$$
ergo (for $r>0$) $\sg\, \kappa=-\sg \left(r^2-2Mr+a^2\right)$. Thus ${\cal H}$ are the classical Cauchy and event horizons of the Kerr spacetime\cite{HE}.

As another important example, to be used repeatedly later on in this contribution, consider general spherically symmetric spacetimes, whose line-element can always be cast in the form
$$
ds^2=g_{ab}(x^c)dx^adx^b+r^2(x^c)\doo
$$
where $\doo$ is the round metric on the $(n-2)$-sphere, $r$ is a function depending only on the $x^b$-coordinates called the area coordinate, and $\det g_{ab}<0$ so that $g_{ab}$ is a 2-dimensional Lorentzian metric. By taking the family of round spheres as the selected family of co-dimension two surfaces, so that they are given by constant values of $x^b$, the two one-forms $\bm{g}_a$ vanish so that
$$H_{a}= U_{,a}\propto \frac{r_{,a}}{r}$$

One can thus define the standard ``mass function''
$$
2 m(x^a)\equiv r^{n-3}\left(1-g^{bc}r_{,b}r_{,c}\right) 
$$
so that
$$
\kappa = -g^{ab}H_{a}H_{b} \propto -\left(1-\frac{2m}{r^{n-3}}\right)
$$
Hence, the round $(n-2)$-spheres are (marginally) trapped if $2m/r^{n-3}$ is (equal to) less than 1. The corresponding horizon
${\cal H}: r^{n-3} = 2m$ thus becomes the classical spherically symmetric apparent $(n-1)$-horizon AH \footnote{Observe that AH is considered here to be a hypersurface in spacetime, while the traditional ``apparent horizons'' are co-dimension two marginally trapped surfaces\cite{HE,Wald}. However, AH is foliated by such marginally trapped round spheres, so that it is a collection of apparent horizons}.

\subsection{MTTs, FOTHs, and dynamical horizons}
The apparent $(n-1)$-horizon AH in spherical symmetry, as well as some $S_{X^a}$-horizons, are examples of marginally trapped tubes (MTTs).
\begin{definition}
A marginally trapped tube is a hypersurface foliated by closed marginally f-trapped surfaces.
\end{definition}
This corresponds to the concept of ``future trapping horizons'' as defined by Hayward\cite{Hay}, who also introduced the concept of {\em future outer trapping horizon} by requiring that the vanishing null expansion $\theta^+=0$ becomes negative when perturbed along the other future null normal direction $\vec k^-$. In that case, the horizon is necessarily a spacelike hypersurface ---unless the whole second fundamental form $K_{AB}(\vec k^+)=0$ vanishes and also $G_{\mu\nu}k^{+\mu}k^{+\nu}=0$, where $G_{\mu\nu}$ is the Einstein tensor of the spacetime, in which case the horizon is null and actually a non-expanding or isolated horizon\cite{AK1}.

However, in many situations the ``outer'' condition in Hayward's definition is not required to obtain results about MTTs, and at the same time there are speculations that in dynamical situations describing the collapse to a realistic black hole the MTTs will eventually become spacelike. This led to the definition of {\em dynamical horizons} (DH), which are simply {\em spacelike} MTTs.

In spherical symmetry, the apparent $(n-1)$-horizons AH are the unique {\em spherically symmetric} MTTs. This does not only mean that they are the unique MTTs foliated by marginally trapped {\em round spheres}, but also that they are the only spherically symmetric hypersurfaces foliated by any kind of marginally trapped surfaces\cite{BS1}. As seen before, these preferred MTTs can be invariantly defined by
$$
\nabla_\mu r \nabla^\mu r=0 \Longleftrightarrow r^{n-3} = 2m \, .
$$
Despite this fact, it is known that MTTs, even in spherical symmetry, are not unique\cite{AG}, an explicit proof of this fact in generic spherically symmetric spacetimes can be found in Ref.\refcite{BS}, see Corollary \ref{coro} below. 

\section{The future-trapped region $\mathscr{T}$ and its boundary  $\B$}
The non-uniqueness of MTTs poses a difficult problem for the physics of black holes, and casts some doubts on whether MTTs provide a good quasi-local definition of the surface of a black hole. It is also a manifestation that the $(n-2)$-dimensional apparent horizons will depend on a chosen foliation in the spacetime. Thus, a reasonable alternative is to consider the boundary of the region containing f-trapped surfaces. Thus, following Refs.\cite{Hay,BS1} we define the following.
\begin{definition}
The future-trapped region $\mathscr{T}$ is defined as the set of points $x\in \varietat$ such that $x$ lies on a closed future-trapped surface.
We denote by $\B$ the boundary of the future trapped region $\mathscr{T}$:
$$
\B \equiv \partial \mathscr{T} .
$$
\end{definition}
$\mathscr{T}$ and $\B$ are invariant by the isometry group of the spacetime. More precisely\cite{BS1}
\begin{proposition}
If $G$ is the group of isometries of the spacetime $\espaitemps$, then $\mathscr{T}$ is invariant under the action of $G$, and the transitivity surfaces of $G$, relative to points of $\B$, remain in $\B$.
\end{proposition}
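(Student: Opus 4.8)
The plan is to establish the statement in two steps: (i) every isometry of $\espaitemps$ permutes the family of closed future-trapped surfaces, so that $\mathscr{T}$ is $G$-invariant; and (ii) the $G$-invariance of $\mathscr{T}$ passes to its topological boundary $\B$, and since $\B$ is closed this forces the entire transitivity surface through any point of $\B$ to lie inside $\B$.

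For step (i) I would take an arbitrary $\phi\in G$ and an arbitrary closed future-trapped surface $S$ with $x\in S$, and check that $\phi(S)$ is again a closed future-trapped surface containing $\phi(x)$. Since $\phi$ is a diffeomorphism, $\phi(S)$ is again compact and without boundary. Since $\phi^{*}g=g$, the embedding of $\phi(S)$ has first fundamental form equal to the push-forward of $\gamma$, hence positive definite, so $\phi(S)$ is spacelike; moreover $d\phi$ sends $T_{x}S^{\perp}$ isomorphically onto $T_{\phi(x)}\phi(S)^{\perp}$ and intertwines the two Levi-Civita connections, so it carries the shape tensor $\vec K$ of $S$ to that of $\phi(S)$. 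Taking the $\gamma$-trace (which is itself preserved by $\phi$) shows that the mean curvature vector of $\phi(S)$ at $\phi(x)$ equals $d\phi$ applied to the mean curvature vector of $S$ at $x$; equivalently, the null normals $\vec k^{\pm}$ and the expansions $\theta^{\pm}$ of $S$ are carried by $d\phi$ to those of $\phi(S)$. Because $\phi$ preserves the time orientation, a future-pointing causal vector maps to a future-pointing causal vector, so the defining causal condition on the mean curvature vector (equivalently, the sign pattern of $\theta^{\pm}$) is inherited by $\phi(S)$. Thus $\phi(x)\in\mathscr{T}$, i.e. $\phi(\mathscr{T})\subseteq\mathscr{T}$; repeating the argument with $\phi^{-1}$ gives $\phi(\mathscr{T})=\mathscr{T}$, and letting $\phi$ range over $G$ proves the $G$-invariance of $\mathscr{T}$.

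For step (ii), write $\B=\partial\mathscr{T}=\overline{\mathscr{T}}\cap\overline{\varietat\setminus\mathscr{T}}$. Every $\phi\in G$ is a homeomorphism of $\varietat$, hence commutes with topological closure and with complementation; together with $\phi(\mathscr{T})=\mathscr{T}$ this gives $\phi(\B)=\B$, so $\B$ is itself $G$-invariant, and it is closed because it is a boundary. Consequently, for any $p\in\B$ the orbit $G\!\cdot\! p$ --- that is, the transitivity surface of $G$ through $p$ --- is contained in $\B$; and even if one prefers to define the transitivity surface as the maximal connected integral submanifold of the distribution spanned by the Killing fields of $G$, its closure still lies in $\B$ because $\B$ is closed. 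This is exactly the assertion of the proposition.

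There is no deep obstacle here; the one point I would regard as the main (mild) subtlety is the role of the time orientation. A time-orientation--reversing isometry would map $\mathscr{T}$ onto the \emph{past}-trapped region rather than onto itself, so $G$ must be understood as the group of isometries that preserve the time orientation (the identity component always does). With that reading the argument is complete, and it is insensitive to the precise trapping class used to define $\mathscr{T}$ (strictly, weakly, or marginally f-trapped), since spacelikeness, compactness, and the prescribed behaviour of the null expansions $\theta^{\pm}$ or of the mean curvature vector are all manifestly isometry invariant.
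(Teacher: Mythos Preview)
The paper does not actually prove this proposition; it only states the result and cites Ref.~\cite{BS1} for the argument. Your proof is correct and is the natural one: an isometry carries a closed spacelike surface to a closed spacelike surface, intertwines the shape tensors and hence the mean curvature vectors, and (assuming it preserves time orientation) preserves the future-pointing causal character of $\vec H$, so closed f-trapped surfaces are permuted and $\mathscr{T}$ is $G$-invariant; invariance of the boundary $\B=\partial\mathscr{T}$ then follows because each $\phi\in G$ is a homeomorphism, and the orbit through any $p\in\B$ therefore lies in $\B$. Your caveat about time-orientation--reversing isometries is well taken and is the only genuine subtlety.
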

Therefore, in arbitrary spherically symmetric spacetimes, $\mathscr{T}$ and $\B$ have spherical symmetry. Actually the result is stronger.
\begin{proposition}\label{res:hypersurface}
In arbitrary spherically symmetric spacetimes, $\B$ (if not empty) is a spherically symmetric hypersurface without boundary.
\end{proposition}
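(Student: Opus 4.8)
\medskip\noindent\emph{Proof sketch.}
The plan is to push everything down to the two-dimensional quotient of $\varietat$ by the rotation group and to show there that the image of $\mathscr{T}$ is a set whose topological boundary is a regular curve without endpoints. Write the line-element as $ds^2=g_{ab}(x^c)dx^adx^b+r^2(x^c)\doo$ and let $\pi:\varietat\to\widetilde{\cal V}$ be the projection onto the $2$-dimensional Lorentzian quotient $(\widetilde{\cal V},g_{ab})$, whose points are the orbits (round $(n-2)$-spheres). By the preceding Proposition the isometry group $G\supseteq SO(n-1)$ leaves $\mathscr{T}$ invariant and sends the orbit through any point of $\B$ into $\B$; therefore $\mathscr{T}$ and $\B$ are unions of round spheres, $\mathscr{T}=\pi^{-1}(\widetilde{\mathscr{T}})$ and $\B=\pi^{-1}\big(\partial_{\widetilde{\cal V}}\widetilde{\mathscr{T}}\big)$ with $\widetilde{\mathscr{T}}:=\pi(\mathscr{T})$. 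This already gives the spherical symmetry of $\B$, so the whole content is to prove that $\partial_{\widetilde{\cal V}}\widetilde{\mathscr{T}}$, if non-empty, is a $1$-dimensional submanifold of $\widetilde{\cal V}$ without boundary; its $\pi$-preimage is then the asserted hypersurface.

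First I would establish openness. By the formula for $\kappa$ of Sec.~2.3 together with $H_a\propto r_{,a}/r$ for the round spheres, the round sphere over $p$ is strictly f-trapped precisely when $\nabla r$ is timelike and future-pointing there, i.e.\ on the open set $R:=\{p:\ 2m(p)/r^{n-3}(p)>1,\ \nabla r\ \text{future-pointing}\}$; hence $R\subseteq\widetilde{\mathscr{T}}$. What is really needed is that $\mathscr{T}$ itself is open, which follows from the deformation properties of closed f-trapped surfaces established in the spherically symmetric analysis (see Refs.~\cite{BS,BS1}): a strictly f-trapped closed surface can be perturbed, staying closed and f-trapped, so as to meet any sufficiently near point, and a closed weakly f-trapped surface which is not an outermost barrier can be deformed to a strictly f-trapped one. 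Thus $\widetilde{\mathscr{T}}$ is open in $\widetilde{\cal V}$ and $\partial_{\widetilde{\cal V}}\widetilde{\mathscr{T}}$ is a closed set with empty interior.

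The main obstacle is to upgrade ``closed with empty interior'' to ``regular curve''. The approach I would take is local: around a point $q\in\partial_{\widetilde{\cal V}}\widetilde{\mathscr{T}}$ introduce null coordinates $(u,v)$ for $g_{ab}$ and show that $\widetilde{\mathscr{T}}$ coincides near $q$ with exactly one side of the graph of a continuous (indeed Lipschitz) function $v=\phi(u)$. The geometric input is a barrier statement --- on one side of $q$ there are points of $\mathscr{T}$, supplying closed f-trapped surfaces acting as inner barriers, while on the other side no closed f-trapped surface at all can pass; the locus of this transition is governed by the sign of $\kappa$ for the round spheres and by the stability/deformation results for marginally (outer) trapped surfaces. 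The hard point is precisely to exclude that $\widetilde{\cal V}\setminus\overline{\widetilde{\mathscr{T}}}$ reaches $q$ from ``many directions'', or that $\partial_{\widetilde{\cal V}}\widetilde{\mathscr{T}}$ develops corners or thick pieces --- the characteristic symptom of the clairvoyant, global nature of trapped surfaces --- and it is here that the full analysis of trapped surfaces in spherical symmetry and the explicit formula for $\kappa$ come into play.

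It remains to rule out a boundary. A closed $1$-submanifold lying in the interior of a surface is automatically a disjoint union of circles and lines, so the only possible failure is a free endpoint, i.e.\ $\widetilde{\mathscr{T}}$ looking locally like a disc minus a slit. Near a regular centre $r=0$ the spacetime is approximately flat and carries no closed f-trapped surfaces, which keeps $\B$ away from the centre and prevents it from terminating there; and a point at the tip of such a slit would possess a whole punctured neighbourhood inside $\mathscr{T}$, so by the same deformation argument used for openness it would itself lie on a closed f-trapped surface and hence belong to $\mathscr{T}$ rather than to $\B$ --- a contradiction. Therefore $\partial_{\widetilde{\cal V}}\widetilde{\mathscr{T}}$ is a $1$-submanifold without boundary, and $\B=\pi^{-1}\big(\partial_{\widetilde{\cal V}}\widetilde{\mathscr{T}}\big)$ is the claimed spherically symmetric hypersurface without boundary.
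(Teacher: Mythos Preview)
The paper itself does not give a proof of this proposition; it is stated and attributed to Ref.~\cite{BS1}, so there is no in-text argument to compare against directly. Your reduction to the two-dimensional quotient $(\widetilde{\cal V},g_{ab})$ via the projection $\pi$, together with the openness of $\mathscr{T}$ (strict f-trapping is an open condition and a strictly f-trapped closed surface can be nudged through any nearby point), is correct and is the natural first move; it is also the framework used in Ref.~\cite{BS1}.

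The genuine gap is exactly where you flag it: upgrading ``closed with empty interior'' to ``embedded $1$-submanifold''. The boundary of a generic open subset of a $2$-manifold need \emph{not} be a curve --- it can be a Cantor-type set, have positive measure, accumulate from many directions, etc. Your paragraph at this point does not supply an argument; it names the desired conclusion (a Lipschitz graph $v=\phi(u)$ in null coordinates) and then defers to ``barrier statements'', ``stability/deformation results'' and ``the full analysis of trapped surfaces in spherical symmetry'' without saying which result, applied how, forces $\widetilde{\mathscr{T}}$ to sit on one side of a graph. In particular, invoking the sign of $\kappa$ for round spheres controls only the region $\{r<2m\}$, which you already know is strictly smaller than $\mathscr{T}$ by Theorem~\ref{th:zigzag}, so it cannot by itself pin down $\partial\widetilde{\mathscr{T}}$. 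Your ``no free endpoint'' argument at the end is reasonable, but it presupposes that $\partial\widetilde{\mathscr{T}}$ is already a $1$-manifold-with-boundary, which is precisely what has not been established.

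What is actually used in Ref.~\cite{BS1} (and what your null-coordinate picture is groping toward) is a causal monotonicity property of $\mathscr{T}$: one shows that $\mathscr{T}$ is a \emph{future set}, $I^+(\mathscr{T})\subset\mathscr{T}$. Given a closed f-trapped $S$ through $x$, one flows $S$ along its future null normals; the Raychaudhuri equation (with the null convergence condition) keeps both null expansions negative, so the flowed surfaces remain closed and f-trapped and sweep every point in a future neighbourhood of $S$. Once $\mathscr{T}$ is a future set, $\B=\partial\mathscr{T}$ is an \emph{achronal boundary}, and the standard causal-theory result (see e.g.\ Ref.~\cite{HE,P5,Wald}) gives immediately that an achronal boundary is an edgeless Lipschitz topological hypersurface --- this is the ``hypersurface without boundary'' in one stroke, with no local graph analysis needed. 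Note that this route does use an energy condition; if you want the statement at the generality literally written (``arbitrary spherically symmetric spacetimes''), you should check in Ref.~\cite{BS1} whether additional hypotheses are implicitly in force or whether ``hypersurface'' is being used in the weaker sense of a spherically symmetric closed achronal set.
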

\begin{figure}[h!]
\begin{center}
\includegraphics[height=7.5cm]{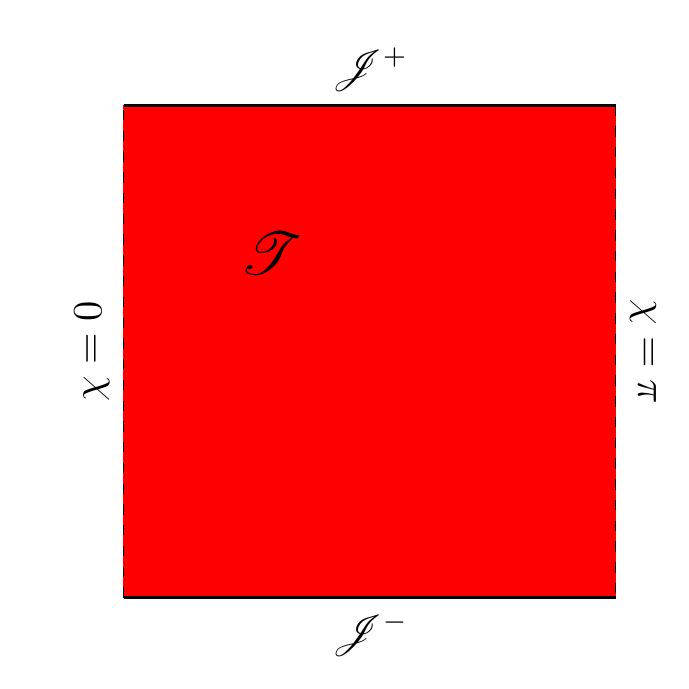}\end{center}
\caption{A Penrose conformal diagram for the de Sitter spacetime. The whole spacetime is coloured in red because there are f-trapped spheres passing through every point, so that $\mathscr{T}$ is the entire de Sitter spacetime. Thus, there is no boundary 
$\B$ for the f-trapped region. \label{f1} }
\end{figure}

As simple examples of these concepts, consider de Sitter spacetime.\cite{HE,Exact} It is well known that there are f-trapped round spheres in such spacetime, but given that it is a homogeneous spacetime then there must be such a f-trapped sphere through every point. Therefore, $\mathscr{T}$ is the whole spacetime and $\B =\emptyset$. This is represented in figure \ref{f1}.

An example with non-empty $\B$ is provided by the closed Robertson-Walker dust model with $\Lambda =0$\cite{HE,Exact}. In this case, $\mathscr{T}$ covers only `half' of the spacetime and the boundary $\B$ can be seen to correspond to the recollapsing time, which is a maximal hypersurface, as shown in fig.\ref{f2}, see Ref.\refcite{BS1}.
\begin{figure}[!ht]
\includegraphics[height=7.5cm]{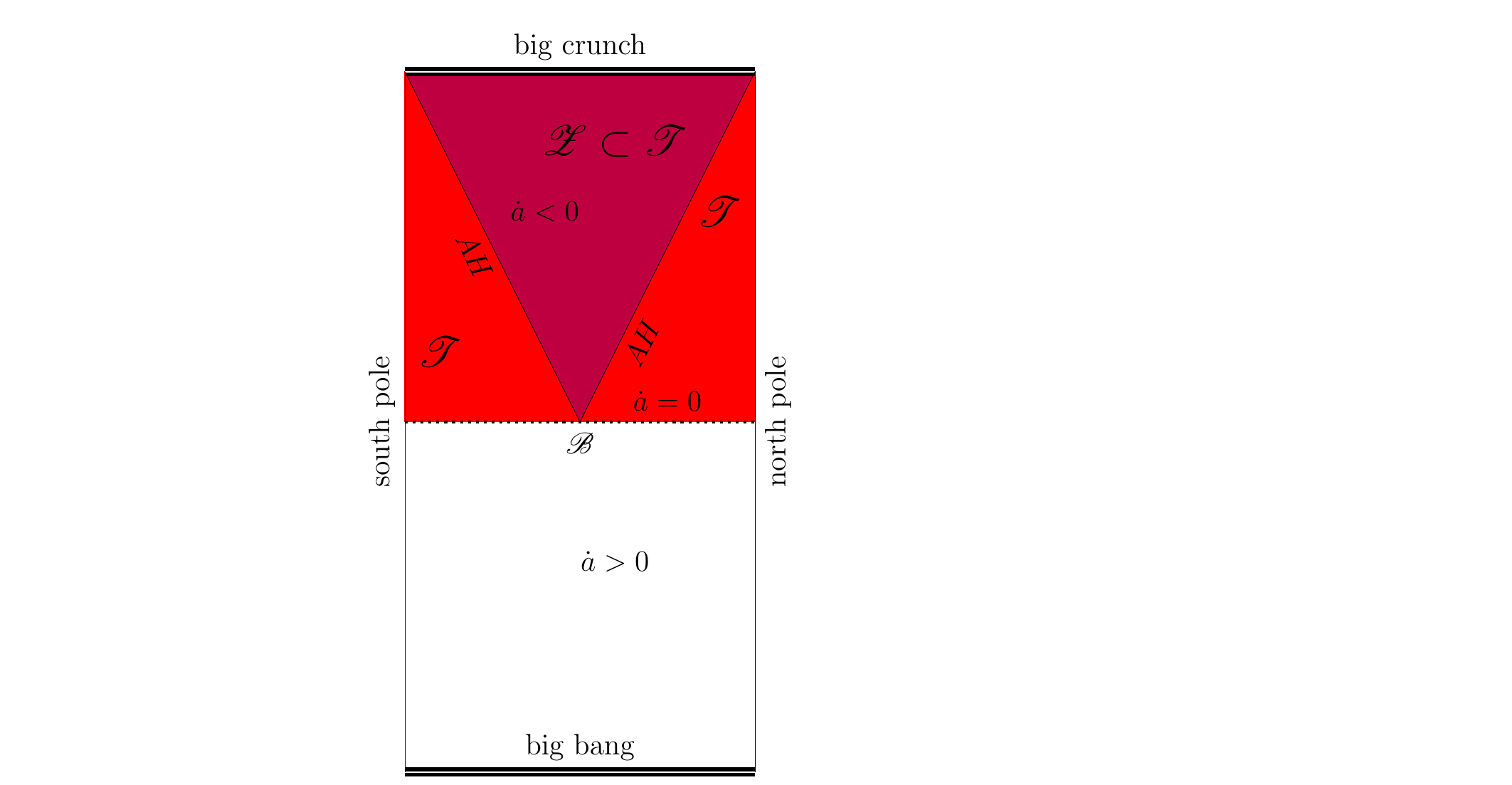}
\caption{Conformal diagram for the Robertson-Walker dust model. Closed f-trapped round spheres exist in the contracting phase, the coloured region, defined by $\dot a <0$, where $a$ is the scale factor. Thus, the boundary $\B$ is the hypersurface corresponding to the instant of recollapse, shown as a dotted horizontal line. An MTT corresponding to AH as defined in the text is also shown. This will be relevant for the definition of a core in Section \ref{sec:cores}.}
\label{f2}
\end{figure}

As a final example with $\B=\emptyset$, consider the flat spacetime, with line-element
$$ds^2 =-dt^2+dx^2+dy^2+dz^2$$
One can easily check\cite{S} that the family of surfaces $S_{x_0,t_0}\, : \{x=x_0,\hspace{2mm} e^{t_0-t}=\cosh z\}$ are f-trapped.  However, they are non-compact, see fig.\ref{f3}.
\begin{figure}[ht]
\begin{center}
\includegraphics[height=4cm]{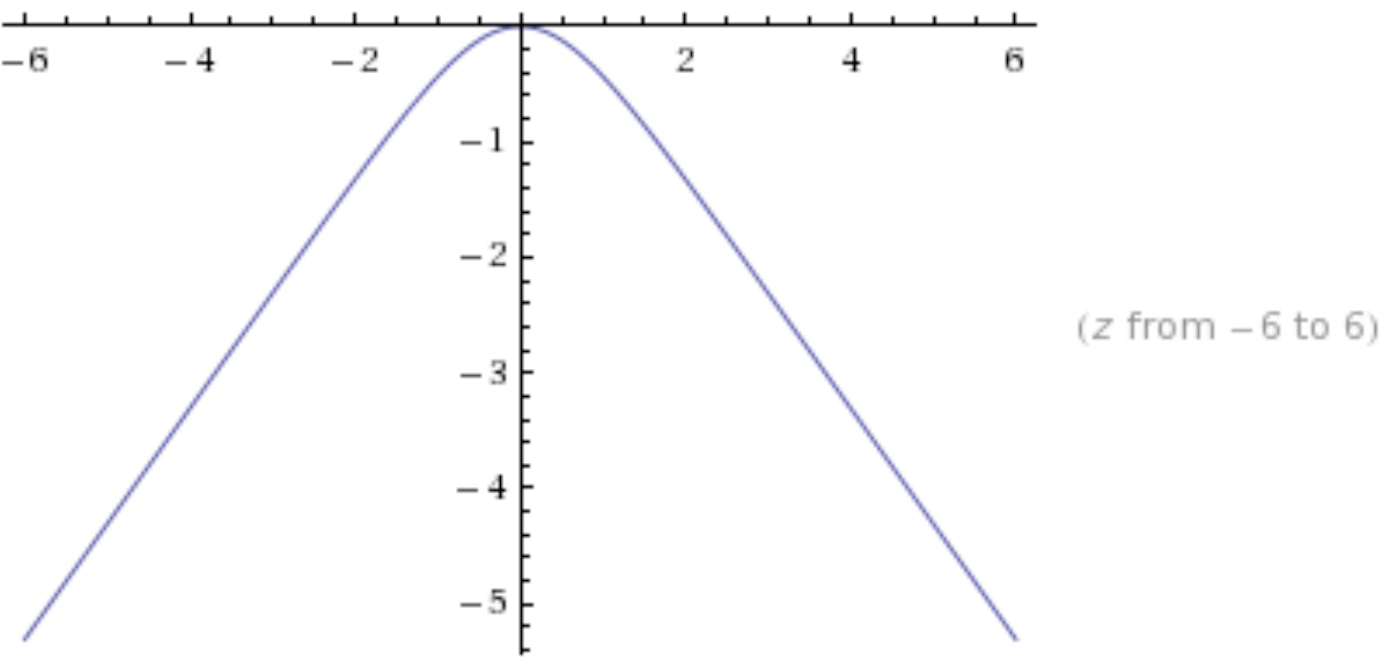}
\end{center}
\caption{A non-compact f-trapped surface in Minkowski spacetime. In spacetimes with selected foliations by hypersurfaces with some extrinsic properties ---such as the static time $t$ in flat spacetime---, all f-trapped surfaces have to ``bend down'' in time $t$, as shown ($t$ is the vertical coordinate). This property will prove to be very useful in some studies concerning black holes.}
\label{f3}
\end{figure}
It is a well-known result that there cannot be any {\em closed} trapped surface in flat spacetime. Actually, they are absent in arbitrary (globally) stationary spacetimes\cite{MS}. Therefore, in this case $\mathscr{T}=\emptyset$ and there is no boundary. There are no MTTs (such as AH) either. A conformal diagram is presented in fig.\ref{f4}.
\begin{figure}[h!]
\begin{center}
\includegraphics[height=5.5cm]{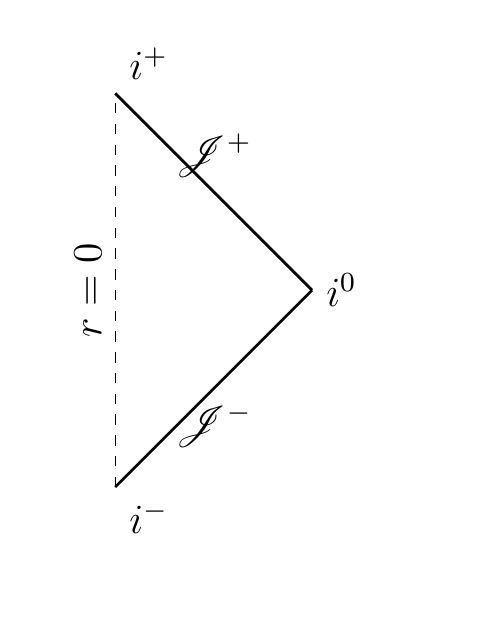}\end{center}
\caption{Flat (Minkowski) spacetime conformal diagram. There is no red region now ($\mathscr{T}=\emptyset$) because there are no closed (marginally) trapped surfaces in flat spacetime. Observe that the structure of infinity is given by two null hypersurfaces, $\scri^+$ and $\scri^-$, and three points $i^+$, $i^-$ and $i^0$. Spacetimes with a similar structure of infinity are called {\em asymptotically flat}. In flat spacetime all causal endless curves have an initial point at $\scri^-\cup i^-$, and a final point at $\scri^+\cup i^+$. This will not happen in general.}
\label{f4}
\end{figure}

\section{The event horizon, and its relation to $\B$ and AH.}
Consider now the Schwarzschild solution ($n=4$) in ``Eddington-Finkelstein'' advanced coordinates\cite{HE} (units with $G=c=1$), given by
\be
ds^2=-\left(1-\frac{2M}{r}\right)dv^2+2dvdr+r^2d\Omega^2 \label{sch}
\ee
This is (locally) the only spherically symmetric solution of the vacuum Einstein field equations. The mass function is now a constant $m=M$ representing the total mass, and $v$ is advanced (null) time. This metric is also the case $a=0$ (no rotation) of the Kerr metric (\ref{kerr}). From the above we know that the round spheres ---defined by constant values of $v$ and $r$--- are trapped if and only if $r<2M$. If $r=2M$ they are marginally trapped. Thus, the unique spherically symmetric MTT is given by AH, defined as
\begin{center}
AH :\hspace{1cm} $r-2M=0$  
\end{center}
One can actually prove that all possible closed f-trapped surfaces, be they round spheres or not, must lie completely inside the region $r<2M$. Therefore we also have $\mathscr{T}=\{r<2m\}$ and therefore, $\B=$AH. Even more, one can see, as shown in fig.\ref{f5}, that the spacetime is asymptotically flat, but that there are many future-endless causal curves that never reach future infinity, $\scri^+\cup i^+$. Therefore, there is a hypersurface separating those points which can send signals to infinity, from those which cannot. 
\begin{figure}[h!]
\begin{center}
\includegraphics[height=5.5cm]{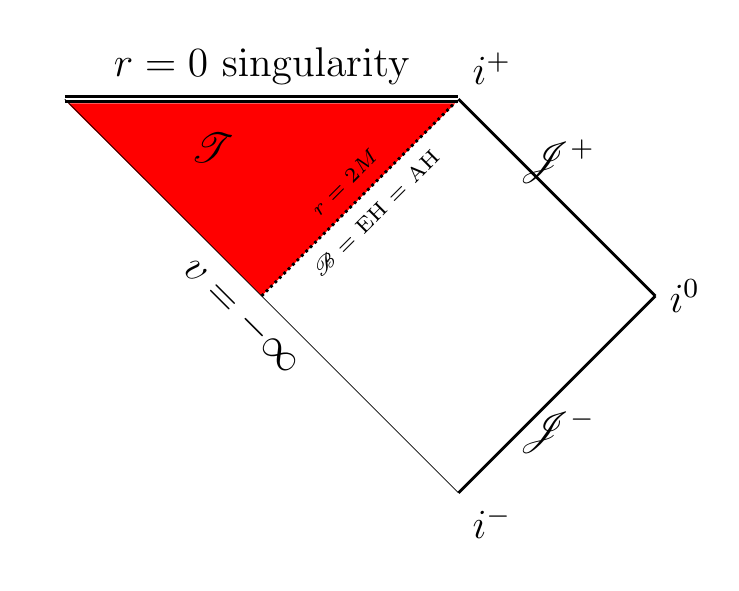}\end{center}
\caption{Conformal diagram for the Schwarzschild spacetime (\ref{sch}).  The red region is now confined by a boundary $\B$ which coincides with the unique spherically symmetric MTT, and with the event horizon EH.}
\label{f5}
\end{figure}
This separating membrane is called the {\em event horizon} (EH) in general, and in this case it happens to coincide with the spherically symmetric MTT and with the boundary of the f-trapped region
\begin{center}
AH = EH = $\B \,$ :\hspace{1cm} $r-2M=0$  .
\end{center}
However, this is an exceptional case, and these coincidences do not hold in general, dynamical, situations.

For general dynamical but {\em asymptotically flat} spacetimes (as remarked in fig.\ref{f4}, this is when the asymptotic region is ``Minkowskian'' with $\scri^{\pm}$ and $i^0$) one can define the region from where $\scri^+$ cannot be reached by any causal means. The boundary of this region is called the {\em Event Horizon} EH. (The formal definition is that EH is the boundary of the past of $\scri^+$). By definition, this is always a null hypersurface. However, and contrary to what happens in the Schwarzschild metric, it is not an MTT in general. 

To prove this claim, a simple example will suffice. Consider the imploding Vaidya spacetime\cite{V}, given in advanced coordinates by the line-element ($n=4$)
\be
ds^2=-\left(1-\frac{2m(v)}{r}\right)dv^2+2dvdr+r^2d\Omega^2 \label{vaidya}
\ee
Observe that this has exactly the same form as (\ref{sch}) but now the mass function depends on advanced null time $v$. The Einstein tensor of this metric is
$$
G_{\mu\nu}=\frac{2}{r^2}\frac{dm}{dv}\ell_\mu\ell_\nu, \hspace{1cm} \ell_\mu dx^\mu=-dv,\hspace{7mm}  \vec\ell =-\partial_{r} \hspace{3mm} (\ell^\mu\ell_\mu=0)
$$
which vanishes only if $dm/dv=0$. The null convergence condition (or the weak energy condition)\cite{HE,Wald} requires that $dm/dv\geq 0$. For simplicity, and to illustrate some important points concerning EHs and BHs, consider the following restrictions on the mass function
$$
m(v)=0 \hspace{3mm} \forall v<0; \hspace{1cm} m(v)\leq M <\infty \hspace{3mm} \forall v>0 .
$$
Then, the spacetime is flat for all $v<0$, while $M$ can be considered as the total mass.
The unique spherically symmetric MTT is now given by
\begin{center}
AH:\hspace{1cm} $r-2m(v)=0$
\end{center}
and it is easily checked that this hypersurface is spacelike whenever $dm/dv>0$ (and null at the regions with $dm/dv=0$). Therefore, EH is different from AH everywhere except possibly at a final asymptotic region with $m=M$=const. This proves that EH is not an MTT in general, see fig.\ref{f6}.
\begin{figure}[h!]
\begin{center}
\includegraphics[height=9.5cm]{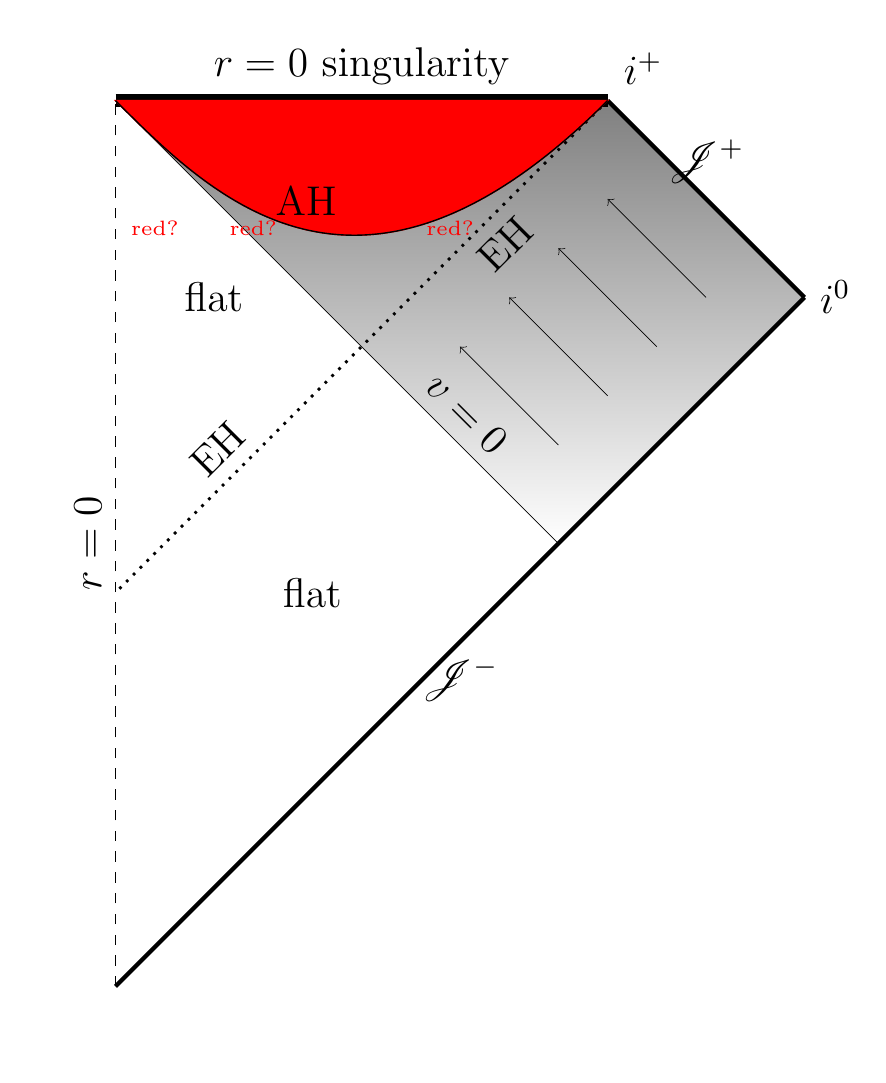}\end{center}
\caption{An imploding Vaidya spacetime. The spacetime is flat until null radiation flows in in spherical manner and produces a singularity when meeting at the center. This is ``clothed'' by an event horizon, as shown. However, this EH has a portion inside the flat zone of the spacetime, showing that it is ``aware'' of things that will happen in the future. The unique spherically symmetric MTT, denoted by AH, is a spacelike hypersurface (ergo a dynamical horizon) which never coincides with EH, and it approaches it asymptotically. One knows that there are closed f-trapped round spheres in the red region ($r<2m(v)$), so the question arises of whether or not AH is the boundary $\B$ of the f-trapped region. One can also wonder if the boundary $\B$ will actually be EH.}
\label{f6}
\end{figure}

In the figure one can also graphically see some of the global properties of event horizons, such as for example that they may start developing in regions whose whole past is flat. This is the teleological behaviour alluded to in the Introduction.

From the general results above, one knows that all round spheres in the region with $r<2m(v)$ are f-trapped. But now AH and EH do not coincide. Can there be any other f-trapped surfaces which extend outside AH? Will they be able to extend all the way down to EH? Put another way, one wants to know if the boundary $\B$ is EH, or AH, or neither. 
Ben-Dov proved\cite{BD} that the event horizon cannot be the boundary of closed f-trapped surfaces for the particular case of a shell of null radiation. This proves that EH $\neq \B$ in general.\footnote{Interestingly enough, he also proved that EH is the boundary for (marginally) {\em outer} f-trapped closed surfaces (MOTS) in the Vaidya spacetimes. This is a general conjecture due to Eardley \cite{E}.}
Numerical investigations\cite{SK}, however, were incapable of finding closed f-trapped surfaces to the past of the apparent 3-horizon AH. Thus, a natural question arises: is AH $=\B$ ? 

The answer is negative for general spherically symmetric spacetimes with inflowing matter and radiation, as discussed in the next section.

\section{Imploding spherically symmetric spacetimes in advanced coordinates: $\B\neq \AH$.}
The general 4-dimensional spherically symmetric line-element can be locally given in advanced coordinates by
$$
ds^2=-e^{2\beta}\left(1-\frac{2m(v,r)}{r}\right)dv^2+2e^\beta dvdr+r^2d\Omega^2 
$$
where $m(v,r)$ is the mass function. The future-pointing radial null geodesic vector fields ($k_{\mu}\ell^\mu =-1$) read
$$
\vec\ell = -e^{-\beta}\partial_r , \hspace{1cm} 
\vec k=\partial_v +\frac{1}{2}\left(1-\frac{2m}{r}\right)e^{\beta}\partial_r
$$
and the mean curvature vector for each round sphere (defined by constant values of $r$ and $v$) is:
\be
\vec H_{sph}=\frac{2}{r}\left(e^{-\beta}\partial_{v}+\left(1-\frac{2m}{r}\right)\partial_{r}\right)
\label{Hspheres}
\ee
so that setting $\vec k_{sph}^+ =\vec k$ and $\vec k_{sph}^- =\vec \ell$, the future null expansions for these round spheres become 
$$
\theta_{sph}^+ =\frac{e^{\beta}}{r}\left(1-\frac{2m}{r}\right), \hspace{1cm}
\theta_{sph}^-=-\frac{2e^{-\beta}}{r} 
$$
and the unique spherically symmetric MTT is given by $\AH : \, r-2m(r,v)=0\, (\Longleftrightarrow \theta_{sph}^+=0$), 
in agreement with previous calculations in Section \ref{subsec:horizons}.

AH can be timelike, null or spacelike depending on the sign of
$$
\left. \frac{\partial m}{\partial v}\left( 1-2\frac{\partial m}{\partial r}\right)\right|_{AH}
$$
In particular, AH is null (in fact it is an {\em isolated horizon}\cite{AK1}) on any open region where $m=m(r)$. This isolated horizon portion of AH, denoted by $\AH^{iso}$, is characterized by:
$$
\AH^{iso}
\equiv \AH \cap \{G_{\mu\nu} k^\mu k^\nu =0\}$$
An example (based on a Vaidya spacetime) is shown in the next figure \ref{f7}.
\begin{figure}[h!]
\begin{center}
\includegraphics[height=9.5cm]{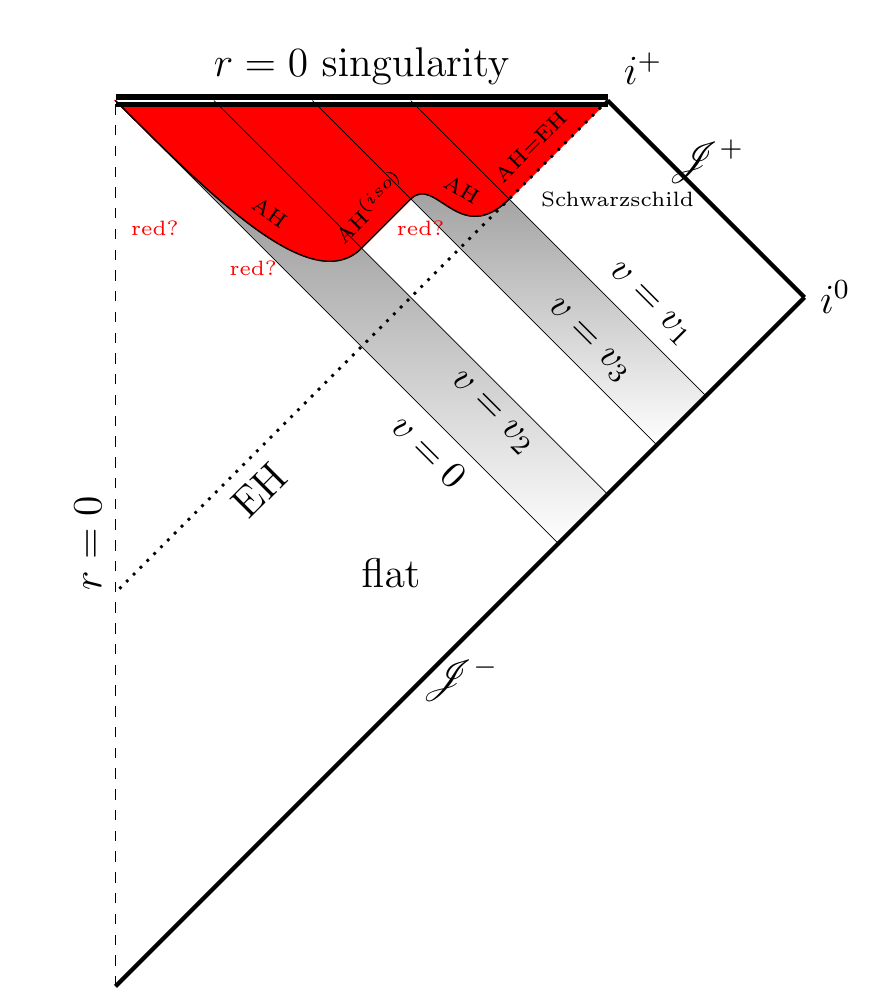}
\end{center}
\caption{A dynamical situation with non-empty $\AH^{iso}\backslash$EH. Null radiation flows for $v\in (0,v_{2})$, and then stops. There is no inflow of energy for $v\in (v_{2},v_{3})$ and then it comes in again from past null infinity for $v\in (v_{3},v_{1})$. From $v\geq v_{1}$ there is no more infalling energy and the spacetime settles down to a Schwarzschild black hole. The unique spherically symmetric MTT, denoted by AH, is spacelike for $v\in (0,v_{2})\cup (v_{3},v_{1})$, but it also possesses two portions of isolated-horizon type: one given by the part of EH with $v\geq v_{1}$, and the other one for $v\in (v_{2},v_{3})$, represented here by $\AH^{iso}$. The red portion is part of the f-trapped region $\mathscr{T}$, but one can prove (see main text) that there are closed f-trapped surfaces extending below $\AH\backslash\AH^{iso}$.}
\label{f7}
\end{figure}

\subsection{Perturbations on the spherical MTT}
In order to prove that there are closed f-trapped surfaces extending beyond AH one can use a perturbation argument, see Ref.\refcite{BS1}.
\begin{theorem}
In arbitrary spherically symmetric spacetimes, there are closed f-trapped surfaces (topological spheres) 
penetrating both sides of the apparent 3-horizon AH at any region where $G_{\mu\nu}k^\mu k^{\nu}|_{AH} \neq 0$.
\label{th:zigzag}
\end{theorem}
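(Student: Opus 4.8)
The plan is to take the spherically symmetric marginally trapped round spheres that foliate $\AH$ and deform them into genuinely f-trapped closed surfaces that wiggle ("zigzag") across $\AH$. Concretely, fix a round sphere $S_0\subset\AH$ and consider a one-parameter family of closed surfaces obtained by letting $r$ become a function on the sphere, $r=r_{S_0}+\epsilon\, f(\theta,\varphi)$ (and possibly also perturbing $v$), with $f$ a smooth function and $\epsilon$ a small parameter; equivalently, present $S$ as a graph $\{x^a=X^a(\lambda^A)\}$ over the round sphere inside a tubular neighbourhood. I would then compute the mean curvature vector of the deformed surface to first order in $\epsilon$ using the master formula $H_\mu=\delta^a_\mu(U_{,a}-\di\vec g_a)$ from Section~2.3 (or, equivalently, the standard variation-of-null-expansions formulae). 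The key scalar to control is $\kappa=-g^{bc}H_bH_c$, which must be made \emph{strictly positive} everywhere on $S$.

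The heart of the argument is a perturbative expansion of $\kappa$ about $S_0$. Since $S_0\subset\AH$ we have $\theta^+_{sph}=0$ and $\theta^-_{sph}<0$ there, so $H_\mu$ is a nonzero null covector proportional to $\vec k^-$ and $\kappa=0$ on $S_0$ only because $H$ is null, not because $H=0$. The first-order change in $\theta^+$ under the graph deformation is governed by a second-order linear elliptic operator acting on $f$ — schematically $\delta\theta^+ = -\bar\Delta f + (\text{lower order})\,f + (\partial_v\theta^+_{sph}+\dots)\,$(the $v$-shift)\,$+\,(\partial_r\theta^+_{sph})\,f$, where $\bar\Delta$ is the Laplacian on the round sphere. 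The crucial point is that $\partial_r\theta^+_{sph}\propto (1-2\,\partial m/\partial r)/r^2$ and $\partial_v$ of the relevant quantity involves $G_{\mu\nu}k^\mu k^\nu$, so \emph{precisely the hypothesis} $G_{\mu\nu}k^\mu k^\nu|_{\AH}\neq 0$ is what makes the transverse derivative of $\theta^+$ nonvanishing, giving us a genuine ``slope'' of $\AH$ to lean on. One then chooses $f$ (and the accompanying $v$-displacement) so that the surface dips to the $r<2m$ side over part of the sphere — where $\theta^+<0$ outright — while on the complementary part, where the surface pokes to the $r>2m$ side and $\theta^+>0$, one arranges that $\theta^+$ stays small enough in absolute value that $\theta^+\theta^-$ (or more properly $\kappa$) remains positive. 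This is where the freedom to prescribe $f$ with large gradient in a thin collar, balanced against the elliptic term, is exploited: one wants $|\delta\theta^+|$ controlled on the ``bad'' region while $\delta\theta^+$ is driven strongly negative on the ``good'' region.

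I expect the main obstacle to be exactly this balancing act: ensuring $\kappa>0$ \emph{uniformly} over the whole closed surface, including the transition zone where the surface crosses $\AH$ and $\theta^+$ changes sign. On that crossing locus one cannot rely on the sign of $\theta^+$ at all, so positivity of $\kappa$ there must come entirely from the $\epsilon$-order and $\epsilon^2$-order terms combining favourably — in particular the first-order term $\delta\theta^+$ must dominate, which forces a careful ordering of limits (choose the profile $f$ first, then take $\epsilon$ small) and a quantitative lower bound of the form $\kappa\geq c\,\epsilon\,g - C\epsilon^2$ with $g$ bounded below on the support of the deformation. Making that estimate rigorous — i.e. controlling the $O(\epsilon^2)$ remainder in the null expansions for graphs with $O(1)$-gradient profiles — is the technical crux; everything else is the variational calculus of $\theta^\pm$ plus the explicit spherically symmetric data of Section~6. (I would also need to check the deformed surface is still spacelike and a topological sphere, which is automatic for $\epsilon$ small.)
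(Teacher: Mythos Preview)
Your proposal contains a genuine gap that would make the argument fail. You assume that on the portion of the deformed surface poking into $\{r>2m\}$ the outer expansion satisfies $\theta^+>0$, and you then try to ``balance'' this by keeping $\theta^+$ small enough that $\kappa$ stays positive. But this is impossible: at any point where $\theta^+>0$ and $\theta^-<0$ one has $\kappa=2\theta^+\theta^-<0$, so the surface is untrapped there. A future-trapped surface requires $\theta^+<0$ and $\theta^-<0$ \emph{pointwise}; no smallness or averaging can rescue a region where $\theta^+$ has the wrong sign. Consequently your ``transition zone where $\theta^+$ changes sign'' is precisely what must be avoided, not managed.

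The paper's proof sidesteps this entirely by choosing the deformation \emph{direction}, not just its profile. One perturbs the marginally trapped round sphere $\varsigma\subset\AH$ along $f\vec n$ with $\vec n=-\vec\ell+\tfrac{n_\rho n^\rho}{2}\vec k$, and the free scalar $n_\rho n^\rho$ is fixed so that, for $f=a_0+a_N P_N(\cos\theta)$, the variation formula
\[
\delta_{f\vec n}\theta^+=-\Delta_\varsigma f+f\Bigl(\tfrac{1}{r^2}-G_{\mu\nu}k^\mu\ell^\nu-\tfrac{n_\rho n^\rho}{2}G_{\mu\nu}k^\mu k^\nu\Bigr)\Big|_\varsigma
\]
loses its $P_N$ part completely, leaving $\delta_{f\vec n}\theta^+=-a_0\,N(N+1)/r_\varsigma^2<0$ \emph{uniformly on the whole sphere}, independently of the sign of $f$. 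Solving for $n_\rho n^\rho$ is exactly where the hypothesis $G_{\mu\nu}k^\mu k^\nu|_\varsigma\neq 0$ enters. Then taking $a_N<-a_0<0$ makes $f$ change sign, so the deformed sphere lies on both sides of $\AH$, yet $\theta^+<0$ everywhere on it. In your language, the $v$-displacement is not ``possibly also'' needed; it is the essential mechanism that lets the surface cross $\AH$ while keeping $\theta^+$ strictly negative, and the hypothesis is used to \emph{solve} for that displacement rather than merely to provide a slope.
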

\begin{proof}
Recall that $\theta_{sph}^-=-e^{-\beta}\frac{2}{r} <0 , \hspace{2mm} \theta_{sph}^+=0$ on AH.
Perturb any marginally f-trapped 2-sphere $\varsigma\in$ AH along a direction $f\vec n$  orthogonal to $\varsigma$, where $f$ is a function on $\varsigma$ and 
$$
\vec n =-\left. \vec\ell +\frac{n_{\mu}n^{\mu}}{2}\vec k \right|_{\varsigma}\hspace{1cm}
k_\mu n^\mu=1.
$$
Observe that the causal character of $\vec n$ is unrestricted.
The variation of the vanishing expansion $\theta^+ =0$ reduces to\cite{AMS,AMS1,BS1}
$$
\delta_{f\vec n} \theta^+=-\Delta_{\varsigma}f+f\left.\left(\frac{1}{r^2}-
G_{\mu\nu}k^\mu \ell^{\nu}-\frac{n_{\rho}n^\rho}{2}G_{\mu\nu}k^\mu k^\nu \right)
\right|_\varsigma
$$
where $\Delta_{\varsigma}$ is the Laplacian on $\varsigma$.
Now, choose $f=a_{0}+a_{N}P_{N}(\cos\theta)$, ($a_0,a_{N}=$ const.), where $P_{l}$ are the Legendre polynomials. Using $\Delta_\varsigma P_l =- \frac{ l (l +1)}{r_{\varsigma}^2}P_l$, the previous variation becomes
\bean
\delta_{f\vec n} \theta^+ =a_0 \left.\left(\frac{1}{r^2}-
G_{\mu\nu}k^\mu \ell^{\nu}-\frac{n_{\rho}n^\rho}{2}G_{\mu\nu}k^\mu k^\nu \right)
\right|_\varsigma \hspace{1cm}\\
+ {a_{N}\left.\left(\frac{1+N(N+1)}{r^2}-
G_{\mu\nu}k^\mu \ell^{\nu}-\frac{n_{\rho}n^\rho}{2}G_{\mu\nu}k^\mu k^\nu \right)
\right|_\varsigma\, P_N (\cos\theta)} .
\eean
The term in the last line can be made to vanish by choosing
$$
n_\rho n^\rho =\frac{2}{\left.G_{\mu\nu}k^\mu k^{\nu}\right|_\varsigma}
\left.\left(\frac{1+N(N+1)}{r^2} -G_{\mu\nu}k^\mu \ell^{\nu} \right)\right|_\varsigma \, .
$$
(Here is where one needs that $\left.G_{\mu\nu}k^\mu k^{\nu}\right|_\varsigma\neq 0$).
With this choice 
$$\delta_{f\vec n} \theta^+=-a_0 \frac{ N (N+1)}{r_{\varsigma}^2}$$ 
hence, the deformed surface is f-trapped for any $a_{0}>0$.
As $f=a_0 +a_N P_N (\cos\theta)$, setting $a_N<-a_0<0$ implies that $f$ is negative around $\theta =0$ and positive where $P_N \leq 0$. Thus, the deformed surface is f-trapped and enters both sides of AH.
\end{proof}
Therefore, $\B$ is not a spherically symmetric  MTT: $\B\neq \AH$. It follows that the spherically symmetric MTTs are in the f-trapped region,
$\AH\backslash \AH^{iso} \subset \mathscr{T}$, from where it also follows that the boundary itself cannot touch AH (except perhaps in isolated-horizon portions): $\B\cap (\AH\backslash \AH^{iso})=\emptyset $. This is independent of the causal character of AH. 

Based on the previous result, and with the aid of a very strong and fundamental result found in Ref.\refcite{AM} concerning the existence of MOTS between outer trapped and outer untrapped surfaces (see also Ref.\refcite{KH}), one can also derive a result on the existence of non-spherical MTTs, see Ref.\refcite{BS1}.
\begin{corollary}
In arbitrary spherically symmetric spacetimes there exist MTTs penetrating both sides of the apparent 3-horizon $AH: \{r=2m\}$ at any region where $G_{\mu\nu}k^\mu k^{\nu}|_{AH} \neq 0$.\label{coro}
\end{corollary}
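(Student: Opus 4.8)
\medskip
\noindent\textbf{Proof proposal.} The idea is to use the non-spherical f-trapped surfaces of Theorem~\ref{th:zigzag} as barriers for the marginally-outer-trapped-surface existence result of Ref.\refcite{AM}, and then to slide the construction so that the resulting MOTS sweep out a hypersurface. Fix a connected open portion $\mathcal{U}\subset\AH$ with $G_{\mu\nu}k^\mu k^\nu\neq0$ and a point $p\in\mathcal{U}$. By Theorem~\ref{th:zigzag}, through points as close to $p$ as desired there is a closed topological sphere $S_-$ which is f-trapped (hence in particular weakly outer trapped, $\theta^+<0$) and meets both $\{r<2m\}$ and $\{r>2m\}$; moreover, inspecting its proof, the amplitude of the deformation $f=a_0+a_NP_N$ (with $a_N<-a_0<0$) may be taken arbitrarily small, so that $S_-$ departs from $\AH$ by less than any prescribed $\varepsilon>0$. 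As the complementary barrier take a round sphere $S_+:\{r=r_0\}$ with $\max_{S_-}r<r_0<2m+\varepsilon$, which is outer untrapped since $\theta^+_{sph}=\frac{e^\beta}{r_0}(1-2m/r_0)>0$. Choosing a spacelike hypersurface $\Sigma$ through $p$, transverse to $\AH$, in which $S_-$ and $S_+$ bound a compact region $\Omega$, Ref.\refcite{AM} (see also Ref.\refcite{KH}) produces a smooth, stable, compact MOTS $\mathcal{S}\subset\Omega$, which therefore lies within $\varepsilon$ of $\AH$.

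Next I would check that $\mathcal{S}$ has the required properties. Since $\theta^-_{sph}=-\frac{2e^{-\beta}}{r}<0$ on $\AH$ and $\theta^-$ depends continuously on the surface, for $\varepsilon$ small $\theta^-<0$ on $\mathcal{S}$, so $\mathcal{S}$ is genuinely marginally f-trapped. Being weakly outside the non-spherical barrier $S_-$, the surface $\mathcal{S}$ cannot be the round MOTS $\{r=2m\}$ (which is not even contained in $\Omega$, as $S_-$ bulges past it), hence $\mathcal{S}$ is non-spherical. Finally, because $\mathcal{S}$ is $C^0$-close to $\AH$ and solves the quasilinear elliptic equation $\theta^+=0$, elliptic estimates make it a small radial graph $r=2m+u$ over a round sphere; and the perturbative computation of Theorem~\ref{th:zigzag} — where a constant deformation changes $\theta^+$ by the nonzero constant $-a_0N(N+1)/r^2$, so the constant mode is never in the kernel of the linearisation — forces the leading part of such a $u$ to be a Legendre mode $P_N$, which is sign-changing. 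Hence $\mathcal{S}$ crosses $\{r=2m\}$, i.e.\ it penetrates both sides of $\AH$.

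To upgrade a single MOTS to a \emph{tube} I would run this for a one-parameter family of hypersurfaces $\{\Sigma_\tau\}$ sweeping out a neighbourhood of $p$, obtaining MOTS $\mathcal{S}_\tau$ of the above kind. On $\mathcal{U}$ they can be arranged to be strictly stable (the condition $G_{\mu\nu}k^\mu k^\nu\neq0$ being exactly what rules out the degenerate $\AH^{iso}$ behaviour), so the implicit-function-theorem argument of Refs.\refcite{AMS,AMS1} — the same one used to build MTTs through strictly stable MOTS — shows that the $\mathcal{S}_\tau$ depend smoothly on $\tau$ and that their union is an embedded hypersurface $\mathcal{T}$ foliated by closed marginally f-trapped surfaces, i.e.\ an MTT; being a union of surfaces each straddling $\AH$, $\mathcal{T}$ penetrates both sides of $\AH$ over $\mathcal{U}$. \textbf{The main obstacle} is the simultaneous control in the middle step: one must squeeze the \emph{outer} barrier $S_+$ down onto $\AH$ while still forcing the inner barrier $S_-$ to bulge past it, so that the Andersson--Metzger MOTS stays trapped in an $\varepsilon$-neighbourhood of $\AH$ where the perturbative, sign-changing description applies. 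Balancing the small amplitude against the a priori unrestricted normalisation $n_\mu n^\mu$ of the deformation in Theorem~\ref{th:zigzag}, and propagating this uniformly in $\tau$, is the delicate point, carried out in Ref.\refcite{BS1}.
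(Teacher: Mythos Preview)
Your overall strategy---Theorem~\ref{th:zigzag} as a barrier, Andersson--Metzger for the MOTS, then the stability theory of Refs.~\refcite{AMS,AMS1} to sweep out an MTT---is precisely the route the paper sketches, deferring all details to Ref.~\refcite{BS1}.

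The substantive gap is not the one you flag at the end but your argument that the MOTS $\mathcal{S}$ crosses $\AH$. A round outer barrier at $r_0>2m$ only forces $\mathcal{S}$ to enclose $S_-$ and hence to reach $r>2m$ somewhere; nothing drives it into $\{r<2m\}$. Your linearisation remedy is a non sequitur as stated: ``the constant mode is not in the kernel'' does not by itself single out the $l=N$ harmonics as the leading part of $u$. What \emph{is} true is that along the specific direction $\vec n$ fixed in Theorem~\ref{th:zigzag} the stability operator at $\varsigma$ has kernel exactly the $l=N$ eigenspace---but then its \emph{principal} eigenvalue equals $-N(N+1)/r_\varsigma^2<0$, so $\varsigma$ is unstable in that slicing and a naive bifurcation picture does not control the Andersson--Metzger MOTS near it. Worse, there is an internal inconsistency: if $\Sigma$ is tangent to that $\vec n$ at $\varsigma$ (as your linearisation argument requires), a one-line check in Vaidya gives $r-2m(V(r),r)\approx -N(N+1)\,(r-r_\varsigma)$ along $\Sigma$, so the round spheres on $\Sigma$ just \emph{outside} $\varsigma$ sit in the trapped region and your $S_+$ is not outer-untrapped after all; bending $\Sigma$ away from $\vec n$ to restore an untrapped $S_+$ forfeits the very linearisation you invoke.

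A cleaner way to force the crossing is to use Theorem~\ref{th:zigzag} twice: with $a_0<0$ it yields an outer-\emph{untrapped} sphere that dips into $\{r<2m\}$, and a MOTS sandwiched between an $S_-$ bulging outward and an $S_+$ bulging inward is forced across $\AH$ at both bulges. The nesting $S_-\subset\mathrm{int}(S_+)$ cannot, however, be arranged by deforming a \emph{single} marginal sphere along a \emph{single} $\vec n$ (since $a_0^+<0<a_0^-$ is incompatible with $f_-<f_+$ everywhere), so the two deformations must be based at distinct nearby marginal spheres or along distinct normals; arranging that, and verifying strict stability of the resulting outermost MOTS so the AMS continuation applies, is where the genuine bookkeeping lies.
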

Actually, all spacelike MTTs (that is, DHs) other than AH must lie partly to the future of AH and partly to its past, as proven in Ref.\refcite{AG}.
\begin{theorem}
No closed weakly f-trapped surface can be fully contained in the past domain of dependence $D^-(\AH)$ of a spacelike AH.
\label{th:AG}
\end{theorem}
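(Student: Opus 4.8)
The plan is to argue by contradiction, using a first--contact (barrier) argument followed by the strong maximum principle for the outgoing null expansion $\theta^{+}$; throughout, the null convergence condition $R_{\mu\nu}k^{\mu}k^{\nu}\ge 0$ (implied by the weak energy condition used in the rest of this contribution) is assumed. Recall that, $\AH$ being spacelike, achronal and without edge, the region $D^{-}(\AH)$ is globally hyperbolic and $\AH$ is one of its Cauchy hypersurfaces; recall also that $\AH$ is foliated by round marginally f-trapped spheres $\Sigma_{\lambda}$, of strictly increasing area, with $\theta^{+}=0$ and $\theta^{-}<0$ on each, $\vec k^{+}$ being the outgoing null normal along which the expansion vanishes. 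Assume then that $S$ is a closed weakly f-trapped surface lying in the interior of $D^{-}(\AH)$ (the leaves $\Sigma_{\lambda}$ themselves are weakly f-trapped and lie in $\AH=\partial D^{-}(\AH)$, so this interior statement is the real content; if $S$ meets $\AH$ the same argument applies with $\AH$ itself as the barrier below). On $S$ one has $\theta^{+}\le 0$ everywhere.

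The barriers I would use are the past outgoing null hypersurfaces: for each $\lambda$ let $N_{\lambda}$ be the null hypersurface ruled by the past-directed geodesics with tangent $-\vec k^{+}$ issuing from $\Sigma_{\lambda}$. Since $\theta^{+}=0$ on $\Sigma_{\lambda}$, the Raychaudhuri equation together with the null convergence condition gives $\theta^{+}\ge 0$ on the portion of $N_{\lambda}$ to the past of $\Sigma_{\lambda}$, with $\theta^{+}=0$ at a point only if the generator joining it to $\Sigma_{\lambda}$ is shear-free and satisfies $R_{\mu\nu}k^{+\mu}k^{+\nu}=0$. Because $S$ is compact and $D^{-}(\AH)$ globally hyperbolic, $J^{+}(S)\cap\AH$ is compact, so the foliation parameter attains a maximum $\lambda_{0}$ there; a continuity and compactness argument should then produce a ``last'' barrier $N_{\lambda_{*}}$, $\lambda_{*}\le\lambda_{0}$, touching $S$ at a point $q$ with $S$ lying locally to the causal past of $N_{\lambda_{*}}$. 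At such a one-sided tangency $T_{q}S\subset T_{q}N_{\lambda_{*}}$, the outgoing null normal of $S$ at $q$ aligns with the generator of $N_{\lambda_{*}}$, and comparing outgoing null second fundamental forms of two surfaces tangent at $q$ with one to the past of the other --- exactly the sign appearing in the operator $\delta_{f\vec n}\theta^{+}=-\Delta_{\varsigma}f+f(\dots)$ of the proof of Theorem~\ref{th:zigzag}, evaluated where the deformation function attains a maximum --- gives $\theta^{+}_{S}(q)\ge\theta^{+}$ of $N_{\lambda_{*}}$ at $q$, which is $\ge 0$. Combined with $\theta^{+}_{S}\le 0$ this forces $\theta^{+}_{S}(q)=0$.

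From equality at $q$ one promotes to rigidity: applying the strong maximum principle to the quasilinear elliptic operator $\delta_{f\vec n}\theta^{+}$ for the deformation relating $S$ to the cross-sections of $N_{\lambda_{*}}$ near $q$ (as in Refs.~\refcite{AMS,AMS1,AM}) one concludes that $S$ agrees near $q$ with a cross-section of $N_{\lambda_{*}}$ on which $\theta^{+}\equiv 0$; by unique continuation, $S$ being closed and connected, $S$ is a full such cross-section, whence the Raychaudhuri rigidity makes its entire past slab in $N_{\lambda_{*}}$ back to $\Sigma_{\lambda_{*}}$ shear-free with $R_{\mu\nu}k^{+\mu}k^{+\nu}=0$. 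A short further argument (using that $S$ is also weakly f-trapped towards $\vec k^{-}$, against $\theta^{-}<0$ on $\Sigma_{\lambda_{*}}$) then forces $S$ to coincide with a leaf of $\AH$, contradicting $S\subset\mathrm{int}\,D^{-}(\AH)$ and proving the theorem.

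The step I expect to be the main obstacle is the production of the last barrier $N_{\lambda_{*}}$ touching $S$ \emph{from the past} with the stated one-sided tangency: the congruences $N_{\lambda}$ generically develop caustics well before filling $D^{-}(\AH)$, so one cannot simply foliate a neighbourhood of $\AH$ by smooth leaves. The clean device is to work with the achronal boundaries $\partial I^{-}(\Sigma_{\lambda})$ and to replace ``tangent leaf'' by ``past support null hypersurface'', using that the inequality $\theta^{+}\ge 0$ is only reinforced by caustics (the congruence converges faster) and that the required touching point and one-sidedness follow from the maximality of $\lambda_{0}$ together with global hyperbolicity of $D^{-}(\AH)$. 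A second delicate point is the final rigidity step, which is where the spacelike (dynamical) character of $\AH$ --- and, in our setting, the spherical symmetry of each $\Sigma_{\lambda}$, which renders the geometry of $N_{\lambda_{*}}$ explicit --- is genuinely used, and where the possibly degenerate locus on which $S$ meets the null hypersurface must be handled carefully; this can be organized, as in the proof of Theorem~\ref{th:zigzag}, around a single scalar elliptic inequality on $S$.
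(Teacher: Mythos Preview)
The paper does not prove this theorem at all: it is stated and attributed to Ashtekar--Galloway (Ref.~\refcite{AG}), as the label itself signals, so there is no in-paper argument to compare against; the relevant benchmark is the proof in Ref.~\refcite{AG}. Your circle of ideas --- null hypersurfaces, Raychaudhuri under the null convergence condition, strong maximum principle for $\theta^{+}$ --- is the correct one, but Ashtekar--Galloway organise it in the opposite direction: rather than sending null barriers $N_{\lambda}$ \emph{backward} from the leaves $\Sigma_{\lambda}$ and hunting for a last one touching $S$, they send outgoing null geodesics \emph{forward} from $S$. Since $\AH$ is a Cauchy surface for $D^{-}(\AH)$ these must meet $\AH$; Raychaudhuri propagates $\theta^{+}\le 0$ forward, and the contradiction is obtained on $\AH$ itself by comparing the boundary of the projected set with the MOTS foliation there. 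This forward organisation sidesteps precisely the caustic/coverage difficulty you flag as your main obstacle --- with your backward barriers there is no reason the family $\{N_{\lambda}\}$ (or $\{\partial I^{-}(\Sigma_{\lambda})\}$) should sweep out all of $D^{-}(\AH)$, so the existence of a touching $N_{\lambda_{*}}$ is not guaranteed.

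There is also a genuine gap in your final step. Concluding that $S$ is a cross-section of $N_{\lambda_{*}}$ with $\theta^{+}\equiv 0$, and that the null slab back to $\Sigma_{\lambda_{*}}$ is shear-free with $R_{\mu\nu}k^{+\mu}k^{+\nu}=0$, is \emph{not} yet a contradiction: such a MOTS with $\theta^{-}\le 0$ is still a closed weakly f-trapped surface sitting in $\mathrm{int}\,D^{-}(\AH)$, which is exactly what you assumed. Your appeal to $\theta^{-}<0$ on $\Sigma_{\lambda_{*}}$ does not close this, because the transport of $\theta^{-}$ along the $\vec k^{+}$-generators is governed by the cross-focusing equation (involving $R_{\mu\nu}k^{+\mu}k^{-\nu}$), which carries no sign under the null convergence condition; nothing forces $S$ up onto the leaf $\Sigma_{\lambda_{*}}$. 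To finish you need an extra input --- either $G_{\mu\nu}k^{+\mu}k^{+\nu}>0$ on $\AH$ (ruling out the rigid slab directly), or the strictly-increasing-area ``regular'' structure of the spacelike $\AH$ together with a further maximum-principle argument \emph{on} $\AH$ --- and this is again the place where the forward projection is the natural device.
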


However, closed f-trapped surfaces may lie on $D^-(\AH)$ almost completely. In other words, closed f-trapped surfaces can intersect the region $\{r<2m\}$ in just an arbitrarily tiny portion, as small as desired. This surprising result was obtained in Ref.\refcite{BS1} and will be of fundamental importance for the concept of ``core'' of a black hole, see section \ref{sec:cores}. More precisely:
\begin{theorem}
In spherically symmetric spacetimes, there are closed f-trapped surfaces (topological spheres) penetrating both sides of the apparent 3-horizon $\AH\backslash\AH^{iso}$ {with arbitrarily small portions} outside the region $\{ r>2m\}$.
\label{th:tiny}
\end{theorem}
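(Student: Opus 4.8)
The plan is to push the perturbation argument behind Theorem~\ref{th:zigzag} one step further, replacing its single--mode deformation by a genuinely two--functional one. I would fix a marginally f-trapped round sphere $\varsigma\subset\AH\setminus\AH^{iso}$ (so $G_{\mu\nu}k^\mu k^\nu|_\varsigma\neq 0$ and $\theta^-_{sph}<0$) and deform it along $t\,f\vec n$, $\vec n=-\vec\ell+\frac12(n_\rho n^\rho)\vec k$, exactly as in that proof, except that now \emph{both} the profile $f$ and the boost $n_\rho n^\rho$ are allowed to be arbitrary functions on $\varsigma$. The variation formula of that proof still holds for a position-dependent boost (the $\vec k$-part of $f\vec n$ contributes only a Raychaudhuri term with no derivatives of its coefficient), reading
\be
\delta_{f\vec n}\theta^+=-\Delta_\varsigma f+f\left(\frac1{r^2}-G_{\mu\nu}k^\mu\ell^\nu-\frac12(n_\rho n^\rho)G_{\mu\nu}k^\mu k^\nu\right)\Big|_\varsigma ,
\ee
while to leading order in $t$ a point of the deformed surface lies in $\{r\leq 2m\}$ precisely when $f\,q\leq 0$, where $q$ is the factor multiplying $f$ in the displacement $r-2m$. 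In the Vaidya model, using $G_{\mu\nu}k^\mu\ell^\nu|_\varsigma=0$ and $G_{\mu\nu}k^\mu k^\nu|_\varsigma=2\,\partial_v m/r^2$, these combine into $\delta_{f\vec n}\theta^+=-\Delta_\varsigma f+u/r^2$ with $u:=f(1-(n_\rho n^\rho)\,\partial_v m)$; the general spherically symmetric case reduces the same way. So the task becomes: choose functions $u$ and $f$ on $\varsigma$ with (i) $-\Delta_\varsigma f+u/r^2<0$ everywhere, (ii) $\{u\leq 0\}$ of arbitrarily small area on $\varsigma$, and (iii) $f$ bounded away from $0$ with $\|f\|_\infty$ and $\|n_\rho n^\rho\|_\infty$ controlled, so that the genuine (non-linear) deformation is small.

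Concretely I would take $u\equiv\eta$ off a geodesic cap $C_\eta$ of area $O(\eta)$ and $u\equiv -D$ on $C_\eta$ for a fixed $D>1$: integrating (i) forces $\int_\varsigma u<0$, which is exactly why the inner patch cannot be removed but can be shrunk provided $u$ plunges to an $O(1)$ negative value on the shrinking cap. Then prescribe $\Delta_\varsigma f$ (constant off $C_\eta$, constant on $C_\eta$, zero mean) so that $-\Delta_\varsigma f+u/r^2$ is a strictly negative step function; solving this Poisson equation gives $\mathrm{osc}(f)=O(\eta)$, so adding a constant $\approx 1$ makes $f$ positive everywhere. Consequently $q=u/f$ is $O(\eta)$ off $C_\eta$ and $\approx -D$ on it, so $n_\rho n^\rho=(1-q)/\partial_v m$ is $O(1)$ throughout (it only varies by an $O(1)$ amount across $\partial C_\eta$), and $t f\vec n$ has size $O(t)$. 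Taking $t=\eta^{2}$ keeps $\theta^+<0$ everywhere once the quadratic remainder is added, keeps $\theta^-<0$ by continuity, and distorts areas over $C_\eta$ only by $1+O(\eta^3)$. The resulting $S_\eta$ is a topological sphere, f-trapped, crossing $\AH\setminus\AH^{iso}$, with $\mathrm{Area}(S_\eta\cap\{r\leq 2m\})=4\pi r_\varsigma^{2}\,\eta+o(\eta)$, smaller than any prescribed number. (A non-perturbative alternative, feeding into the Andersson--Mars existence theorem for outer trapped surfaces between barriers used for Corollary~\ref{coro} suitable barriers that sit mostly in $\{r>2m\}$ with a small inner lobe, should give the same conclusion and may be more robust.)

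The step I expect to be the main obstacle is reconciling (i) with (ii). The $-\Delta_\varsigma$ in the stability operator penalises localising the inward dip, and this is precisely why the one-function deformations of Theorem~\ref{th:zigzag} are useless here: a single Legendre mode with constant boost forces the Legendre term to be cancelled by a fixed constant, after which more than a hemisphere of the deformed surface ends up on the trapped side of $\AH$, and a constant-boost multi-mode $f$ runs into a maximum-principle obstruction. The genuinely new ingredient is letting the boost $n_\rho n^\rho$ vary over $\varsigma$: that is what lets $f$ (hence $\Delta_\varsigma f$) stay uniformly $O(\eta)$ even though $u$ must dive to an $O(1)$ negative value on $C_\eta$, so that (i) survives. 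What then needs genuine care is the bookkeeping: making the hierarchy $t\ll\eta\ll 1$ uniform so the linearised inequality is not overturned by the honest second variation, smoothing $u$ and $\Delta_\varsigma f$ across $\partial C_\eta$ without losing (i) or control of $q=u/f$, confirming that the variation formula of Theorem~\ref{th:zigzag} does extend verbatim to position-dependent $n_\rho n^\rho$, and checking that the small-area estimate for $S_\eta\cap\{r\leq 2m\}$ is stable under the $O(t)$ deformation.
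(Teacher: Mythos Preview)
The present review article does not include a proof of Theorem~\ref{th:tiny}; the statement is quoted from Ref.~\refcite{BS1} and the reader is referred there for the argument. So there is no in-text proof against which to line up your construction. That said, your proposal identifies precisely the mechanism that is needed and, as far as one can tell, matches in spirit the perturbation argument in the cited reference: the crucial new ingredient over Theorem~\ref{th:zigzag} is exactly what you isolate, namely allowing the boost $n_\rho n^\rho$ to vary over $\varsigma$. Your observation that the variation formula survives this generalisation unchanged---because the $\vec k$-component of $f\vec n$ enters the first variation only algebraically through the Raychaudhuri term, never differentiated---is correct and is the hinge of the whole construction. With constant boost the integrated constraint $\int_\varsigma u<0$ together with a single-mode profile forces the inner region to cover essentially half the sphere, as you note; letting $n_\rho n^\rho$ jump is what breaks that rigidity and lets $u$ dive to an $O(1)$ negative value on a cap of area $O(\eta)$ while $f$ (and hence $\Delta_\varsigma f$) stays uniformly small.

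The technical worries you flag are the genuine ones, and your assessment of them is fair. The most delicate is the quadratic remainder in the smoothing annulus: the $\vec k$-component of $f\vec n$ has tangential gradient of order $1/\delta$ there (with $\delta\sim\sqrt\eta$ the smoothing width), and this feeds into the second variation; a rough count gives a correction of order $t^2/\eta$ against a linear term whose magnitude is at least of order $t\eta$ in that annulus, so $t=\eta^2$ is borderline and a steeper choice such as $t=\eta^3$ is safer. Your estimate $\mathrm{osc}(f)=O(\eta)$ for the Poisson solution is slightly optimistic (a logarithmic factor appears, since the Green's function on the round $2$-sphere is logarithmic), but this is harmless for the argument. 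None of these points is a real obstruction; they only affect the exponents in the hierarchy $t\ll\eta\ll 1$ that you already knew had to be tracked.
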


\section{Closed trapped surfaces are ``clairvoyant''}
Now that we have learnt that closed f-trapped surfaces can extend beyond the spherically symmetric MTT, one can wonder haw far can they go. In particular, one can ask whether or not they can extend all the way `down' to the flat portions of the spacetime, if they exist. Again, the surprise is that they do extend and penetrate flat portions (under some circumstances). 

This was proven, via an explicit example, in Ref.\refcite{BS}, later refined with more elaborated examples in Ref.\refcite{ABS}. The chosen spacetime was a particular simple case of the Vaidya spacetime (\ref{vaidya}), defined by the explicit mass function
$$
m(v)=\left\{\begin{array}{cl}
0 & \hspace{3mm} v<0\\
\mu v &  \hspace{3mm} 0\leq v \leq M/\mu\\
M &  \hspace{3mm} v>\mu
\end{array}
 \right.
$$
where $\mu$ is a constant and $M$ is the total mass, also a constant.
Thus, this is flat for $v<0$, it ends in a Schwarzschild region with mass $M$ ($v>M/\mu$), and it happens to be self-similar in the intermediate Vaidya region for $0<v<M/\mu$.

{The closed f-trapped surface is composed of:}

\begin{itemize}
\item {Flat region}: a topological disk given by the hyperboloids
$$
\theta =\pi/2 ; \hspace{2cm} v=t_0 +r -\sqrt{r^2+k^2} 
$$

\item {Vaidya region}: a topological cylinder defined by $\theta =\pi/2$ and
$$
\sqrt{v^2-bvr+ar^2}=C\exp \left\{\frac{b}{2\sqrt{a-b^2/4}}\arctan\left(\frac{2v-br}{2r\sqrt{a-b^2/4}} \right)\right\}
$$
with $a>b^2/4$ and $C=$constant. ($dv/dr\rightarrow \infty$ at $v=br$).

\item {Schwarzschild region}: another disk composed of two parts
\begin{itemize}
\item a cylinder with $
\theta =\pi/2$ and $r=\gamma M$(=const.)

\item a final ``capping" disk defined by ($r=\gamma M$)
$$
\left(\theta -\frac{\pi}{2} +\delta \right)^2 +\left(\frac{v}{\gamma M}-c_1 \right)^2=\delta^2
$$
with  constants $c_1$ and $\delta$.
\end{itemize}
\end{itemize}
\begin{figure}[h!]
\begin{center}
\includegraphics[width=7.2cm]{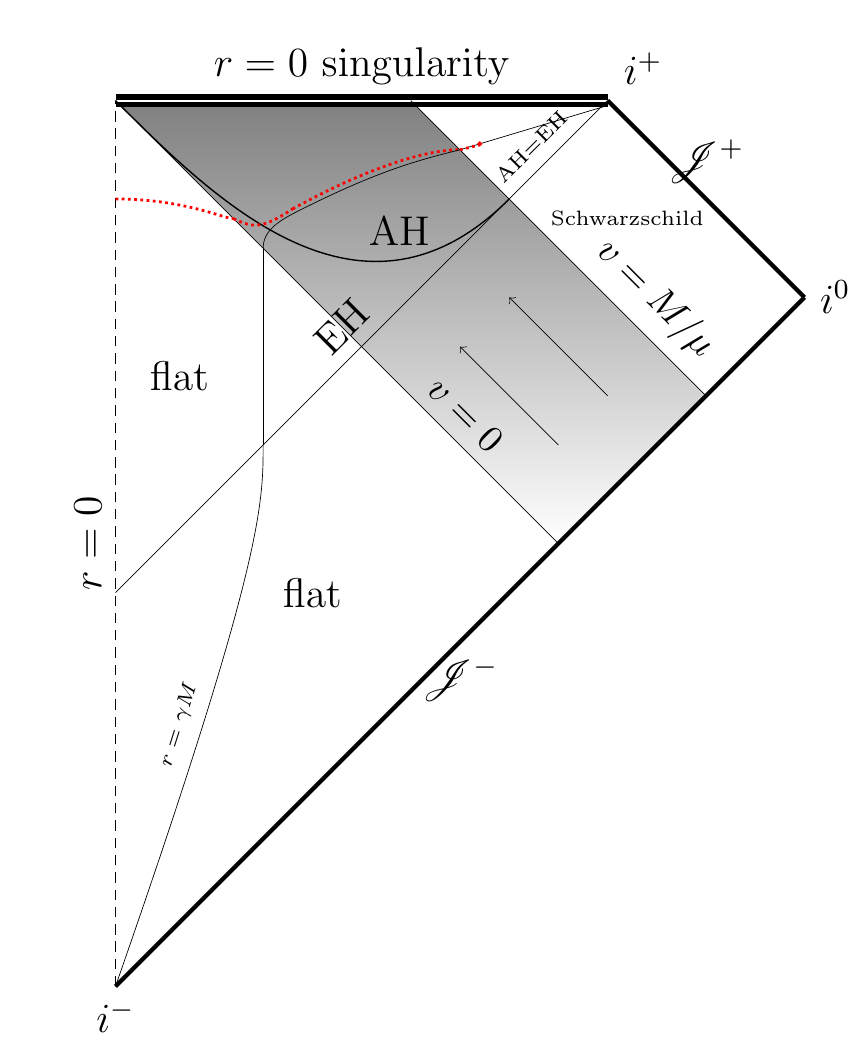}
\end{center}
\caption{A closed f-trapped surface penetrating the flat portion of a black hole spacetime constructed with the Vaidya solution. The surface has points on every sphere on the dotted line in red. It thus enters the flat portion of the spacetime.}
\label{f8}
\end{figure}
This surface is future-trapped if (1) $t_0<k,\, k>0$, (2), $0<a<b$, (3) $1>\gamma=\frac{1}{b\mu}$,  $a\geq \frac{1}{\mu}$, (4) $0<\delta\leq \frac{\pi}{2}$ and $ \sqrt{\frac{2}{\gamma-1}}\left(\frac{1}{\gamma}-1\right) >\frac{1}{\delta}$.
Note that these conditions imply in turn a restriction on the mass growth, as the slope of the mass function must satisfy
$$
\mu=\frac{1}{\gamma b}>\frac{1}{\gamma}\frac{b}{4a}>\frac{1}{4\gamma} \hspace{1cm} \gamma < 0.68514 \, .
$$
A picture showing points of the surface in a Penrose diagram is shown in figure \ref{f8}.

In conclusion, closed trapped surfaces may panetrate flat regions of the spacetime. This implies that they are highly non-local: they can have portions in a flat region of spacetime whose {\em whole} past is also flat in clairvoyance of energy that crosses them elsewhere to make their compactness feasible. They are {\em clairvoyant}, that is to say, `aware' of things that happen elsewhere, far away, with no causal connection.

Observe finally that this result also implies that the boundary $\B$ of the f-trapped region may penetrate the flat regions too.

\section{Interplay of surfaces and generalized symmetries}
Hitherto, we have proven that the boundary $\B$ of the f-trapped region must be generically below the spherically symmetric MTT defined by AH: $\{r=2m\}$, but it cannot extend all the way down to the event horizon EH.
In order to try to put restrictions on the location of $\B$ and the extension of $\mathscr{T}$ we use the following fundamental property\cite{BS1}
\begin{proposition}\label{prop:1}
No f-trapped surface (closed or not) can touch a spacelike hypersurface to its past at a single point, or have a 2-dimensional portion contained in the hypersurface, if the latter has a positive semi-definite second fundamental form.
\end{proposition}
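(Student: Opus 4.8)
The plan is to transplant the two hypotheses --- $\vec H$ future-pointing on $S$ and the second fundamental form of $\Sigma$ positive semi-definite --- onto a single scalar function on $S$ through the projection formula (\ref{proj}), and then run a maximum-principle argument. First I would work in a neighbourhood $U$ of the relevant point $p$ (the touching point, or an arbitrary point of the $2$-dimensional portion $\omega\subset S\cap\Sigma$) and fix there a smooth time function $\tau$ with timelike gradient such that $\Sigma\cap U=\{\tau=0\}$ and $\tau>0$ on the future side of $\Sigma$. Since near $p$ the surface $S$ lies to the future of $\Sigma$, the function $u\equiv\tau|_S$ obeys $u\ge0$ on $S\cap U$ with $u(p)=0$; hence $p$ is a local minimum of $u$, so $\overline\nabla u|_p=0$, whence $T_pS\subseteq\ker\d\tau|_p=T_p\Sigma$, and $\overline\Delta u|_p\ge0$.

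Then I would put $v=\d\tau$ in (\ref{proj}) and trace with $\gamma^{AB}$, obtaining on $S$
\be
\overline\Delta u=\gamma^{AB}e^{\mu}_{A}e^{\nu}_{B}\,\nabla_{\mu}\nabla_{\nu}\tau-\langle\d\tau,\vec H\rangle ,
\ee
and evaluate it at $p$. Since $\nabla\tau|_p$ is past timelike and orthogonal to $\Sigma$, it equals $-a\,\vec u$ for some $a>0$, with $\vec u$ the future unit normal of $\Sigma$; therefore $-\langle\d\tau,\vec H\rangle|_p=a\,g(\vec u,\vec H)|_p\le0$, and this vanishes only when $\vec H(p)=0$, because a non-zero future-pointing causal vector cannot be $g$-orthogonal to the timelike $\vec u$. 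For the first term, $T_pS\subseteq T_p\Sigma$ makes it equal to $-a$ times the $\gamma$-trace of the second fundamental form of $\Sigma$ restricted to $T_pS$, which is $\le0$ by the hypothesis on $\Sigma$ (this fixes the sign convention intended by ``positive semi-definite''). Hence $\overline\Delta u|_p\le0$, and together with $\overline\Delta u|_p\ge0$ this forces $\overline\Delta u|_p=0$ and the vanishing of \emph{both} terms; in particular $g(\vec u,\vec H)|_p=0$, i.e. $\vec H(p)=0$. If $S$ is genuinely (marginally) f-trapped, with $\vec H$ nowhere zero, this is already the contradiction. In the $2$-dimensional-portion case $u\equiv0$ and $\overline\nabla u\equiv0$ on $\omega$, so running the same computation at each point of $\omega$ gives $\vec H\equiv0$ on $\omega$, again impossible.

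What is left is the borderline single-point case, where one extracts only $\vec H(p)=0$ pointwise; here I would invoke the strong maximum principle. Applying the minimum condition at $p$ direction by direction to the identity above (now using $\theta^{\pm}\le0$ and the null-cone geometry) upgrades $\vec H(p)=0$ to $\vec K_{AB}|_p=0$ and $\mathrm{Hess}_S u|_p=0$; one then chooses $\tau$ adapted to $\Sigma$ (Gaussian-normal type, corrected by a suitable function of the normal coordinate, using precisely the positive semi-definiteness of the second fundamental form to absorb the lower-order terms) so that $u$ is a supersolution of a second-order elliptic operator on $S\cap U$, and concludes $u\equiv0$ near $p$ by the Hopf strong maximum principle, which reduces this case to the portion case already dealt with. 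The hard part is exactly this last step: the tangency forces a \emph{degenerate} minimum of $u$, so $u$ is not superharmonic in the naive sense and the comparison operator must be manufactured carefully from the hypothesis on $\Sigma$. If instead one simply reads ``f-trapped'' as ``$\vec H$ future-pointing and nowhere zero'', the short argument of the previous paragraph already suffices and no strong maximum principle is needed.
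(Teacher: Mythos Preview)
Your argument is correct and is essentially the paper's own proof, repackaged. The paper encodes the hypersurface $\Sigma$ through a future-pointing hypersurface-orthogonal vector field $\xiv$ with $\xi_\mu=-F\partial_\mu\tau$ and uses the ``main formula'' (\ref{main}) involving $\lie_{\xiv}g$; you work directly with the level function $\tau$ and the traced projection formula (\ref{proj}) applied to $v=\d\tau$. At a point $p$ where $\bar\xi_A=0$ these two formulations collapse to the same identity: the paper obtains
\[
\bar F\,\gamma^{AB}\frac{\partial^2\bar\tau}{\partial\lambda^A\partial\lambda^B}\Big|_p
=-\tfrac12 P^{\mu\nu}(\lie_{\xiv}g)_{\mu\nu}\big|_p+\xi_\rho H^\rho\big|_p,
\]
and since for tangent directions $P^{\mu\nu}(\lie_{\xiv}g)_{\mu\nu}|_p=-2\bar F\,\gamma^{AB}e^\mu_Ae^\nu_B\nabla_\mu\nabla_\nu\tau|_p$ and $\xi_\rho H^\rho=-F\langle\d\tau,\vec H\rangle$, this is exactly your $\overline\Delta u|_p=\gamma^{AB}e^\mu_Ae^\nu_B\nabla_\mu\nabla_\nu\tau|_p-\langle\d\tau,\vec H\rangle|_p$. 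Your identification of the first term with $-a$ times the $\gamma$-trace of the second fundamental form of $\Sigma$ is correct in the paper's convention (where the future hyperboloid of Fig.~\ref{f9} has positive-definite second fundamental form), and the sign of the $\vec H$-term is also right.

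One comment: your last paragraph on the strong maximum principle is unnecessary for the statement as written. In this paper ``f-trapped'' means strictly f-trapped (see Table~\ref{ta1}: $\theta^+<0$, $\theta^-<0$, so $\vec H$ is future timelike and nowhere zero). Hence your contradiction $\vec H(p)=0$ already finishes both the single-point and the portion cases, as you yourself note in the final sentence. The borderline weakly/marginally trapped situation is handled by the paper only as a side remark (Remark after Theorem~\ref{th:no-min}), and there too no strong maximum principle is invoked---one simply records the exceptional triple vanishing condition rather than excluding it.
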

Before giving the main steps for the proof of this basic result, the intuitive idea behind it can be understood by means of the following figure \ref{f9}. As we saw in Fig.\ref{f3}, f-trapped surfaces (closed or not) have to bend down in ``time'', if this time has some appropriate properties. In particular, if the time defines a foliation by hypersurfaces with positive semi-definite second fundamental forms, then Fig.\ref{f9} shows that there is no way that the surface can touch tangentially, to its past (in ``time''), any such level hypersurface.
\begin{figure}[h!]
\begin{center}
\includegraphics[height=3.3cm]{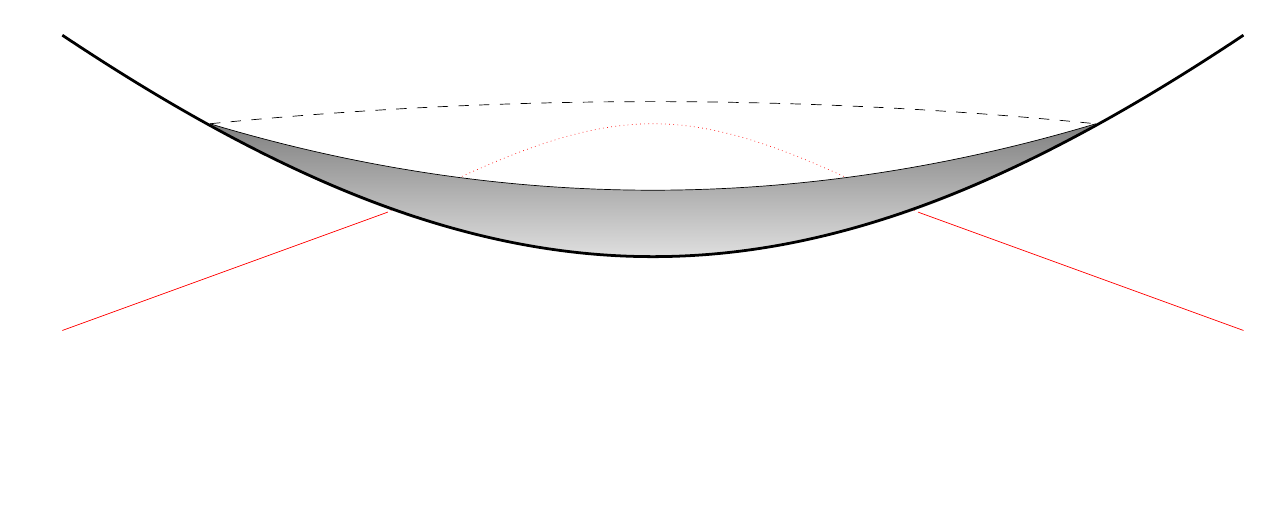}
\end{center}
\caption{A (hyper)surface with positive-definite second fundamental form (shown as a hyperboloid) and a f-trapped curve (shown in red) in 3-dimensional Minkowski spacetime.}
\label{f9}
\end{figure}
Thus, one can use (generalized) symmetries, or equivalently some distinguished hypersurface-orthogonal future-pointing vector fields ---such as (conformal, homothetic) Killing vectors, Kerr-Schild vector fields\cite{CHS},  the Kodama vector field in spherically symmetric spacetimes, etcetera---
to constrain the possible existence of f-trapped surfaces. 
This interplay between surfaces and generalized symmetries has proven very useful in several investigations concerning the trapped-surface fauna\cite{BS1,CM,MS,S2,S3,S5}.

To prove the fundamental property given in Proposition \ref{prop:1} and related interesting results, start with the identity for the Lie derivative of the metric along a vector field $\vec \xi$
$$(\lie_{\xiv} g)_{\mu\nu}=\nabla_{\mu}\xi_{\nu}+\nabla_{\nu}\xi_{\mu}.$$
Projecting to the surface $S$ and using (\ref{proj})
$$
(\lie_{\xiv} g|_{S})_{\mu\nu}\, e^{\mu}_Ae^{\nu}_B=\overline\nabla_{A}\overline\xi_{B}+
\overline\nabla_{B}\overline\xi_{A}+2 \xi_{\mu}|_{S} K^{\mu}_{AB}
$$
so that contracting now with $\gamma^{AB}$ we arrive at the main formula
\be
\fbox{$\displaystyle{\frac{1}{2}
P^{\mu\nu}(\lie_{\xiv} g|_{S})_{\mu\nu} =
\overline\nabla_{C}\overline\xi^{C}+ \xi_\rho H^\rho}$}
\label{main}
\ee
where  $P^{\mu\nu}\equiv \gamma^{AB} e^{\mu}_Ae^{\nu}_B$
is the orthogonal projector of $S$. 

This elementary formula is very useful and permits to obtain many interesting, immediate, consequences. For instance: 
\begin{enumerate}
\item If $S$ is minimal ($\vec H =\vec 0$), integrating the formula for closed $S$
$$
\oint_{S}P^{\mu\nu}(\lie_{\xiv} g|_{S})_{\mu\nu}=0 \, .
$$
This relation must be satisfied for {\em all} possible vector fields $\xiv$. Therefore, closed minimal surfaces are very rare.
\item If $\xiv$ is a Killing vector, integrating again for closed $S$
$$
\oint_{S}\xi_\rho H^\rho =0 \, .
$$
Therefore, if the Killing vector $\xiv$ is timelike on $S$, then $S$ cannot be weakly
trapped, unless it is minimal\cite{MS}.
\end{enumerate}
More sophisticated and useful consequences can be derived, such as\cite{BS1,MS}
\begin{lemma}
Let $\xiv$  be future-pointing on a region $\R\subset \varietat$ and let $S$ be a closed surface contained in $\R$ with $P^{\mu\nu}(\lie_{\xiv} g|_{S})_{\mu\nu} \geq 0$. Then, $S$ cannot be closed and weakly f-trapped ({unless} $\xi_{\mu}H^\mu =0$ and $P^{\mu\nu}(\lie_{\xiv} g|_{S})_{\mu\nu} = 0$.)
\label{lemma}
\end{lemma}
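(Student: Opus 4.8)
The plan is to integrate the master identity (\ref{main}) over the closed surface $S$ and play the sign of its left-hand side against the sign of its right-hand side. First I would observe that $\overline\xi^C\equiv\gamma^{CB}\overline\xi_B$ is a genuine smooth vector field on $S$ --- it is the tangential part of the restriction $\xiv|_S$ --- so that $\overline\nabla_C\overline\xi^C$ is the $\gamma$-divergence of a vector field on the compact boundaryless manifold $S$. Hence $\oint_S\overline\nabla_C\overline\xi^C=0$ by the divergence theorem, and integrating (\ref{main}) gives
\be
\frac{1}{2}\oint_S P^{\mu\nu}(\lie_{\xiv}g|_{S})_{\mu\nu}\,=\,\oint_S\xi_\rho H^\rho .
\ee

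Next I would control the pointwise sign of the two integrands. By hypothesis the left-hand integrand is $\geq 0$ everywhere on $S$, so the left-hand side is $\geq 0$. For the right-hand side, suppose --- towards the ``unless'' alternative --- that $S$ is weakly f-trapped, i.e.\ $\vec H$ is future-pointing (causal, possibly zero) at every point of $S\subset\R$. Since $\xiv$ is also future-pointing on $\R$, at each point both $\xiv$ and $\vec H$ are future-directed causal vectors, and in Lorentzian signature $(-,+,\dots,+)$ the inner product of two such vectors is nonpositive, $\xi_\rho H^\rho\leq 0$ (with equality only when $\vec H=\vec 0$, or when $\vec H$ and $\xiv$ are parallel null vectors). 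Thus the right-hand side above is $\leq 0$. The two sides being equal forces both to vanish; and a continuous integrand of constant sign whose integral over $S$ vanishes must vanish identically. Therefore $\xi_\rho H^\rho\equiv 0$ and $P^{\mu\nu}(\lie_{\xiv}g|_{S})_{\mu\nu}\equiv 0$ on $S$, which is precisely the exceptional case allowed in the statement. Consequently, outside that exceptional case $S$ cannot be closed and weakly f-trapped, as claimed; the two earlier consequences (closed minimal surfaces, timelike Killing $\xiv$) are then immediate specializations.

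The argument is short once (\ref{main}) is available, so there is no heavy computation to carry out; the care goes into the bookkeeping. The main point to get right is the vanishing of the divergence term on integration: one has to treat $S$ as a compact Riemannian manifold without boundary and integrate against the Riemannian density of $\gamma_{AB}$ (so that orientability is not an issue), and one has to note that $\overline\xi^C$ is globally defined and as smooth as $\xiv$, being the fibrewise tangential projection of $\xiv|_S$. A secondary subtlety is the equality discussion for $\xi_\rho H^\rho\leq 0$: when $\xiv$ is timelike on $S$, equality forces $\vec H=\vec 0$ (so $S$ minimal), while in the degenerate null case equality forces $\vec H\parallel\xiv$; stating this cleanly is what makes the ``unless'' clause sharp, even though for the bare statement one only needs $\xi_\rho H^\rho=0$.
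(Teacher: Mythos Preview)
Your argument is correct and follows exactly the same route as the paper: integrate the identity (\ref{main}) over the closed $S$, drop the divergence term, and compare the sign of the remaining integrals. The paper's proof is simply a terser version of what you wrote, without the explicit discussion of why the divergence integrates to zero or of the equality cases for $\xi_\rho H^\rho\leq 0$.
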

\begin{proof}
Integrating the main formula on $S$, the divergence term integrates to zero so that
$$
\oint_{S}\xi_\rho H^\rho =\frac{1}{2} \oint_S P^{\mu\nu}(\lie_{\xiv} g|_{S})_{\mu\nu} \geq 0
$$
Hence, $\vec H$ cannot be future pointing on all $S$ (unless $\xi_\mu H^\mu=P^{\mu\nu}(\lie_{\xiv} g|_{S})_{\mu\nu} =0$.)
\end{proof}

Stronger results can be obtained for {\em hypersurface-orthogona}l $\xiv$, that is, $\xiv$ satisfying
$$
\xi_{[\mu}\nabla_{\nu}\xi_{\rho]}=0 \hspace{2mm} \Longleftrightarrow \hspace{2mm} \xi_{\mu}=-F \partial_{\mu} \tau
$$
for some local functions $F>0$ and $\tau$. This means that
$\xiv$ is orthogonal to the hypersurfaces $\tau =$const.
(called the {level hypersurfaces}.)

\begin{theorem}
Let $\xiv$ be future-pointing and hypersurface-orthogonal on a region $\R\subset \varietat$ and let
$S$ be a f-trapped surface. Then, $S$ cannot have a local minimum of $\tau$ at any point $q\in \R$ where $P^{\mu\nu}(\lie_{\xiv} g)_{\mu\nu}|_q\geq 0$.
\label{th:no-min}
\end{theorem}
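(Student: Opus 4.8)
\emph{Proof proposal.} The plan is to argue by contradiction, controlling the restriction $\bar\tau\equiv\Phi^*\tau$ of the potential $\tau$ to the surface and showing that the trapping condition forbids $\bar\tau$ from having an interior local minimum at $q$. Assume such a point $q\in\R$ exists. A local minimum supplies two ingredients: (a) $q$ is a critical point of $\bar\tau$, so $\partial_A\bar\tau|_q=\tau_{,\mu}e^\mu_A|_q=0$; since $\xi_\mu=-F\,\partial_\mu\tau$ with $F>0$ this says $\vec\xi|_q$ is normal to $S$, and in particular $P^{\mu\nu}\tau_{,\nu}|_q=0$ with $P^{\mu\nu}=\gamma^{AB}e^\mu_Ae^\nu_B$; (b) the Hessian of $\bar\tau$ at $q$ is positive semi-definite, so its $\gamma$-trace satisfies $\Delta_S\bar\tau|_q\ge 0$, where $\Delta_S\equiv\gamma^{AB}\bar\nabla_A\bar\nabla_B$.

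The core computation is to express $\Delta_S\bar\tau|_q$ and show it is negative. First I would apply the projection identity (\ref{proj}) to the exact one-form $v=\d\tau$, so that $\nabla v$ is the spacetime Hessian $\nabla_\mu\nabla_\nu\tau$ and $\Phi^*v=\d\bar\tau$; taking the $\gamma^{AB}$-trace and using $\gamma^{AB}K^\mu_{AB}=H^\mu$ together with $\tau_{,\mu}=-\xi_\mu/F$ yields
\be
\Delta_S\bar\tau=P^{\mu\nu}\nabla_\mu\nabla_\nu\tau+\frac{1}{F}\,\xi_\mu H^\mu .
\ee
Then I would trade the spacetime Hessian of $\tau$ for $\lie_{\xiv}g$ through hypersurface-orthogonality: from $(\lie_{\xiv}g)_{\mu\nu}=\nabla_\mu\xi_\nu+\nabla_\nu\xi_\mu$ and $\xi_\nu=-F\,\partial_\nu\tau$ one obtains $2F\,\nabla_\mu\nabla_\nu\tau=-(\lie_{\xiv}g)_{\mu\nu}-F_{,\mu}\tau_{,\nu}-F_{,\nu}\tau_{,\mu}$; contracting with $P^{\mu\nu}$ and evaluating at $q$, the terms proportional to $P^{\mu\nu}\tau_{,\nu}$ disappear by (a), leaving
\be
P^{\mu\nu}\nabla_\mu\nabla_\nu\tau\big|_q=-\frac{1}{2F}\,P^{\mu\nu}(\lie_{\xiv}g)_{\mu\nu}\big|_q\le 0,
\ee
where the last inequality is precisely the hypothesis $P^{\mu\nu}(\lie_{\xiv}g)_{\mu\nu}|_q\ge 0$ combined with $F>0$.

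Combining the two displays, $\Delta_S\bar\tau|_q\le\frac1F\,\xi_\mu H^\mu|_q$. Since $\vec\xi$ is future-pointing (hence causal) on $\R$ and, $S$ being f-trapped, $\vec H|_q$ is future-pointing and timelike, the scalar product of these two future-pointing causal vectors is strictly negative; hence $\Delta_S\bar\tau|_q<0$, contradicting (b) and proving Theorem~\ref{th:no-min}. Because ``$S$ touching a level hypersurface $\{\tau=\mathrm{const}\}$ to its past at a single point'' and ``$S$ containing a $2$-dimensional portion lying in such a hypersurface'' are both instances of $\bar\tau$ having a (non-strict) interior local minimum, this argument also recovers Proposition~\ref{prop:1}. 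The main obstacle is entirely in the last step — the sign of $\xi_\mu H^\mu|_q$: strict negativity needs $\vec H|_q$ genuinely timelike (or at least not proportional to $\vec\xi|_q$). If $S$ were only \emph{weakly} f-trapped and $\vec H$ happened to vanish, or to be null and aligned with $\vec\xi$, at $q$, every inequality above would be saturated, forcing $\Delta_S\bar\tau|_q=0$, a vanishing Hessian of $\bar\tau$ at $q$, and $P^{\mu\nu}(\lie_{\xiv}g)_{\mu\nu}|_q=0$; disposing of that borderline case (or recording it as an ``unless'' clause, in the style of Lemma~\ref{lemma}) is the only additional care required.
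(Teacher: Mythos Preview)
Your proof is correct and essentially the same as the paper's. The paper applies the ``main formula'' (\ref{main}) directly to $\xiv$ and then computes $\overline\nabla_A\bar\xi^A|_q=-\bar F\,\gamma^{AB}\partial_A\partial_B\bar\tau|_q$, arriving at exactly your final identity $\bar F\,\gamma^{AB}\partial_A\partial_B\bar\tau|_q=-\tfrac{1}{2}P^{\mu\nu}(\lie_{\xiv}g)_{\mu\nu}|_q+\xi_\rho H^\rho|_q$; you simply split that single step into two---first projecting $d\tau$ via (\ref{proj}), then trading the spacetime Hessian of $\tau$ for $\lie_{\xiv}g$ using $\xi_\mu=-F\tau_{,\mu}$---which amounts to re-deriving (\ref{main}) with $\bar\xi_B=-\bar F\partial_B\bar\tau$ substituted from the outset. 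Your handling of the borderline (weakly trapped) case matches the paper's Remark as well.
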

\begin{proof}
Let $q\in S\cap \R$ be a point where $S$ has a local extreme of $\tau$. 
Noting that $\bar\xi_A=-\bar F\partial\bar\tau/\partial \lambda^A$ with $\bar F \equiv F|_S$ this means that
$$
\bar\xi_A|_q= \left.\frac{\partial\bar\tau}{\partial \lambda^A}\right|_q=0
$$
where $\tau =\bar\tau(\lambda^A)$ is the local parametric expression of $\tau$ on $S$. 

An elementary calculation leads then to:
$$
\left.\overline\nabla_{A}\overline\xi^{A}\right|_q=\left.\gamma^{AB}\overline\nabla_A\left(-\bar F\frac{\partial\bar\tau}{\partial\lambda^B}\right)\right|_q=\left.-\bar F\gamma^{AB}\frac{\partial^2\bar\tau}{\partial\lambda^A\partial\lambda^B}\right|_q
$$
Introducing this in the main formula (\ref{main}) 
$$
\left.\bar F\gamma^{AB}\frac{\partial^2\bar\tau}{\partial\lambda^A\partial\lambda^B}\right|_q=\left.-\frac{1}{2}
P^{\mu\nu}(\lie_{\xiv} g)_{\mu\nu} \right|_q+
\left.\xi_\rho H^\rho\right|_q\leq \left.\xi_\rho H^\rho\right|_q
$$
hence $\partial^2\bar\tau/\partial\lambda^A\partial\lambda^B|_q$ cannot be positive (semi)-definite if $\xiv$ and $\vec H$ are both future-pointing. 
\end{proof}
\begin{remark}
\begin{enumerate}
\item $S$ does not need to be compact, nor contained in $\R$.
\item It is enough to assume $P^{\mu\nu}(\lie_{\xiv} g)_{\mu\nu}|_q\geq 0$ only at the local extremes of $\tau$ on $S$. 
\item A positive semi-definite $\partial^2\bar\tau/\partial\lambda^A\partial\lambda^B|_q$ is also excluded.
\item The theorem holds true for weakly trapped surfaces with the only exception of
$$
\left.\frac{\partial^2\bar\tau}{\partial\lambda^A\partial\lambda^B}\right|_q=0 \hspace{3mm} \mbox{{and}} \hspace{3mm}
P^{\mu\nu}(\lie_{\xiv} g)_{\mu\nu}|_q=0 \hspace{3mm} \mbox{{and}}  \hspace{3mm} \left.\xi_\rho H^\rho\right|_q=0 
$$
If $\xiv|_q$ is timelike, the last of these implies that $\vec H|_q=\vec 0$. 
\item Letting aside this exceptional possibility, $\tau$ always decreases at least along one tangent direction in $T_qS$. 
Starting from any point $x\in S\cap\R$ one can always follow a connected path along $S\cap\R$ with decreasing $\tau$.
\end{enumerate}
\end{remark}
\section{The past barrier $\S$}
The results above on the interplay of vector fields with special convexity properties and the trapped-surface fauna were essential in order to detect a general past barrier for closed f-trapped surfaces in general imploding, asymptotically flat, spherically symmetric spacetimes\cite{BS1}. The mild
assumptions used to get this past barrier are\cite{BS1}:
\begin{enumerate}
\item The total mass function is finite, and there is an initial  flat region
$$
m(v,r)=0 \hspace{3mm} \forall v<0; \hspace{1cm} \forall v>0:\hspace{3mm}
0\leq m(v,r)\leq M <\infty \,\,\, (M>0)
$$
and a regular future null infinity $\scri^+$ with associated event horizon EH.\cite{D} Let 
$\AH_1$ denote the connected component of $\AH \equiv \{r=2m\}$ associated to this EH. 
$\AH_1$ separates the region $\R_1$, defined as the connected subset of $\{r>2m\}$ which contains the flat region of the spacetime, from a region containing f-trapped 2-spheres. 
\item The dominant energy condition holds, and furthermore the matter-energy is falling in
$$
\frac{\partial m}{\partial v}\geq 0 \hspace{5mm} \mbox{on}\hspace{5mm}  \{r\geq 2m\}\cap J^+(EH) \label{dotm} 
$$
where $J^+(EH)$ is the causal future of the event horizon.
\end{enumerate}

The connected component of $\B$ associated to $\AH_1$ will be denoted by $\B_1$, in analogy with $\R_1$ which denotes the corresponding $\{r>2m\}$-region. 
The same notation is used for $\AH_1^{iso}$, $\mathscr{T}_1$, etcetera.
Some examples are provided in the Penrose diagrams of Fig.\ref{f10}.
\begin{figure}[h!]
\includegraphics[width=15cm]{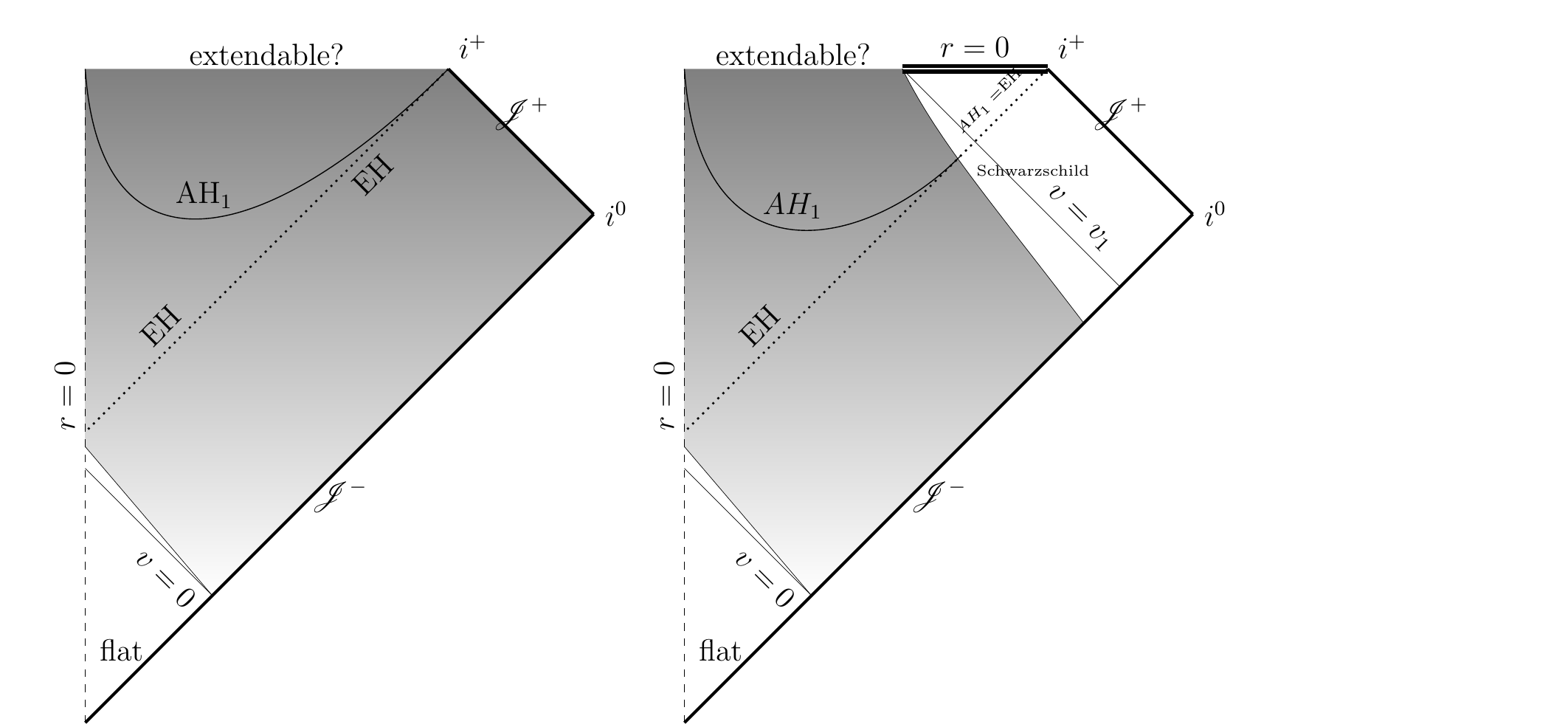}
\caption{Two conformal diagrams that correspond to the possibilities $m(v,r)<M$ everywhere (left), and to $m(v,r)=M$ is some asymptotic region (right). In the second case the spacetime becomes eventually Schwarzschild spacetime. The shaded regions are non-flat spherically symmetric spacetimes with non-zero energy-momentum tensor. A connected component of AH, labeled $\AH_{1}$, is shown. This may have timelike portions, and it merges with the EH either asymptotically (left) or at a finite value of $v$ (right). Whether or not a singularity develops in the shaded region depends on the specific properties of the matter (of the mass function $m(v,r)$), and there can be cases where the spacetime can be continued towards the future, where some other connected parts of AH may appear.}
\label{f10}
\end{figure}

\subsection{The Kodama vector field} 
The Kodama vector field\cite{Ko}, which in these coordinates takes the simple form
$$
\xiv =e^{-\beta}\partial_v
$$%
characterizes the spherically symmetric directions tangent to the hypersurfaces $r=$const. In other words, it points into the unique direction where the round spheres have vanishing expansion, as can be checked using (\ref{Hspheres}): $\xi_\mu H^\mu_{sph}=0$.

$\xiv$ is hypersurface orthogonal, with the level function $\tau$ defined by
$$
\xi_{\mu}dx^\mu =-Fd\tau = dr-e^{\beta} \left(1-\frac{2m(v,r)}{r}\right) dv
$$
Furthermore
$$
\xi_\mu\xi^\mu=-\left(1-\frac{2m(v,r)}{r}\right), \hspace{1cm} \ell_\mu\xi^\mu=-1
$$
so that $\xiv$ is future-pointing timelike on the region $\{r<2m\}$, and future-pointing null at $\AH :\{r=2m\}$. Thus, in order to ascertain if $\xiv$ has all the necessary properties to apply Theorem \ref{th:no-min} we need to check if
$$\left.P^{\mu\nu}(\lie_{\xiv} g|_{S})_{\mu\nu} \right|_q\geq 0$$
at any point $q\in S\cap \{r\geq 2m\} \cap J^+(EH)$ where $S$ has a local extreme of $\tau$. 
The Lie derivative can be easily computed 
$$
(\lie_{\xiv} g)_{\mu\nu} =e^{\beta}\frac{2}{r}\frac{\partial m}{\partial v}\ell_\mu \ell_\nu -\frac{\partial\beta}{\partial r}\left(\delta_{\mu}^r \xi_{\nu}+\delta_\nu^r \xi_\mu \right)
$$
Then, given that $\bar\xi_A|_q=0$ we obtain
$$
\left.P^{\mu\nu}(\lie_{\xiv} g|_{S})_{\mu\nu} \right|_q=\left.e^{\beta}\frac{2}{r}\frac{\partial m}{\partial v}\right|_S \bar\ell_A \bar\ell^A \geq 0
$$
as required.

\subsection{A past barrier for closed f-trapped surfaces}
The hypersurfaces $\tau=\tau_c$=const.\  are spacelike everywhere (and approaching $i^0$) if $\tau_c<\tau_{\Sigma}$, while they are partly spacelike and partly timelike, becoming null at $\AH_1$, if $\tau_c>\tau_{\Sigma}$, where
$$
\tau_\Sigma\equiv \inf_{x\in AH_1} \tau |_x
$$
Observe that $\tau_\Sigma$ is also the least upper bound of $\tau$ on EH.

Define the hypersurface $\Sigma$ as
$$\Sigma\equiv \{\tau =\tau_\Sigma\}\, .$$
$\Sigma$ is the {\em last} hypersurface orthogonal to $\xiv$ which is non-timelike everywhere. See figure \ref{f11} for a representation of all these facts.
It turns out that $\Sigma$ is a past limit for closed f-trapped surfaces, and this is a direct consequence of the properties of the Kodama vector field and Theorem \ref{th:no-min} (or Proposition \ref{prop:1}. Thus\cite{BS1}
\begin{theorem}
No closed f-trapped surface can enter the region $\tau\leq \tau_\Sigma$.
\end{theorem}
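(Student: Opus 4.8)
The plan is a reductio argument combining the compactness of $S$ with Theorem~\ref{th:no-min} applied to the Kodama vector field $\xiv=e^{-\beta}\partial_v$. Assume, for contradiction, that some closed f-trapped surface $S$ meets the region $\{\tau\le\tau_\Sigma\}$. Being compact and boundaryless, $S$ attains a global (hence local) minimum of $\tau$ at some point $q\in S$, with $\tau(q)\le\tau_\Sigma$. At this critical point $\bar\xi_A|_q=-\bar F\,\partial\bar\tau/\partial\lambda^A|_q=0$, and the Hessian $\partial^2\bar\tau/\partial\lambda^A\partial\lambda^B|_q$ is positive semi-definite. Theorem~\ref{th:no-min} together with remark~3 after it rules out exactly this for an f-trapped surface, provided $q$ lies where $\xiv$ is future-pointing and hypersurface-orthogonal and where $P^{\mu\nu}(\lie_{\xiv}g|_S)_{\mu\nu}|_q\ge0$ --- and by remark~2 this last inequality is needed only at the extremal point $q$. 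So the whole proof reduces to checking these three facts at $q$.

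First, $\xiv$ is hypersurface-orthogonal everywhere, with level function $\tau$. Second, by the very construction of $\Sigma$ the level sets $\{\tau=\tau_c\}$ with $\tau_c<\tau_\Sigma$ are spacelike and $\Sigma=\{\tau=\tau_\Sigma\}$ is non-timelike; hence on all of $\{\tau\le\tau_\Sigma\}$ the normal $\xiv$ is causal, and being the Kodama field (with $\ell_\mu\xi^\mu=-1$) it is future-pointing wherever it is causal, so $\xiv|_q$ is future-pointing causal (which, via $\xi_\mu\xi^\mu=-(1-2m/r)$, also forces $r(q)\ge2m$). Third, inserting $\bar\xi_A|_q=0$ into the expression for $\lie_{\xiv}g$ obtained in the previous subsection gives $P^{\mu\nu}(\lie_{\xiv}g|_S)_{\mu\nu}|_q=e^{\beta}\tfrac{2}{r}\tfrac{\partial m}{\partial v}\,\bar\ell_A\bar\ell^A$, which is $\ge0$ because $\bar\ell_A\bar\ell^A=\gamma^{AB}\bar\ell_A\bar\ell_B\ge0$ ($\gamma$ positive definite) and $\partial m/\partial v|_q\ge0$ --- the latter because $q$ lies either in the initial flat portion ($m\equiv0$, term zero) or in $\{r\ge2m\}\cap J^+(\mathrm{EH})$, where assumption~(2) applies; that $\{\tau\le\tau_\Sigma\}$ is covered by these two cases is a minor point guaranteed by the construction of $\Sigma$ and assumptions~(1)--(2). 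Feeding all of this into the master identity~(\ref{main}) at $q$ yields $\bar F\,\gamma^{AB}\partial^2\bar\tau/\partial\lambda^A\partial\lambda^B|_q\le\xi_\rho H^\rho|_q\le0$, which is incompatible with the Hessian of $\bar\tau$ being positive semi-definite at its minimum --- unless both inequalities are saturated.

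The one genuinely delicate point is this borderline case: $P^{\mu\nu}(\lie_{\xiv}g|_S)_{\mu\nu}|_q=0$, $\xi_\rho H^\rho|_q=0$ and $\partial^2\bar\tau/\partial\lambda^A\partial\lambda^B|_q=0$ all at once --- the exceptional possibility noted in the remark after Theorem~\ref{th:no-min}. If $r(q)>2m$ the timelike $\xiv|_q$ together with $\xi_\rho H^\rho|_q=0$ forces $\vec H|_q=\vec 0$, so $S$ is minimal at $q$ (impossible outright if ``f-trapped'' is read in the strict sense $\theta^\pm<0$; otherwise it must be excluded), which I would do either by a small f-trapped deformation of $S$ near $q$ destroying the degeneracy, in the spirit of the perturbation argument behind Theorem~\ref{th:zigzag}, or by applying Proposition~\ref{prop:1} to a spacelike level set $\{\tau=\tau_c\}$ with $\tau_c$ just below $\tau(q)$: the projection of $\lie_{\xiv}g$ onto such a level set equals $e^{\beta}\tfrac{2}{r}\tfrac{\partial m}{\partial v}\,\ell\otimes\ell\ge0$, so it has positive semi-definite second fundamental form and no f-trapped surface may touch it from its past. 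The remaining subcase $q\in\Sigma$ with $r(q)=2m$ --- so $q$ lies on $\AH_1$ or on the isolated-horizon/EH portions, where $\xiv$ is null --- is the step I expect to demand the most care: here one appeals to the fact that a closed f-trapped surface cannot be tangent to $\AH_1\setminus\AH_1^{iso}$ from the past (Theorem~\ref{th:zigzag} and its corollary), while on the null parts with $G_{\mu\nu}k^\mu k^\nu=0$ the vanishing of the convexity term must be tracked by hand. Once every degenerate configuration is disposed of, the contradiction stands and no closed f-trapped surface can enter $\{\tau\le\tau_\Sigma\}$.
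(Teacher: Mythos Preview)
Your proof is correct and follows exactly the route the paper indicates: as stated just before the theorem, the result is ``a direct consequence of the properties of the Kodama vector field and Theorem~\ref{th:no-min} (or Proposition~\ref{prop:1})'', and you have simply spelled out that deduction in detail. Your lengthy treatment of the borderline case in the final paragraph is unnecessary here, because in the paper's terminology ``f-trapped'' means strictly $\theta^+<0$ and $\theta^-<0$ (Table~\ref{ta1}), so $\vec H|_q$ is future-pointing \emph{timelike} and hence $\xi_\rho H^\rho|_q<0$ strictly whenever $\xiv|_q$ is future-pointing causal; the exceptional possibility in remark~4 after Theorem~\ref{th:no-min} concerns only \emph{weakly} f-trapped surfaces.
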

The location of $\Sigma$ acquires therefore an unexpected importance, and this depends in particular on whether $8\dot m_0>(1-2m'_0)^2$ or not. 
Here $\dot m_0$ and $m'_0$ are the limits of $\frac{\partial m}{\partial v}$ and $\frac{\partial m}{\partial r}$ when approaching $r=0$, respectively.
In the former case, $\Sigma$ does penetrate the flat region. It may not be so in the other cases. See Ref.\refcite{BS1} for details. 
\begin{figure}[!ht]
\includegraphics[width=15cm]{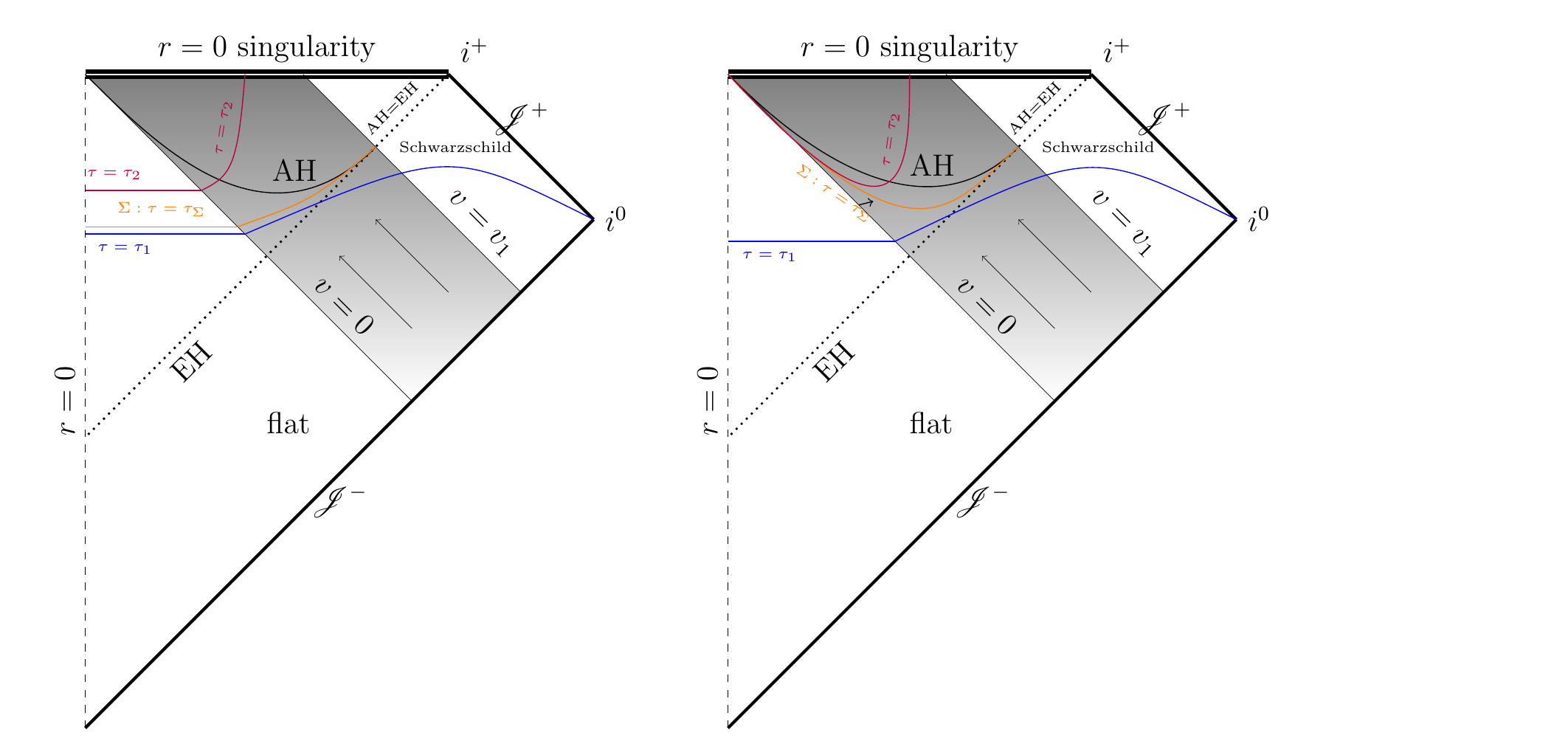}
\caption{The past barrier $\Sigma$ represented for some particular cases of the Vaidya spacetime (\ref{vaidya}). The hypersurfaces $\tau=\tau_1$ are spacelike for $\tau_1<\tau_\S$, while they are partly spacelike, partly timelike for values of $\tau_2> \tau_\S$, as shown in the figure. The limit case defines $\S:\, \tau=\tau_\S$, which happens to be a past barrier for f-trapped surfaces due to the convexity properties of the Kodama vector field. This past barrier $\S$ can enter the flat region or not, and this depends on the properties of the mass function close to the upper left corner with $r=0$. Here, two possibilities are depicted. The case where $\S$ never enters the flat region of the spacetime (right) and the opposite case where it does penetrate the flat region (left).}
\label{f11}
\end{figure}

\section{Some properties of $\B$, and about its location}
Some elaborations using the previous results on the Kodama level function $\tau$ allows one to derive further properties of the boundary $\B$, and to put severe restrictions on its location. To that end,  set $\tau_\B\equiv \inf_{x\in \B} \tau |_x$
where $\tau =$const. are the level hypersurfaces of $\xiv$.  Then the following set of results were obtained in Ref.\refcite{BS1}, where the reader may consult the proofs.
\begin{proposition}
The connected component $\B_1$ cannot have a positive minimum value of $r$, and furthermore
$$
\tau_\B = \inf_{x\in \B_1} \tau |_x=\tau_\Sigma \, .
$$
\end{proposition}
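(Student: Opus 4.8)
The statement splits into the chain of equalities $\tau_\B=\inf_{x\in\B_1}\tau|_x=\tau_\Sigma$ and the claim that $r$ has no positive minimum on $\B_1$. The plan is to establish the two bounds $\tau_\B\geq\tau_\Sigma$ and $\inf_{x\in\B_1}\tau|_x\leq\tau_\Sigma$ separately, and then to argue the $r$--claim by contradiction. Throughout I use that $\mathscr{T}$ is open (strict future-trapping is an open condition), so $\B=\partial\mathscr{T}=\overline{\mathscr{T}}\setminus\mathscr{T}$, and that by Proposition~\ref{res:hypersurface} the set $\B_1$ is a spherically symmetric hypersurface without boundary, i.e.\ the rotation--orbit of a curve $\gamma_1$ in the $(v,r)$--base.

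For $\tau_\B\geq\tau_\Sigma$ I invoke the past--barrier theorem (no closed f-trapped surface enters $\{\tau\leq\tau_\Sigma\}$): every point lying on such a surface has $\tau>\tau_\Sigma$, so $\mathscr{T}\subseteq\{\tau>\tau_\Sigma\}$ and, on taking closures, $\B\subseteq\{\tau\geq\tau_\Sigma\}$, whence $\tau_\B\geq\tau_\Sigma$. Since $\B_1\subseteq\B$ this already gives $\tau_\Sigma\leq\tau_\B\leq\inf_{x\in\B_1}\tau|_x$, so it remains to prove $\inf_{x\in\B_1}\tau|_x\leq\tau_\Sigma$.

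For this I would produce points of $\overline{\mathscr{T}_1}$ with $\tau$ arbitrarily close to $\tau_\Sigma$. Recall $\tau_\Sigma=\inf_{x\in\AH_1}\tau|_x$, equal to the least upper bound of $\tau$ on EH. Pick marginally f-trapped round spheres $\varsigma\subset\AH_1\setminus\AH_1^{iso}$ with $\tau|_\varsigma$ as close to $\tau_\Sigma$ as desired (possible because $\tau$ is continuous and its infimum over $\AH_1$ is $\tau_\Sigma$; the minor caveat about isolated-horizon portions of $\AH_1$ is handled in Ref.~\refcite{BS1}). On such a $\varsigma$ one has $G_{\mu\nu}k^\mu k^\nu|_\varsigma\neq0$ by the very definition of $\AH_1^{iso}$, so Theorem~\ref{th:zigzag}, sharpened as in Theorem~\ref{th:tiny}, produces a closed f-trapped surface obtained from $\varsigma$ by a small deformation with a Legendre--polynomial profile, all of whose points therefore have $\tau$ arbitrarily close to $\tau|_\varsigma$, hence to $\tau_\Sigma$. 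Thus $\inf_{\overline{\mathscr{T}_1}}\tau=\tau_\Sigma$; since $\mathscr{T}_1\subseteq\{\tau>\tau_\Sigma\}$, any sequence $p_n\in\mathscr{T}_1$ with $\tau(p_n)\downarrow\tau_\Sigma$ accumulates on $\B_1=\partial\mathscr{T}_1$ (or escapes along $\{\tau=\tau_\Sigma\}$ to a corner of the conformal diagram, e.g.\ $i^0$), so $\inf_{x\in\B_1}\tau|_x=\tau_\Sigma$. Combining, $\tau_\B=\inf_{x\in\B_1}\tau|_x=\tau_\Sigma$. The delicate point here is only this last limiting argument, which is transparent when $\AH_1$ reaches $\Sigma$ at a finite point and needs slightly more care in the asymptotic case.

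The main obstacle is the $r$--claim, and I would argue by contradiction. Suppose $r|_{\B_1}$ attains a minimum $r_0>0$ at some $q\in\B_1$. Since $\{r<2m\}\subset\mathscr{T}_1$ and, by Theorem~\ref{th:zigzag}, $\AH_1\setminus\AH_1^{iso}\subset\mathscr{T}_1$ while $\B_1\cap(\AH_1\setminus\AH_1^{iso})=\emptyset$, the hypersurface $\B_1$ is contained in $\overline{\{r>2m\}}$; hence $q$ lies in the untrapped region (or on $\AH_1^{iso}$), and near $q$ the base curve $\gamma_1$ is tangent to $\{r=r_0\}$ and stays in $\{r\geq r_0\}$, so $\mathscr{T}_1$ lies locally on one definite side of $\B_1$ whereas the complementary side contains round spheres with $r$ slightly below $r_0$ that are \emph{not} in $\mathscr{T}_1$. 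The idea is then to take a closed f-trapped surface through a point of $\mathscr{T}_1$ arbitrarily near $q$ and deform it, using the variation of the null expansions recalled in the proof of Theorem~\ref{th:zigzag} with a profile supported near its point closest to $q$, so that this portion is pushed to strictly smaller $r$ while $\vec H$ remains future timelike everywhere; the resulting closed f-trapped surface would then meet a round sphere with $r<r_0$ on the wrong side of $\B_1$, a contradiction. The hard part is controlling this deformation globally: one must take the starting surface to coincide with round spheres away from $q$ (and away from the necessarily-present small portion inside $\{r<2m\}$), so that the expansion variations are governed by the explicit spherically symmetric formulas and closedness and the trapped condition are preserved. This is the technical core, carried out in Ref.~\refcite{BS1}; note that the cases it leaves open, $\inf_{\B_1}r=0$ or an infimum not attained, are precisely the permitted ones.
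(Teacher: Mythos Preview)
The paper does not actually contain a proof of this proposition: immediately before the statement it says explicitly that ``the following set of results were obtained in Ref.~\refcite{BS1}, where the reader may consult the proofs.'' So there is no in-paper argument to compare your proposal against; both you and the paper ultimately defer to \refcite{BS1} for the substance.

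That said, a few remarks on your sketch. Your lower bound $\tau_\B\geq\tau_\Sigma$ via the past-barrier theorem is clean and correct. For the reverse inequality, your idea of perturbing marginally trapped round spheres on $\AH_1$ with $\tau$ near $\tau_\Sigma$ is on the right track, but the step ``any sequence $p_n\in\mathscr{T}_1$ with $\tau(p_n)\downarrow\tau_\Sigma$ accumulates on $\B_1$'' is not justified: you yourself note the sequence could run off to $i^0$ or to the merger corner of $\AH_1$ with EH without hitting a boundary point at finite distance. What is actually needed is the topological input that $\B_1$ is a spherically symmetric hypersurface \emph{without boundary} separating $\mathscr{T}_1$ (which contains $\{r<2m\}$) from the exterior region containing the flat zone; this forces the base curve $\gamma_1$ to approach the same corner as $\AH_1$ and $\Sigma$, where $\tau\to\tau_\Sigma$. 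That argument and the $r$-minimum argument are really intertwined rather than independent, which is why you find yourself deferring both ``hard parts'' to \refcite{BS1}.

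For the $r$-claim, your contradiction set-up is reasonable, but the proposed deformation step (pushing a nearby closed f-trapped surface to smaller $r$ while keeping it trapped) is precisely the nontrivial content, and you have not supplied it; you acknowledge as much. So your proposal is an honest outline rather than a proof, and in that respect it matches the paper, which also declines to give one.
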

\begin{corollary} $\B_1\subset (\R_1\cup \AH_1) \cap \{\tau\geq \tau_\S\}$
and $\B_1$ must merge with, or approach asymptotically, $\Sigma$, $\AH_1$ and EH in such a way that $\B_1\cap (\AH_1\backslash \AH_1^{iso}) =\emptyset$
if $G_{\mu\nu}k^\mu k^\nu |_{AH_1\backslash EH}>0$. 
 
Furthermore, $\B_1$ cannot be non-spacelike everywhere.
\end{corollary}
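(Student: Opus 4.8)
The plan is to obtain the corollary by assembling results already established: $\B$ is a spherically symmetric hypersurface without boundary (Proposition~\ref{res:hypersurface}); $\AH\backslash\AH^{iso}\subset\mathscr{T}$ and hence $\B\cap(\AH\backslash\AH^{iso})=\emptyset$ (Theorem~\ref{th:zigzag} and the observation following it); no closed f--trapped surface reaches $\{\tau\le\tau_\Sigma\}$ (the past--barrier theorem); the Kodama field $\xiv=e^{-\beta}\partial_v$ is future--pointing, hypersurface--orthogonal with level function $\tau$, timelike on $\R_1$ and null on $\AH_1$, with the convexity $P^{\mu\nu}(\lie_{\xiv}g|_S)_{\mu\nu}\ge0$ at $\tau$--extremes needed for Theorem~\ref{th:no-min}; and the preceding Proposition, giving $\tau_\B=\inf_{\B_1}\tau=\tau_\Sigma$ and that $\B_1$ has no positive minimum of $r$. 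Each clause of the corollary then follows by combining two or three of these.

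First I would pin down the location of $\B_1$. Every point with $r<2m$ lies on the round sphere through it, which has $\theta^+_{sph}<0$ and $\theta^-_{sph}<0$ and is therefore a closed f--trapped surface; hence $\{r<2m\}\subset\mathscr{T}$, and, this set being open, $\{r<2m\}\cap\B=\emptyset$. Since $\AH_1=\{r=2m\}$ separates $\R_1$ (the $\{r>2m\}$ component containing the flat zone) from the region carrying f--trapped spheres, this exclusion together with the connectedness of the component $\B_1$ attached to $\AH_1$ and the preceding Proposition yields $\B_1\subset\R_1\cup\AH_1$. The past--barrier theorem gives $\mathscr{T}_1\subset\{\tau>\tau_\Sigma\}$, so $\B_1\subset\overline{\mathscr{T}_1}\subset\{\tau\ge\tau_\Sigma\}$, the bound being sharp by $\tau_\B=\tau_\Sigma$. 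This is the asserted inclusion $\B_1\subset(\R_1\cup\AH_1)\cap\{\tau\ge\tau_\Sigma\}$.

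For the global shape, $\B_1$ is anchored at $\AH_1$/EH by its very definition and, being an edgeless hypersurface confined to the ``wedge'' $(\R_1\cup\AH_1)\cap\{\tau\ge\tau_\Sigma\}$ whose $\tau$--infimum is exactly $\tau_\Sigma$, it cannot stop in the interior; it has to run to the boundary pieces of the wedge, namely $\Sigma=\{\tau=\tau_\Sigma\}$, $\AH_1$, the event horizon and the asymptotic loci $i^0$, $i^+$, $\scri^+$. Hence $\B_1$ merges with or approaches asymptotically $\Sigma$, $\AH_1$ and EH. That $\B_1\cap(\AH_1\backslash\AH_1^{iso})=\emptyset$ under $G_{\mu\nu}k^\mu k^\nu|_{\AH_1\backslash EH}>0$ is the observation after Theorem~\ref{th:zigzag}: wherever $G_{\mu\nu}k^\mu k^\nu|_{\AH}\ne0$ there are closed f--trapped spheres straddling $\AH_1$, so those points are interior to $\mathscr{T}$, while the hypothesis forces $\AH_1^{iso}\cap(\AH_1\backslash EH)=\emptyset$, i.e.\ $\AH_1\backslash EH\subset\AH_1\backslash\AH_1^{iso}$; hence $\B_1$ stays strictly on the $\R_1$--side of $\AH_1$ and can rejoin it only along EH, there only on isolated--horizon pieces. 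Theorem~\ref{th:AG} may be invoked to confirm that near $\AH_1$ the trapped region, and therefore $\B_1$, bends into $\R_1$ rather than into $D^-(\AH_1)$.

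The last clause, that $\B_1$ is not non--spacelike everywhere, is the delicate one. By spherical symmetry $\B_1$ is the orbit of a curve in the $2$--dimensional $(v,r)$--quotient and inherits its causal character, so the hypothesis is that this curve is causal throughout. On $\R_1$ the field $\xiv$ is timelike, so $\tau$ is there a time function with timelike gradient; at any $q\in\B_1\cap\R_1$ at which $\tau|_{\B_1}$ attains its infimum $\tau_\Sigma$ the tangential differential of $\tau$ along $\B_1$ vanishes, $\nabla\tau$ is then normal to $\B_1$ at $q$, and a timelike normal makes $\B_1$ \emph{spacelike} at $q$ --- a contradiction. The only way out is for $\B_1$ to meet the floor $\tau=\tau_\Sigma$ solely at the single sphere $\AH_1\cap\Sigma$ (where $\xiv$, $\Sigma$ and the limiting normal are null) or only in the asymptotic limit; but a causal edgeless hypersurface asymptotes to $\scri^\pm$ or $i^\pm$, never to spatial infinity, whereas $\mathscr{T}_1$ also spreads into $\R_1$ towards $i^0$ --- by the clairvoyance examples it contains closed f--trapped surfaces penetrating the flat region --- so $\B_1$ genuinely reaches that outer zone, where it runs alongside the spacelike $\Sigma$ and must itself be spacelike there. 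Converting this into a proof --- quantifying how far $\mathscr{T}_1$ spreads towards $i^0$ so as to exclude a degenerate null ``pinch'' of $\B_1$ along $\AH_1^{iso}$ --- is the main obstacle, and is exactly the point where the detailed estimates of Ref.~\refcite{BS1} are required.
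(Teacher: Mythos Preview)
Your treatment of the first three clauses --- the inclusion $\B_1\subset(\R_1\cup\AH_1)\cap\{\tau\ge\tau_\Sigma\}$, the merging with $\Sigma$, $\AH_1$ and EH, and the disjointness $\B_1\cap(\AH_1\setminus\AH_1^{iso})=\emptyset$ --- is correct and follows the paper's route: assemble Proposition~\ref{res:hypersurface}, Theorem~\ref{th:zigzag} and its consequence $\AH\setminus\AH^{iso}\subset\mathscr{T}$, the past-barrier theorem, and the immediately preceding Proposition ($\tau_\B=\tau_\Sigma$, no positive minimum of $r$ on $\B_1$).

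The last clause has a genuine gap, and it is not the one you think. Your contradiction rests on the claim that $\mathscr{T}_1$ (and hence $\B_1$) spreads into $\R_1$ towards $i^0$, which you justify via the clairvoyance constructions. But those constructions are carried out for one specific Vaidya mass function and exhibit only a \emph{finite} protrusion of closed f-trapped surfaces into the flat zone; nothing in the paper establishes that $\B_1$ extends towards $i^0$ in the general class of imploding spherically symmetric spacetimes assumed here, so your contradiction does not close. The argument the paper points to (in the paragraph following the later Proposition~\ref{prop}) is local rather than asymptotic: since $\tau_\B=\tau_\Sigma=\inf_{\AH_1}\tau$, the hypersurface $\B_1$ is squeezed from below by the spacelike $\Sigma$ and from the other side by $\AH_1$ as both converge to EH, and any spherically symmetric hypersurface confined to that collapsing wedge and approaching its vertex is forced to be spacelike near the merging. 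This needs no control of $\B_1$ at large $r$ or near $i^0$; the ``detailed estimates of Ref.~\refcite{BS1}'' you anticipate are not about quantifying clairvoyance at all, but about this squeezing at the EH end of the wedge.
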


\begin{theorem}
$(\B\backslash \AH_1^{iso})$ cannot be a marginally trapped tube, let alone a dynamical horizon.
Actually, $(\B\backslash \AH_1^{iso})$ does not contain any non-minimal closed weakly f-trapped surface.
\end{theorem}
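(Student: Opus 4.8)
\textbf{The plan} is to prove the second, stronger assertion --- that $\B\setminus\AH_1^{iso}$ contains no non-minimal closed weakly f-trapped surface --- and to read off the ``no MTT, no DH'' part from it. Indeed, if $\B\setminus\AH_1^{iso}$ were an MTT it would be foliated by closed marginally (hence weakly) f-trapped surfaces lying in $\B\setminus\AH_1^{iso}$; by the stronger assertion every such leaf would be minimal, so $\vec H$ would vanish on the whole tube, making it a hypersurface foliated by a one-parameter family of closed minimal surfaces --- which is excluded, closed minimal surfaces being exceptional (recall the observation after (\ref{main}) that $\oint_S P^{\mu\nu}(\lie_{\xiv} g|_S)_{\mu\nu}=0$ would have to hold for \emph{every} $\xiv$). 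A DH, being a spacelike MTT with $\theta^+=0,\theta^-<0$ on its leaves, has non-minimal leaves outright.

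\textbf{Localisation.} Let $S$ be a closed weakly f-trapped surface with $S\subset\B\setminus\AH_1^{iso}$; being connected it lies in a single connected component of $\B$, say $\B_1$ (a general component is treated the same way once the structural results of the previous sections are granted for it). By the past-barrier theorem $\mathscr{T}\subset\{\tau>\tau_\Sigma\}$, and by the Corollary on the location of $\B_1$ one has $\B_1\subset(\R_1\cup\AH_1)\cap\{\tau\ge\tau_\Sigma\}$ with $\B_1\cap(\AH_1\setminus\AH_1^{iso})=\emptyset$; since $S$ avoids $\AH_1^{iso}$ it avoids $\AH_1$ entirely, hence $S\subset\R_1$, i.e.\ $r>2m$ throughout $S$. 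On $\R_1$ the Kodama field $\xiv=e^{-\beta}\partial_v$ is future-pointing \emph{timelike} and hypersurface-orthogonal, $\xi_\mu=-F\,\partial_\mu\tau$, so the apparatus of Theorem~\ref{th:no-min} and of the Kodama-field section is available on $S$; in particular $\xi_\rho H^\rho\le 0$ on $S$, with equality exactly where $\vec H=0$, because $\vec H$ is future causal ($S$ weakly f-trapped) and $\xiv$ is future timelike.

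\textbf{The elliptic step.} Evaluate the main formula (\ref{main}) on $S$. Writing $\overline\xi^C=-\bar F\,\overline\nabla^C\bar\tau$ gives $\overline\nabla_C\overline\xi^C=-\bar F\,\Delta_\gamma\bar\tau-\langle d\bar F,d\bar\tau\rangle$, where $\Delta_\gamma$ is the Laplacian of the first fundamental form; the Lie-derivative computation of the previous section yields $\tfrac12 P^{\mu\nu}(\lie_{\xiv} g|_S)_{\mu\nu}=\tfrac1r e^{\beta}\tfrac{\partial m}{\partial v}\,\bar\ell_A\bar\ell^A+\beta_{,r}\bar F\,\langle d\bar r,d\bar\tau\rangle$, the $\beta_{,r}$-term being \emph{linear in $d\bar\tau$}. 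Substituting into (\ref{main}) and dividing by $\bar F>0$ one gets $\Delta_\gamma\bar\tau+b^A\partial_A\bar\tau=\tfrac1{\bar F}\bigl(\xi_\rho H^\rho-\tfrac1r e^{\beta}\tfrac{\partial m}{\partial v}\,\bar\ell_A\bar\ell^A\bigr)$, where $b^A$ gathers the $d\bar F$ and $\beta_{,r}\,d\bar r$ pieces, so that the operator on the left is second order, uniformly elliptic on the compact $S$, and has \emph{no zeroth-order term}. The right-hand side is $\le 0$ everywhere on $S$: $\xi_\rho H^\rho\le 0$ as noted, $\partial m/\partial v\ge 0$ on $\{r\ge 2m\}$ by the dominant energy condition (and the infalling hypothesis), and $\bar\ell_A\bar\ell^A\ge 0$. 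Since $S$ is compact without boundary, the Hopf strong minimum principle forces $\bar\tau$ to be constant on $S$; hence $\bar\xi_A=-\bar F\,\partial_A\bar\tau\equiv 0$, i.e.\ $\xiv$ is everywhere orthogonal to $S$. Feeding $\bar\xi_A\equiv 0$ back into (\ref{main}) kills both the divergence term and the $\beta_{,r}$-term, leaving $\xi_\rho H^\rho=\tfrac1r e^{\beta}\tfrac{\partial m}{\partial v}\,\bar\ell_A\bar\ell^A\ge 0$; combined with $\xi_\rho H^\rho\le 0$ this gives $\xi_\rho H^\rho\equiv 0$, and since $\xiv$ is timelike on $S$ this forces $\vec H\equiv 0$, so $S$ is minimal --- contradicting the choice of $S$.

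\textbf{Where the difficulty lies.} Two steps carry the weight. The first is the localisation: it uses the full force of the previous sections (the past barrier $\Sigma$, and that $\B_1$ can meet $\AH_1$ only along isolated-horizon portions, which itself needs $G_{\mu\nu}k^\mu k^\nu>0$ on $\AH_1\setminus EH$); this is exactly why $\AH_1^{iso}$ --- where $\xiv$ is merely null, so that ``$\xi_\rho H^\rho=0$'' no longer implies ``$\vec H=0$'' --- must be removed from the statement. The second, genuinely technical, point is the sign bookkeeping: integrating (\ref{main}) naively over $S$ fails, since $\oint_S P^{\mu\nu}(\lie_{\xiv} g|_S)_{\mu\nu}$ carries no definite sign once $\beta_{,r}\neq 0$; the device that saves the argument is to recognise the offending $\beta_{,r}$-term as a first-order term in $\bar\tau$, absorb it into the elliptic operator, and thereby reduce everything to a zeroth-order-free maximum principle on the closed surface $S$ with a manifestly non-positive source.
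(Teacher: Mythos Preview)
The paper does not prove this theorem in the text, deferring to Ref.~\refcite{BS1}. Your argument is correct and is built precisely from the machinery the paper assembles for this purpose: the localisation $\B_1\setminus\AH_1^{iso}\subset\R_1$ (where the Kodama field $\xiv$ is future-timelike), the main identity~(\ref{main}), and the explicit form of $(\lie_{\xiv}g)_{\mu\nu}$ computed in the Kodama section. The key technical step --- recognising that the sign-indefinite $\beta_{,r}$-piece of $P^{\mu\nu}(\lie_{\xiv}g)_{\mu\nu}$ is first order in $d\bar\tau$, so that~(\ref{main}) becomes $\Delta_\gamma\bar\tau+b^A\partial_A\bar\tau=(\mbox{nonpositive})$ with no zeroth-order term, whence the strong maximum principle on the closed $S$ forces $\bar\tau$ constant --- is exactly the device needed to upgrade the pointwise Theorem~\ref{th:no-min} (which only controls the local minimum of $\bar\tau$ and, in the weakly trapped case, leaves open the exceptional alternative in Remark~4) to the required global conclusion $\vec H\equiv 0$. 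This is the approach of Ref.~\refcite{BS1}.

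One small gap remains in your opening paragraph. By the paper's Definition~3.1 an MTT foliated entirely by closed \emph{minimal} surfaces is still an MTT, so ``closed minimal surfaces being exceptional'' is not a proof that $\B\setminus\AH_1^{iso}$ cannot be such a tube. A cleaner finish is already implicit in your elliptic step: once $\bar\tau$ is constant and $\vec H=0$, the vanishing of the source also forces $\tfrac{\partial m}{\partial v}\,\bar\ell_A\bar\ell^A\equiv 0$ on $S$. If $\bar\ell_A\equiv 0$ then, together with $\bar\xi_A\equiv 0$, both $v$ and $r$ are constant on $S$, so $S$ is a round sphere --- but round spheres in $\R_1$ have $\theta^+_{sph}>0$ and are not minimal. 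Thus a closed minimal $S\subset\B_1\setminus\AH_1^{iso}$ could occur only where $\partial m/\partial v$ vanishes along $S$, a degenerate case one rightly sets aside. Your handling of the DH case (non-minimal leaves by $\theta^-<0$) is fine as it stands.
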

This theorem came as a surprise because there was a spread belief, due specially to some convincing arguments by Hayward\cite{Hay}, on the contrary. As clearly explained in Ref.\refcite{AK1}, the convincing arguments were based on assumptions that seemed quite natural intuitively, but that were very strong technically. It turns out that these assumptions, despite looking intuitively natural, were almost never met so that the intended derived result was essentially empty. 

Another consequence of Theorem \ref{th:no-min} is that 
$\B_1\backslash \AH_1^{iso}$ has to bend down in Kodama ``time'' $\tau$ (in the region $\{r>2m\}$ where the level function $\tau$ is a timelike coordinate), in analogous manner to what was shown in Fig.\ref{f3} for Minkowskian time.
\begin{proposition}
$\tau$ is a nonincreasing function of $r$ on any portion of the connected
component $\B_1$ which is locally to the past of $\mathscr{T}_1$, and it is actually strictly decreasing at least somewhere on $\B_1\backslash \AH_1^{iso}$.
In particular, 
$\B_1\cap (\Sigma\backslash EH)=\emptyset$.
\label{prop}
\end{proposition}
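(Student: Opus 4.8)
The plan is to pass to the $2$--dimensional Lorentzian quotient of the spherically symmetric spacetime, with coordinates $(v,r)$, in which, by Proposition \ref{res:hypersurface}, $\B_1$, $\mathscr{T}_1$, $\AH_1$, $\Sigma$ and the level hypersurfaces $\{\tau=\mathrm{const}\}$ all descend to curves and regions. Since $\B_1\subset\R_1\cup\AH_1$ lies in $\{r\ge 2m\}$, and on $\R_1$ the Kodama field $\xiv$ is future timelike so that $\tau$ is there a genuine time function, the portion of $\B_1$ locally to the past of $\mathscr{T}_1$ is a graph $\tau=\theta(r)$, $\theta(r)$ being the infimum of the Kodama times of the points of $\mathscr{T}_1$ of radius $r$; the first assertion is then ``$\theta$ non--increasing''. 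The engine is Theorem \ref{th:no-min} applied to $\xiv$: by the computation carried out above for the Kodama field one has $P^{\mu\nu}(\lie_{\xiv} g|_S)_{\mu\nu}\ge 0$ at every $\tau$--extreme of any surface contained in $\{r\ge 2m\}\cap J^+(EH)$, and by the barrier result $\mathscr{T}_1\cap\{r>2m\}\subset J^+(EH)$; hence no closed f--trapped surface has a local minimum of $\tau$ in $\R_1$, and by the remark following Theorem \ref{th:no-min} one can, starting from any of its points in $\R_1$, move along it with strictly decreasing $\tau$ until the radius drops to $2m$ (compactness and $\tau>\tau_\Sigma$ forbid leaving $\R_1$ elsewhere).

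To show $\theta$ non--increasing, fix $2m<r_2<r_1$. For each $\varepsilon>0$ pick, by definition of $\theta(r_1)$, a closed f--trapped surface $S$ through a point of radius $r_1$ and Kodama time $<\theta(r_1)+\varepsilon$. Following the descending path on $S$ from that point, the radius runs continuously from $r_1$ down to $2m$ while $\tau$ strictly decreases, so by the intermediate value theorem the path reaches radius $r_2$, at an instant where $\tau$ has a value $\tau'<\theta(r_1)+\varepsilon$; that point lies on $S$, hence in $\mathscr{T}_1$, so $\theta(r_2)\le\tau'<\theta(r_1)+\varepsilon$. Letting $\varepsilon\downarrow0$ yields $\theta(r_2)\le\theta(r_1)$, the claim --- precisely the ``bending down in Kodama time'' of $\B_1$ anticipated before the statement, the exact analogue of Fig.~\ref{f3}.

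For the strict decrease somewhere on $\B_1\backslash\AH_1^{iso}$, I would show $\theta$ cannot be constant along the whole of the relevant part of $\B_1\backslash\AH_1^{iso}$. Were $\theta\equiv c$ there, this part of $\B_1$ would be an open subset of a single level slice $\{\tau=c\}$, $c\ge\tau_\Sigma$, which is spacelike (null only on $\AH_1$) and, where it lies in $\R_1\cap J^+(EH)$, has positive semi--definite second fundamental form; then the closed f--trapped surfaces accumulating onto $\B_1\backslash\AH_1^{iso}$ from its future side would have to become tangent to that spacelike, positively curved slice on their past, which Proposition \ref{prop:1} forbids --- and, besides, a single level set cannot at the same time merge with, or approach, $\AH_1$, $\Sigma$ and EH, as $\B_1$ must by the corollary established above, while $\B_1$ is not non--spacelike everywhere. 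Hence $\tau$ must strictly decrease somewhere on $\B_1\backslash\AH_1^{iso}$.

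Finally, $\B_1\cap(\Sigma\backslash EH)=\emptyset$: since $\B_1\subset\{\tau\ge\tau_\Sigma\}$ and $\inf_{\B_1}\tau=\tau_\Sigma$ (both established above) and $\Sigma=\{\tau=\tau_\Sigma\}$, any $p\in\B_1\cap\Sigma$ realises the minimum $\tau_\Sigma$ of $\tau$ on $\B_1$; and since $\mathscr{T}_1\subset\{\tau>\tau_\Sigma\}$, $\mathscr{T}_1$ lies locally to the future of $\B_1$ at $p$, so $\theta$ is defined near $r(p)$. If $r(p)>2m$ then, $\theta$ being non--increasing and $\theta\ge\tau_\Sigma$, necessarily $\theta\equiv\tau_\Sigma$ for all radii $\ge r(p)$ near $p$, i.e.\ a piece of $\B_1$ would lie in $\Sigma\backslash EH$ --- excluded exactly as above. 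If $r(p)=2m$ then $p\in\AH_1$ with $\tau(p)=\tau_\Sigma$, and since $\B_1\cap(\AH_1\backslash\AH_1^{iso})=\emptyset$ and $\tau_\Sigma=\inf_{\AH_1}\tau$ is attained on $\AH_1$ only where it meets EH, $p\in EH$. Thus $p\in EH$ in all cases. The step I expect to be the real obstacle is the tangency argument invoked twice above: one needs that the generically non--spherical closed f--trapped surfaces accumulating onto $\B_1$ do so with enough regularity that a portion of $\B_1$ sitting in a spacelike level slice produces an honest past--tangency of some such surface with that slice, so that Proposition \ref{prop:1} can be applied --- an approximation/compactness statement about the trapped--surface family near $\B_1$ that the spherically symmetric reduction alone does not provide.
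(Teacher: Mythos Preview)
The paper does not actually prove this proposition: it is one of the results in Section~10 that are stated without argument, the reader being explicitly referred to Ref.~\cite{BS1} for the proofs. So there is no ``paper's own proof'' to compare against directly.

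Your argument for the first clause (that $\tau$ is nonincreasing in $r$ along the relevant portion of $\B_1$) is the natural one given the machinery in the paper, and is almost certainly what Ref.~\cite{BS1} does: pass to the spherically symmetric quotient, use Theorem~\ref{th:no-min} (via the Kodama field) to get a $\tau$--descending path on any closed f--trapped surface $S$ starting from a point in $\R_1$, use compactness of $S$ together with Lemma~\ref{lemma} (which forbids $S\subset\R_1$) to force that path to exit through $\AH_1$, and apply the intermediate value theorem to $r$ along the path. One minor point to tighten: the radius need not be monotone along your descending path, so you should phrase it as ``$r$ starts at $r_1$ and ends at a value $\le 2m$, hence by continuity takes the value $r_2$ somewhere, and at that point $\tau$ is already below the starting value'' --- which is exactly what you need.

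The gap you yourself flag is genuine and is the only real obstacle in your write--up. Proposition~\ref{prop:1} forbids a \emph{single} f--trapped surface from touching, to its past, a spacelike hypersurface with positive semi--definite second fundamental form; it says nothing about a \emph{sequence} of such surfaces accumulating onto that hypersurface without ever meeting it, which is all you have when you look at $\B_1$. So the ``tangency'' step, as you suspect, does not go through as stated, and this affects both the ``strictly decreasing somewhere'' clause and the final conclusion $\B_1\cap(\Sigma\backslash EH)=\emptyset$. For the former, your secondary observation is the sounder route: a constant--$\tau$ portion of $\B_1\backslash\AH_1^{iso}$ is incompatible with the preceding Corollary (which forces $\B_1$ to approach $\Sigma$, $\AH_1$ and EH, and forbids it from being non--spacelike everywhere), and with $\inf_{\B_1}\tau=\tau_\Sigma$ while $\tau$ is nonconstant along $\AH_1$ away from isolated portions. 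For the latter, once you have $\theta\equiv\tau_\Sigma$ for all $r\ge r(p)$ you need a genuine contradiction that does not rely on the tangency heuristic; the extrinsic restriction stated in the proposition immediately following this one in the paper (also proved in Ref.~\cite{BS1}) --- that spacelike past portions of $\B_1$ cannot have positive semi--definite second fundamental form --- is the natural tool, but you would have to establish that independently first.
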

\begin{figure}[h!]
\begin{center}
\includegraphics[height=6cm]{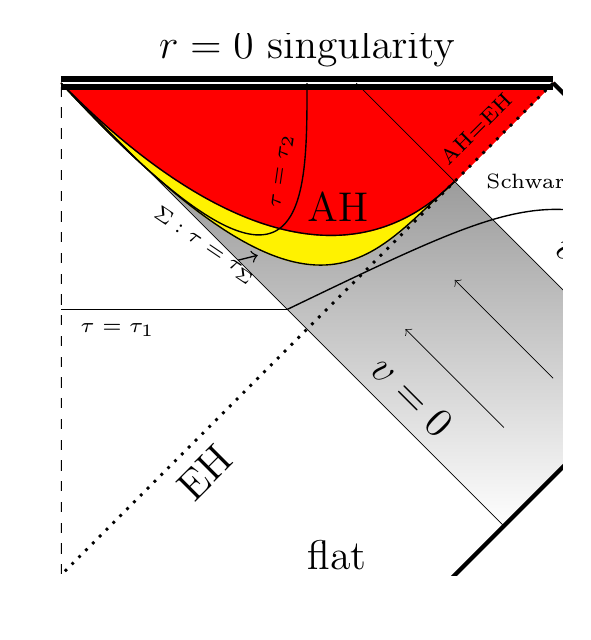}
\end{center}
\caption{This is an enlargement of the conformal diagram on the right in Fig.\ref{f11}, showing the allowed region for the location of the boundary $\B$. The part shown in red is known to be part of the f-trapped region $\mathscr{T}$. Theorem \ref{th:zigzag} tells us, however, that $\mathscr{T}$ extends beyond AH, and actually includes it. On the other hand, Proposition \ref{prop} implies that $\mathscr{T}$ can never touch $\S$ (outside EH). Therefore, the zone marked in yellow is the allowed region for the boundary $\B$, keeping in mind that it must be strictly inside the yellow zone: it can never touch $\S$, nor $\AH$, there.}
\label{f12}
\end{figure}
This result infers that $\B_1$ is to the future of $\Sigma$, and we already know that is has to be to the past of $\AH_1$. This also implies that
$\B_1$ is squeezed below by $\S$ and above by $\AH_1$ close to their merging with the event horizon, and thus $\B_1$ must be spacelike close to its merging with $\Sigma$, $\AH_1$ and EH.\cite{BS1}
An illuminating pictorial explanation of these results is represented, for a particular case, in Fig.\ref{f12}.

There remains, as an interesting puzzle, the question of where is {\em exactly} the boundary $\B$. This is an open question. There are some known restrictions on the 2nd fundamental form (extrinsic curvature) of $\B$\cite{BS1}, but they are not sufficient to completely determine the position of $\B$ in generic spherically symmetric spacetimes.

The more restrictive known property on $\B$ is given by the following result\cite{BS1}
\begin{proposition}
Any spacelike portion of the connected component $\B_1$ which is
locally to the past of $\mathscr{T}_1$ has a second fundamental
form with a non-positive (and strictly negative whenever $\B_1$ is not tangent to a $\tau$=const hypersurface) double
eigenvalue. In particular,
it cannot have a positive semidefinite second fundamental
form there.
\end{proposition}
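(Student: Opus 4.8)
The plan is to reduce the claim to the sign of one component of the future unit normal of $\B_1$, and then fix that sign using the Kodama level function $\tau$ and Proposition~\ref{prop}. Geometrically, by Proposition~\ref{res:hypersurface} the set $\B_1$ is a spherically symmetric hypersurface, and since its spacelike portions cannot meet the null isolated-horizon set $\AH_1^{iso}$ while $\B_1\cap(\AH_1\backslash\AH_1^{iso})=\emptyset$, any such portion $N$ lies in $\{r>2m\}$. Let $\vec u$ be the future-pointing unit normal of $N$; by rotational invariance its second fundamental form relative to $\vec u$ satisfies $\mathcal{K}_{AB}=\chi\,\gamma_{AB}$ on the round-sphere block (with $\gamma_{AB}$ the round-sphere metric), so $\chi$ is precisely its double eigenvalue and the target is $\chi\leq 0$.

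I would first express $\chi$ through the round spheres. Take a round sphere $\varsigma\subset N$; its normal plane in the spacetime is spanned by $\vec u$ and by the unit spacelike vector $\vec m$ tangent to $N$ and orthogonal to $\varsigma$. Decomposing the shape tensor of $\varsigma$ in this basis and using the identity $\vec u\cdot\vec{K}^{\varsigma}_{AB}=e^{\mu}_{A}e^{\nu}_{B}\nabla_{\mu}u_{\nu}=\mathcal{K}_{AB}$ (the sign convention of Section~2, cf.\ (\ref{proj})), the trace over $A,B$ gives $\vec H_{sph}=-2\chi\,\vec u+(\,\cdot\,)\vec m$, so that $\chi=\tfrac12\,\vec H_{sph}\cdot\vec u$. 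Inserting the explicit mean curvature vector (\ref{Hspheres}) and using that $g_{rr}=0$ in advanced coordinates, the cross terms cancel and one is left with $\vec H_{sph}\cdot\vec u=\tfrac{2}{r}\,u^{r}$, hence $\chi=u^{r}/r$. Thus everything reduces to proving $u^{r}\leq 0$ on $N$.

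To fix this sign I would bring in the Kodama field $\xiv=e^{-\beta}\partial_v$, which is hypersurface-orthogonal with $\tau_{,v}=e^{\beta}(1-2m/r)/F$, $\tau_{,r}=-1/F$ and $F>0$. Writing the tangent to $\B_1$ in the $v$–$r$ plane as $\vec t=\partial_v+\tfrac{dr}{dv}\partial_r$, spacelikeness of $N$ together with $r>2m$ forces $\tfrac{dr}{dv}>\tfrac12 e^{\beta}(1-2m/r)>0$; hence $r$ is increasing along $\B_1$, so Proposition~\ref{prop} (``$\tau$ nonincreasing in $r$'') reads $\tfrac{d\tau}{dv}\big|_{\B_1}\leq 0$. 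Since $g(\vec u,\partial_r)=e^{\beta}u^{v}$ is positive ($\partial_r$ is past-pointing null and $\vec u$ future-pointing), $u^{v}>0$; and $g(\vec u,\vec t)=0$ yields $u^{r}=u^{v}\big(e^{\beta}(1-2m/r)-\tfrac{dr}{dv}\big)=u^{v}F\,\tfrac{d\tau}{dv}\big|_{\B_1}\leq 0$. Therefore $\chi=u^{r}/r\leq 0$, with equality precisely when $\tfrac{d\tau}{dv}|_{\B_1}=0$, i.e.\ when $\vec t$ —and hence, since the round-sphere directions are automatically tangent to the hypersurfaces $\tau=\mathrm{const}$, all of $\B_1$— is tangent to such a level hypersurface. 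This is the asserted strict negativity away from tangency points.

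For the final sentence, a positive semidefinite $\mathcal{K}$ would need $\chi\geq 0$, hence $\chi\equiv 0$ on $N$, hence (by the strictness) $N$ tangent to the $\tau$-foliation everywhere, i.e.\ $N$ a piece of some level hypersurface $\{\tau=\tau_c\}$; the case $\tau_c=\tau_\S$ contradicts $\B_1\cap(\S\backslash\mathrm{EH})=\emptyset$, and the remaining case is discarded either via the strict decrease of $\tau$ that Proposition~\ref{prop} guarantees somewhere on $\B_1\backslash\AH_1^{iso}$, or, more directly, by applying Proposition~\ref{prop:1} to closed f-trapped surfaces accumulating at $N$ from the future. The step I expect to be the main obstacle is not any of the calculations but keeping all orientations and sign conventions mutually consistent in the Gauss-type identity $\chi=u^{r}/r$ —the future/past choices for $\vec u$ and $\partial_r$, the sign convention for $K_{AB}(\vec n)$, and the orientation of $\vec u$ imposed by ``$\mathscr{T}_1$ lying to the future''— together with making the closing ``positive semidefinite is impossible'' fully rigorous, since that genuinely relies on a limiting argument for the family of trapped surfaces rather than on the pointwise identity alone.
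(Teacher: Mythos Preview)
This proposition is stated in the paper without proof (it is quoted from Ref.~\refcite{BS1}), so there is no in-text argument to compare against directly. Judged on its own, your argument for the main clause is correct and is the natural route given the tools assembled in Sections~6--9: spherical symmetry of $\B_1$ (Proposition~\ref{res:hypersurface}) forces the second fundamental form to have a repeated eigenvalue $\chi$ on the round-sphere block; the identity $u_\mu H^\mu_{sph}=\tfrac{2}{r}\,u^r$ follows cleanly from (\ref{Hspheres}) and the advanced-coordinate metric, giving $\chi=u^r/r$; and then $u^r=u^v F\,\tfrac{d\tau}{dv}\big|_{\B_1}\le 0$ by Proposition~\ref{prop} together with $u^v>0$ (since $\partial_r=-e^\beta\vec\ell$ is past-pointing null). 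The equality case correctly characterises tangency to the $\tau$-foliation. Your self-diagnosis about sign conventions is apt, but you have in fact navigated them consistently.

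The one soft spot is the ``in particular'' clause. Your reduction (positive semidefinite $\Rightarrow\chi\equiv 0\Rightarrow N\subset\{\tau=\tau_c\}$) is fine, but neither of the two closures you propose is complete as written: the ``strictly decreasing somewhere on $\B_1\backslash\AH_1^{iso}$'' part of Proposition~\ref{prop} is a global statement about $\B_1$ and does not by itself exclude $\chi\equiv 0$ on a particular spacelike portion $N$; and the appeal to Proposition~\ref{prop:1} for f-trapped surfaces ``accumulating from the future'' needs an argument that some such surface actually \emph{touches} $N$ (or an extension of it with the same second fundamental form), which is precisely the limiting step you flag as non-trivial. The paper itself does not supply this step either, so your argument is at the same level of completeness as the text; a fully rigorous closure would require the more detailed analysis in Ref.~\refcite{BS1}.
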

Actually, one can find stronger restrictions on the extrinsic properties of $\B$, but they are out of the scope of this contribution.

\section{Black Holes' Cores}
\label{sec:cores}
Let me summarize some of the main conclusions derived so far concerning trapped surfaces, MTTs, EH, $\B$ and the trapped region $\mathscr{T}$. 

The clairvoyance property of trapped surfaces is inherited by everything based on them, 
such as marginally trapped tubes including dynamical horizons. It implies that EH is teleological and that $\B$ is also non-local, even penetrating sometimes into flat portions of the spacetime.
In conjunction with the non-uniqueness of dynamical horizons, this poses a fundamental puzzle for the physics of black holes.
Four possible solutions have been put forward\cite{AK1,BS1,B,GJ,NJKS,Kri} 
\begin{enumerate}
 \item one can rely on the old and well defined event horizon, and try to put up with its teleological properties. 
 \item one can accept the non-uniqueness of MTTs and treat all possible MTTs and dynamical/trapping horizons on equal footing. 
 \item one can also use the boundary $\B$ as defined above, despite its non-local properties. 
 \item or one can try to define a preferred  MTT. Hitherto, the only proposal I am aware of was presented in Ref.\refcite{GJ}, based on an evolution principle for the area (entropy) of the marginally trapped surfaces foliating the MTT.
\end{enumerate}

In Ref.\refcite{BS1} we have put forward a novel strategy. The idea is based on the simple question: 
{\em what part of the spacetime is absolutely indispensable for the existence of the black hole?}
Surely enough, any flat region is certainly not essential for the existence of the black hole.
What is?
\begin{definition}
A region $\mathscr{Z}$ is said to be a {\em core} of the f-trapped region $\mathscr{T}$ 
if it is a minimal closed connected set that needs to be removed from the spacetime in order to get rid of all closed f-trapped 
surfaces in $\mathscr{T}$, 
and such that any point on the boundary $\partial\mathscr{Z}$ is connected to $\B=\partial \mathscr{T}$ 
in the closure of the remainder.
\end{definition}
\begin{remark}
Here, ``minimal" means that there is no other set $\mathscr{Z}'$ with the same properties and properly contained 
in $\mathscr{Z}$. 
The final technical condition is needed because one could identify a particular removable region 
to eliminate the f-trapped surfaces, 
excise it, but then put back a tiny but central isolated portion to make it smaller. 
However, this is not what one wants to cover with the definition.  
\end{remark}
Obviously, $\mathscr{Z}\subset \mathscr{T}$, but in general $\mathscr{Z}$ is substantially smaller than the corresponding trapped region $\mathscr{T}$. An example of a core is given by the dust Robertson-Walker model of Fig.\ref{f3}, where the region shown in purple is a core of the larger red region $\mathscr{T}$. If the purple region is removed from the spacetime, then no closed f-trapped surface remains. This example demonstrates that cores are not unique: 
one can choose any other region $\mathscr{Z}$ equivalent to the chosen one by moving all its points by the group of symmetries on each homogeneous spatial slice of the Robertson-Walker metric.
This kind of non-uniqueness is somehow irrelevant, being due to the existence of a high degree of symmetry. Nevertheless, even in less symmetric cases the uniqueness of the cores cannot be assumed beforehand, and one can actually prove that it does not hold in general, see Proposition \ref{prop:} below.

\begin{theorem}
In spherically symmetric spacetimes (with $\AH^{iso}\backslash EH=\emptyset$)  the region 
$$\mathscr{Z}\equiv \{r\leq 2m(v,r)\}$$
 is a core of the f-trapped region. 
\end{theorem}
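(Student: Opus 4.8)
The plan is to check, clause by clause, that $\mathscr{Z}=\{r\le 2m(v,r)\}$ satisfies the definition of a core of $\mathscr{T}$: that it is a closed connected set whose excision from $\varietat$ destroys \emph{every} closed f-trapped surface, that it is minimal with that property, and that every point of $\partial\mathscr{Z}$ is joined to $\B=\partial\mathscr{T}$ inside the closure of the remainder $\varietat\setminus\mathscr{Z}$. Closedness is immediate ($r$ and $m$ are continuous), and one takes the region inside $\AH$ to be connected in the spacetimes of interest. For the last clause one notes that $\partial\mathscr{Z}=\AH=\{r=2m\}$ and that $\overline{\varietat\setminus\mathscr{Z}}=\{r\ge 2m\}$; since the localisation results already established put $\B$ inside $\R_1\cup\AH_1\subset\{r\ge 2m\}$ and make it merge with, or asymptote to, $\AH_1$ and the event horizon, any point of $\partial\mathscr{Z}$ can be connected, by travelling along $\AH$ towards EH, to $\B$ within $\{r\ge 2m\}$.

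The substantive step is the second clause, which amounts to the claim that $\varietat\setminus\mathscr{Z}=\{r>2m\}$ contains no closed f-trapped surface whatsoever. I would prove this with the Kodama vector field $\xiv=e^{-\beta}\partial_v$: it is future-pointing and hypersurface-orthogonal with level function $\tau$, and since $\xi_\mu\xi^\mu=-(1-2m/r)$ it is timelike exactly on $\{r>2m\}$; moreover, as computed in the preceding section, at any point $q$ of a surface $S$ with $\bar\xi_A=0$ one has $P^{\mu\nu}(\lie_{\xiv} g|_{S})_{\mu\nu}|_q=e^{\beta}\frac{2}{r}\frac{\partial m}{\partial v}\bar\ell_A\bar\ell^A\ge 0$ by the infall condition $\partial m/\partial v\ge 0$. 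Hence, if a closed f-trapped surface $S$ were contained in $\{r>2m\}$, then $\tau|_S$ would attain its minimum at some $q\in S$, producing a local minimum of $\tau$ on an f-trapped surface at a point where $\xiv$ is future-pointing (timelike) and $P^{\mu\nu}(\lie_{\xiv} g)_{\mu\nu}|_q\ge 0$ --- in direct contradiction with Theorem~\ref{th:no-min}. Therefore every closed f-trapped surface of $\mathscr{T}$ must intersect $\mathscr{Z}$, so removing $\mathscr{Z}$ kills them all.

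For minimality, let $\mathscr{Z}'\subsetneq\mathscr{Z}$ be closed; I would exhibit a closed f-trapped surface disjoint from $\mathscr{Z}'$. If $\mathscr{Z}'$ omits some point $p$ of $\AH$ not lying on the event horizon, then, because the hypothesis $\AH^{iso}\backslash EH=\emptyset$ gives $G_{\mu\nu}k^\mu k^\nu>0$ there so that Theorem~\ref{th:tiny} is in force, there is a closed f-trapped sphere through a neighbourhood of $p$ whose intersection with $\{r\le 2m\}$ is a portion small enough to lie in a neighbourhood of $p$ avoiding $\mathscr{Z}'$; as $\mathscr{Z}'\subseteq\{r\le 2m\}$, that surface is disjoint from $\mathscr{Z}'$. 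If instead $\mathscr{Z}'\supseteq\AH$, then $\mathscr{Z}'$ omits an interior point and $\{r<2m\}\setminus\mathscr{Z}'$ is a nonempty open set; being separated from $\{r>2m\}$ by $\AH\subseteq\mathscr{Z}'$ (where, by the previous paragraph, no closed f-trapped surface lives), and containing --- by the abundance of strictly f-trapped round spheres filling $\{r<2m\}$ --- a round sphere, or a small localised deformation of one, that stays clear of $\mathscr{Z}'$, it carries a closed f-trapped surface missing $\mathscr{Z}'$. Either way $\mathscr{Z}'$ is not a core, so $\mathscr{Z}$ is minimal; the technical details of these flexibility arguments are in Ref.~\refcite{BS1}.

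The hard part is the second clause --- ruling out closed f-trapped surfaces in $\{r>2m\}$ --- because it is there that the full structure is genuinely used: the causal character of the Kodama field on $\{r>2m\}$, the energy condition $\partial m/\partial v\ge 0$, and the convexity identity $P^{\mu\nu}(\lie_{\xiv} g)_{\mu\nu}|_q\ge 0$ that lets Theorem~\ref{th:no-min} bite at the minimum of $\tau$ on the surface; one must be careful that all of these hold at the particular extremal point produced. Once that is secured, the minimality argument is a routine deployment of Theorem~\ref{th:tiny} together with the local flexibility of trapped surfaces, and the closedness, connectedness and boundary clauses follow from the behaviour of $\AH$, $\B$ and EH already obtained.
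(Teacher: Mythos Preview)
Your proposal is correct and follows the paper's two-step strategy: (i) the Kodama vector field is future-pointing timelike on $\{r>2m\}$ with the right convexity property, so no closed f-trapped surface can survive there; (ii) Theorem~\ref{th:tiny} supplies minimality. The paper cites Lemma~\ref{lemma} for step (i) while you use Theorem~\ref{th:no-min}; your choice is in fact the cleaner one, since the inequality $P^{\mu\nu}(\lie_{\xiv} g)_{\mu\nu}\ge 0$ was only established at points with $\bar\xi_A|_q=0$ (the $\partial\beta/\partial r$ term in $\lie_{\xiv} g$ contributes otherwise), which is precisely the hypothesis of Theorem~\ref{th:no-min} at a local minimum of $\tau$.

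One remark on emphasis: you call the exclusion step the hard part, but the paper flags the minimality step as ``more important''. The exclusion is a direct corollary of the Kodama-field machinery already in place; it is Theorem~\ref{th:tiny} --- the construction of closed f-trapped spheres whose intersection with $\{r\le 2m\}$ is arbitrarily small near any non-isolated point of $\AH$ --- that does the real work, and it is also exactly where the hypothesis $\AH^{iso}\backslash EH=\emptyset$ enters. Your case split for minimality is fine; the second branch ($\mathscr{Z}'\supseteq\AH$) is really disposed of by the connectivity clause in the definition of core rather than by producing an f-trapped surface, as you implicitly acknowledge.
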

This follows firstly from the fact that removing $\mathscr{Z}$ from the spacetime short-circuits all possible closed f-trapped surfaces, as they cannot be fully contained in the region where the Kodama vector field is timelike (the complement of $\mathscr{Z}$) as a consequence of Lemma \ref{lemma}. And secondly and more importantly, from Theorem \ref{th:tiny}, which implies that one cannot remove a smaller region achieving the same goal. (This is where we need the condition of not having isolated-horizon portions in AH, but this is probably technical and the result will hold in general).

The identified cores happen to be unique with spherical symmetry.
\begin{proposition}
In spherically symmetric spacetimes (with $\AH^{iso}\backslash EH=\emptyset$) 
$\mathscr{Z}=\{r\leq 2m\}$ are the only spherically symmetric cores of $\mathscr{T}$. 
Therefore, $\partial\mathscr{Z}=\AH$ are the only spherically symmetric boundaries of a core.
\end{proposition}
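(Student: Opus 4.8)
The plan is to prove that every spherically symmetric core of $\mathscr{T}$ must coincide with $\mathscr{Z}=\{r\le 2m\}$; the statement about $\partial\mathscr{Z}=\AH$ is then immediate. So let $\mathscr{Z}'$ be an arbitrary spherically symmetric core (working, as elsewhere in this section, within the relevant connected component). First I would exploit spherical symmetry: invariance under the rotation group means $\mathscr{Z}'$ is a union of group orbits, and away from the axis $r=0$ (a lower-dimensional, harmless set) those orbits are exactly the round $(n-2)$-spheres $\{v=v_0,\,r=r_0\}$. Hence $\mathscr{Z}'$ either contains such a round sphere in its entirety or is disjoint from it; equivalently, $\mathscr{Z}'$ is the pullback to spacetime of some closed connected region of the $(v,r)$ quotient plane.

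The second step is to pin down which round spheres are closed f-trapped surfaces. From the explicit mean curvature vector (\ref{Hspheres}), or equivalently from $\theta^+_{sph}=\frac{e^\beta}{r}\bigl(1-\frac{2m}{r}\bigr)$ together with $\theta^-_{sph}=-\frac{2e^{-\beta}}{r}<0$, one reads that $\vec H_{sph}$ is future-pointing precisely on $\{r\le 2m\}$ (timelike for $r<2m$, null and proportional to $\vec k$ for $r=2m$). Therefore every round sphere sitting over a point of $\{r\le 2m\}$ is a closed f-trapped surface and hence lies in $\mathscr{T}$. Since removing $\mathscr{Z}'$ is required to destroy \emph{all} closed f-trapped surfaces of $\mathscr{T}$, each of these round spheres must meet $\mathscr{Z}'$; by the orbit structure above, $\mathscr{Z}'$ must then contain every one of them, i.e. $\mathscr{Z}'\supseteq\{r\le 2m\}=\mathscr{Z}$.

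Finally I would close the argument using minimality together with the previously proved theorem that $\mathscr{Z}=\{r\le 2m\}$ is itself a core. That theorem tells us $\mathscr{Z}$ is a closed connected set whose removal eliminates all closed f-trapped surfaces of $\mathscr{T}$ and every point of whose boundary $\AH$ is joined to $\B$ in the closure of the remainder; in other words, $\mathscr{Z}$ satisfies all the conditions defining a core candidate, and $\mathscr{Z}\subseteq\mathscr{Z}'$. If the inclusion were strict, $\mathscr{Z}$ would be such a set \emph{properly} contained in the core $\mathscr{Z}'$, contradicting the minimality clause of the definition. Hence $\mathscr{Z}'=\mathscr{Z}$, and consequently $\partial\mathscr{Z}'=\{r=2m\}=\AH$, which proves that these are the only spherically symmetric boundaries of a core.

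As for the difficulty: there is essentially no new hard estimate here — the substance (that $\mathscr{Z}$ cannot be shrunk, hence is a \emph{minimal} such set) lives in the previous theorem and ultimately rests on Lemma \ref{lemma} and Theorem \ref{th:tiny}, both already available. The only points demanding a little care are formal: checking that spherical symmetry really forces the ``union of round spheres'' structure, and making sure the marginal spheres $r=2m$ also qualify as (weakly) f-trapped so that the inclusion $\mathscr{Z}'\supseteq\mathscr{Z}$ reaches the whole of $\{r\le 2m\}$ rather than just its interior (a closure argument, using that $\mathscr{Z}'$ is closed, would patch this in any case). The conceptual heart is simply the observation that any rival spherically symmetric core is forced to contain all of $\mathscr{Z}$, after which minimality does the rest.
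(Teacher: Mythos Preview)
The paper does not supply an explicit proof of this proposition in the text; it is stated immediately after the theorem that $\mathscr{Z}=\{r\le 2m\}$ is a core, with the detailed arguments deferred to Ref.~\refcite{BS1}. Your argument is correct and is precisely the natural one suggested by the paper's setup: a spherically symmetric closed set is a union of round spheres, every round sphere with $r<2m$ is a genuine closed f-trapped surface that any core must intersect (hence contain), closedness then gives $\mathscr{Z}'\supseteq\{r\le 2m\}=\mathscr{Z}$, and the minimality clause of the core definition together with the previous theorem forces $\mathscr{Z}'=\mathscr{Z}$.

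One small clarification on your self-flagged point: you are right that the marginal spheres at $r=2m$ are only \emph{marginally} f-trapped, not f-trapped in the strict sense used to define $\mathscr{T}$, so they do not by themselves force $\{r=2m\}\subset\mathscr{Z}'$. Your closure patch is the correct fix and is sound here, since $\AH=\{r=2m\}$ is the topological boundary of $\{r<2m\}$ and cores are closed by definition. With that caveat handled, the proof is complete.
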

The two previous results are surprising and may have a deep meaning, because the trapped regions  and their cores are global concepts, and in that sense they share the teleological and/or clairvoyant properties of EH and of trapped surfaces. However, we have identified at least one boundary of a core which happens to be a MTT ---and a very good one in spherical symmetry: the unique one respecting the symmetry---, and MTTs are quasi-local objects, they do not need to know future causes or to be aware of things that happen elsewhere. A full interpretation of this surprising result may lead to a better understanding of BHs and of their boundaries.

It arises as an important problem the question of the uniqueness of cores, and their boundaries. As mentioned before, cores are not unique and one can prove the existence of 
non-spherically symmetric cores in spherically symmetric spacetimes\cite{BS1}.
\begin{proposition}\label{prop:}
There exist non-spherically symmetric cores of the f-trapped region in spherically symmetric spacetimes. 
\end{proposition}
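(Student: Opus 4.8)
The plan is to reduce the statement to the uniqueness result just proved. Since in these spacetimes $\mathscr{Z}=\{r\le 2m\}$ is the \emph{only} spherically symmetric core of $\mathscr{T}$, it is enough to exhibit \emph{one} core $\mathscr{Z}_*$ with $\mathscr{Z}_*\neq\{r\le 2m\}$: any such $\mathscr{Z}_*$ is then, by that very uniqueness, automatically non-spherically symmetric. Throughout I work in a spherically symmetric spacetime with $\AH^{iso}\backslash EH=\emptyset$ and $\mathscr{T}\neq\emptyset$, in which we already know $\{r\le 2m\}$ to be a core.

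First I would produce a (generally non-minimal) ``core candidate'' that fails to contain a piece of $\{r\le 2m\}$. Pick a point $p_0\in\AH\backslash EH$, so that $p_0$ lies in a region of bounded geometry and $G_{\mu\nu}k^\mu k^\nu|_{p_0}\neq 0$. Let $R\subset\varietat$ be a small open set made of a small ball around $p_0$ together, if necessary, with a thin open tube joining that ball to a point of $\B$, the tube routed so as to stay in the non-singular part of the spacetime. For $R$ thin and small enough, $R$ contains no closed f-trapped surface: in a region of controlled geometry the integral identity (\ref{main}) evaluated on an (approximately Killing, flat) future timelike vector field forces any closed weakly f-trapped surface to be almost minimal, hence to have large mean curvature nowhere, which is impossible inside a small ball or a thin tube --- this is the same mechanism that excludes closed trapped surfaces in flat and stationary spacetimes. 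Put $\mathscr{Z}_1\equiv\varietat\backslash R$. Then $\mathscr{Z}_1$ is closed and connected; removing it leaves exactly $R$, which carries no closed f-trapped surface, so $\mathscr{Z}_1$ short-circuits all closed f-trapped surfaces; and the boundary clause holds, since $\partial\mathscr{Z}_1=\partial R$ while $\overline R$ is connected and meets $\B$, so each point of $\partial\mathscr{Z}_1$ is joined to $\B$ within $\overline R$. Crucially, $\{r\le 2m\}\not\subseteq\mathscr{Z}_1$, because the ball around $p_0\in\AH$ contains points with $r<2m$.

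I would then extract a minimal core inside $\mathscr{Z}_1$. Applying a standard compactness/Zorn argument to the (non-empty, since it contains $\mathscr{Z}_1$) family of closed connected subsets of $\mathscr{Z}_1$ whose removal short-circuits all closed f-trapped surfaces and which satisfy the boundary clause, one obtains a minimal element $\mathscr{Z}_*$; by construction $\mathscr{Z}_*\subseteq\mathscr{Z}_1$ is a core of $\mathscr{T}$. Since $\{r\le 2m\}\not\subseteq\mathscr{Z}_1$ we have $\mathscr{Z}_*\neq\{r\le 2m\}$, and hence, by the uniqueness of the spherically symmetric core, $\mathscr{Z}_*$ cannot be spherically symmetric. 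This establishes the claim.

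The two soft points above are where the real work lies, and I expect the second to be the genuine obstacle. The first is the quantitative ``size lower bound'' for closed f-trapped surfaces in a region of bounded curvature (plus the elementary routing of the tube so that $R$ still reaches $\B$). The second, more delicate, is that the intersection of a decreasing chain of core candidates is again a core candidate --- in particular that connectedness and, above all, the boundary-connectedness clause of the definition survive in the limit --- which is precisely what is needed for the Zorn step to deliver a genuine minimal core $\mathscr{Z}_*\subseteq\mathscr{Z}_1$. (It is Theorem~\ref{th:tiny}, forbidding the re-plugging of a dent removed from $\{r\le 2m\}$ near $p_0$ by a smaller spherically symmetric piece, that ultimately makes such an ``off-centre'' minimal core consistent with minimality at all.)
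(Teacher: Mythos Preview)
Your scaffolding --- build a closed connected ``core candidate'' $\mathscr{Z}_1$ whose removal eliminates all closed f-trapped surfaces yet with $\{r\le 2m\}\not\subseteq\mathscr{Z}_1$, then extract a minimal core $\mathscr{Z}_*\subseteq\mathscr{Z}_1$ and invoke the uniqueness of the spherically symmetric core --- is exactly the shape of the paper's argument. What differs is how $\mathscr{Z}_1$ is produced. The paper does not excise a small ball; it takes a \emph{non-spherically symmetric dynamical horizon} $H$ (whose existence follows from Corollary~\ref{coro}) and invokes the Ashtekar--Galloway theorem for general DHs, the analogue of Theorem~\ref{th:AG}, to conclude that no closed f-trapped surface lies entirely to the past of $H$. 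The core candidate is then (a subset of) the future of $H$; since $H$ lies partly in $\{r<2m\}$ and partly in $\{r>2m\}$, this candidate cannot contain all of $\{r\le 2m\}$. The payoff is that the ``no closed f-trapped surfaces in the complement'' step rests on an established barrier theorem rather than on an ad hoc size estimate.

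That is where your version has a genuine gap. Your claim that a small ball-plus-tube $R$ around $p_0\in\AH$ contains no closed f-trapped surface is not justified by the sketch you give: the appeal via formula~(\ref{main}) to an ``approximately Killing, flat, future timelike'' $\vec\xi$ fails precisely because $R$ straddles $\AH$, where the natural candidate (the Kodama field) degenerates to null, and a normal-coordinate $\partial_t$ carries no sign control on $P^{\mu\nu}(\lie_{\vec\xi}g)_{\mu\nu}$. What you would actually need is a quantitative lower bound on the spacetime diameter of closed f-trapped surfaces in terms of local curvature --- plausible, but neither standard nor supplied here. The paper's DH-based construction sidesteps this entirely. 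Your second soft point (the Zorn extraction, with connectedness and the boundary clause surviving in the limit) is shared by both routes; you are right to flag it, and at least the short-circuiting property does pass to intersections by compactness of the trapped surfaces.
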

The proof of this result is essentially based on a theorem\cite{AG} analogous to Theorem \ref{th:AG} but valid for general DHs, because then one derives that there must be a subset of the future of any dynamical horizon which, when removed from the spacetime, gets rid of all closed f-trapped surfaces.
However, as we do not have an analogous of Theorem \ref{th:tiny} for general DHs, it is unkown whether the core is a proper subset of, or the whole, future of the DH. 
Therefore, only two things may happen. 
\begin{enumerate}
\item For a generic MTT $H$, its causal future $J^+(H)$ is 
a core. This will amount to saying that MTTs are generically boundaries of cores for BHs.
\item Any MTT $H$ other than AH is such that its causal future $J^+(H)$ is 
{\em not} a core ---the core being a proper subset of $J^+(H)$. Hence, the identified core $\mathscr{Z}=\{r\leq 2m\}$ is special in the sense that its boundary $\partial\mathscr{Z}=\AH$ 
is a marginally trapped tube. Thereby, $\AH: \{r=2m(v,r)\}$ would be selected as the unique MTT which is the boundary of a core of the 
f-trapped region $\mathscr{T}$. 
\end{enumerate}

\section*{Acknowledgements}
I thank the organizers of the School for their kind invitation, and especially Prof. Roh-Suan Tung for making my visit possible. Some parts of this contribution are based on a fruitful collaboration with I. Bengtsson. Supported by grants
FIS2010-15492 (MICINN), GIU06/37 (UPV/EHU) and P09-FQM-4496 (J. Andaluc\'{\i}a---FEDER).


\end{document}